\tikzset{%
  thick middle dotted line/.style={
    decoration={show path construction,
      lineto code={
          \draw[#1,line width=1pt, color=red] (\tikzinputsegmentfirst) --($(\tikzinputsegmentfirst)!.3333!(\tikzinputsegmentlast)$);,
          \draw[dotted,#1,color=red] ($(\tikzinputsegmentfirst)!.3333!(\tikzinputsegmentlast)$)--($(\tikzinputsegmentfirst)!.6666!(\tikzinputsegmentlast)$);,
          \draw[#1,line width=1pt, color=red] ($(\tikzinputsegmentfirst)!.6666!(\tikzinputsegmentlast)$)--(\tikzinputsegmentlast);,
      }
    },
    decorate
  },
}
\tikzset{%
  middle dotted line/.style={
    decoration={show path construction,
      lineto code={
          \draw[#1] (\tikzinputsegmentfirst) --($(\tikzinputsegmentfirst)!.3333!(\tikzinputsegmentlast)$);,
          \draw[dotted,#1] ($(\tikzinputsegmentfirst)!.3333!(\tikzinputsegmentlast)$)--($(\tikzinputsegmentfirst)!.6666!(\tikzinputsegmentlast)$);,
          \draw[#1] ($(\tikzinputsegmentfirst)!.6666!(\tikzinputsegmentlast)$)--(\tikzinputsegmentlast);,
      }
    },
    decorate
  },
}
\newcommand{\MAPF}{\textsc{Multiagent Path Finding}\xspace}
\newcommand{\MAPFShort}{\textsc{MAPF}\xspace}
\newcommand{\CMAPF}{\textsc{Colored Multiagent Path Finding}\xspace}
\newcommand{\CMAPFShort}{\textsc{Colored MAPF}\xspace}
\newcommand{\PAMAPF}{\textsc{Partially Anonymous Multiagent Path Finding}}
\newcommand{\N}{\mathbb{N}}
\newcommand{\Nzero}{\N_0} 
\newcommand{\dc}{\textit{dc}}
\newcommand{\Integers}[1]{[#1]}
\newcommand{\CC}[1]{\textsf{#1}} 
\newcommand{\PN}[1]{\textsc{#1}\xspace} 
\DeclareMathOperator{\dist}{dist}
\newtheorem{lemma}{Lemma}
\newtheorem{observation}{Observation}
\newtheorem{theorem}{Theorem}
\newtheorem{corollary}{Corollary}
\newtheorem{claim}{Claim}
\newenvironment{proofclaim}{\noindent{\em Proof of the claim.}}{\qedclaim}
\newcommand{\qedclaim}{\hfill $\diamond$ \medskip}
\newtheorem{recipe}{Recipe}
\begin{document}

\title[Solving Multiagent Path Finding on Highly Centralized Networks]{Solving Multiagent Path Finding on Highly Centralized Networks}

\author{Foivos Fioravantes}
\email{foivos.fioravantes@fit.cvut.cz}
\orcid{0000-0001-8217-030X}
\affiliation{%
  \institution{Department of Theoretical Computer Science, Faculty of Information Technology, Czech Technical University in Prague}
  \city{Prague}
  \country{Czech Republic}
}

\author{Dušan Knop}
\orcid{0000-0003-2588-5709}
\email{dusan.knop@fit.cvut.cz}
\affiliation{%
  \institution{Department of Theoretical Computer Science, Faculty of Information Technology, Czech Technical University in Prague}
  \city{Prague}
  \country{Czech Republic}
}

\author{Jan Matyáš Křišťan}
\orcid{0000-0001-6657-0020}
\email{kristja6@fit.cvut.cz}
\affiliation{%
  \institution{Department of Theoretical Computer Science, Faculty of Information Technology, Czech Technical University in Prague}
  \city{Prague}
  \country{Czech Republic}
}

\author{Nikolaos Melissinos}
\orcid{0000-0002-0864-9803}
\email{}
\affiliation{%
  \institution{Department of Theoretical Computer Science, Faculty of Information Technology, Czech Technical University in Prague}
  \city{Prague}
  \country{Czech Republic}
}
\affiliation{%
  \institution{Computer Science Institute, Faculty of Mathematics and Physics, Charles University}
  \city{Prague}
  \country{Czech Republic}
}

\author{Michal Opler}
\orcid{0000-0002-4389-5807}
\email{michal.opler@fit.cvut.cz}
\affiliation{%
  \institution{Department of Theoretical Computer Science, Faculty of Information Technology, Czech Technical University in Prague}
  \city{Prague}
  \country{Czech Republic}
}

\author{Tung Anh Vu}
\orcid{0000-0002-8902-5196}
\email{tung@iuuk.mff.cuni.cz}
\affiliation{%
  \institution{Computer Science Institute, Faculty of Mathematics and Physics, Charles University}
  \city{Prague}
  \country{Czech Republic}
}

\begin{abstract}
The \textsc{Mutliagent Path Finding} (MAPF) problem consists of identifying the trajectories that a set of agents should follow inside a given network in order to reach their desired destinations as soon as possible, but without colliding with each other. We aim to minimize the maximum time any agent takes to reach their goal, ensuring optimal path length. In this work, we complement a recent thread of results that aim to systematically study the algorithmic behavior of this problem, through the parameterized complexity point of view.

First, we show that MAPF is NP-hard when the given network has a star-like topology (bounded vertex cover number) or is a tree with $11$ leaves. Both of these results fill important gaps in our understanding of the tractability of this problem that were left untreated in the recent work of Fioravantes et al., Exact Algorithms and Lowerbounds for Multiagent Path Finding: Power of Treelike Topology, presented in AAAI'24. Nevertheless, our main contribution is an exact algorithm that scales well as the input grows (FPT) when the topology of the given network is highly centralized (bounded distance to clique). This parameter is significant as it mirrors real-world networks. In such environments, a bunch of central hubs (e.g., processing areas) are connected to only few peripheral nodes.
\end{abstract}



\received{20 February 2007}
\received[revised]{12 March 2009}
\received[accepted]{5 June 2009}

\maketitle

\section{Introduction }
Collision avoidance and optimal navigation are crucial in scenarios that require a group of agents to move through a network in an autonomous way. The real-world applications for such problems are well documented and range from virtual agents moving through a video game level~\cite{SnapeGBLM12}, to actual robots transporting goods around a warehouse~\cite{WurmanDM08,LiTKDKK21}. Agents can be required to move through simpler topologies that span multiple ``floors''~\cite{VelosoBCR15}, or much more complicated topologies on a single ``floor'', e.g., in Airport surface management~\cite{MorrisPLMMKK16}.

Many different models have been proposed to capture and solve the scenarios described above~\cite{SternSFK0WLA0KB19manysurvey}. This is due in part to the ubiquity of these scenarios, but also to the variety of slightly different requirements that each real-world application demands. For example, there are models where a set of positions must be reached, regardless of which agent reaches each one~\cite{AliY24}, while other models require each agent to reach a specific predefined position. Many models consider the centralized setting~\cite{HSLK19,sharon2015conflict}, while others consider the decentralized setting~\cite{SkrynnikAYP24}.

The version we consider, usually appearing under the name of \MAPF (\MAPFShort for short), consists of finding non-colliding paths for a set of agents that move through a given graph; each agent has to move from its source to its target vertex, both of which are predefined and unique, in an optimal way. That is, the agents move in such a way that the \emph{length} of the schedule is minimized. 
To the best of our knowledge, this problem first appeared more than $30$ years ago~\cite{HSS84,Reif79}.

Interestingly, the \MAPFShort{} problem is computationally hard. In particular, it is \NP-complete~\cite{surynek2010optimization}, and the hardness holds even in planar graphs~\cite{Yu16}. Thus, the research community has mainly focused on heuristic algorithms that lead to efficiently solving the problem in practice. Some first positive results were based on the $A^*$ algorithm~\cite{HartNR68}, while, more recently, many researchers have focused on exploiting highly efficient solvers; SAT-solvers~\cite{SurynekSFB17} and SMT-solvers~\cite{Surynek19,Surynek20}. There is also a multitude of greedy heuristic algorithms, such as the one based on \emph{Large Neighborhood Search~\cite{PhanHDK24} }. See~\cite{SternSFK0WLA0KB19manysurvey} for a more comprehensive list of approaches of this flavor.

We follow a slightly different approach to attack this problem. Our goal is to design exact algorithms that are guaranteed to return an optimum solution. In view of the computational hardness of the \MAPFShort problem, this is not possible in the general case. But what if we consider restricted instances? Can we identify characteristics, henceforth called \emph{parameters}, whose exploitation leads to efficient algorithms? This is exactly the question at the core of the parameterized analysis that has recently been initiated for this problem~\cite{EibenGK23,FKKMO24}.

\subsection*{Our Contribution}

The first two parameters that one would hope could be exploited are the number of agents $k$ and the \emph{makespan}~$\ell$, i.e., the total length of the schedule; these are exactly the two natural parameters of the \MAPFShort problem. Unfortunately, the results of~\citet{FKKMO24} indicate that it is highly unlikely for either of these two parameters to be sufficient on their own.
Thus, we turn towards structural parameters of the problem. Both~\cite{EibenGK23} and~\cite{FKKMO24} provide a plethora of infeasibility results for many topologies which we could expect to come up in practical applications. Indeed, the \MAPFShort problem remains \NP-complete even when the input graph is a tree, or is planar and the makespan is at most $3$.

Our first result addresses a gap in the tractability landscape that started being painted in~\cite{FKKMO24} and it concerns graphs of bounded vertex cover number, a parameter whose exploitation usually leads to efficient algorithms (see, e.g.,~\cite{fellows2009complexity}). Recall that the vertex cover of a graph is a set of vertices $S$ such that all the edges of the graph are incident to at least one vertex of $S$. Note that the rest of the vertices of the graph form an independent set (there are no edges between them). This is the key observation, as it means that in any round of the schedule, no more than $|S|$ agents can move. So, intuitively, we could hope for an algorithm that would only focus on monitoring the agents that move through the vertices of $S$ during every turn. Surprisingly, our first result shows that this approach will not work.

\begin{restatable}{theorem}{thmNpVc}\label{thm:vc-hardness:mapf-vc-hard}
\MAPF remains \NP-hard on trees with nine internal vertices and vertex cover number~7.
\end{restatable}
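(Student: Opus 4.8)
The plan is to reduce from a strongly \NP-hard packing problem — say \PN{3-Partition} (one could equally start from unary \PN{Bin Packing}) — whose strong \NP-hardness ensures the numbers involved are polynomially bounded, so that encoding each number by a bundle of agents keeps the reduction polynomial. Given $3m$ integers $a_1,\dots,a_{3m}$ with $\sum_j a_j = mB$ and $B/4 < a_j < B/2$, I would build a tree $T$ from a fixed \emph{core} on nine internal vertices together with many pendant leaves: the core contains a single ``corridor'' (a short leaf-free path) joining a \emph{source side} to a \emph{target side}, plus a couple of extra internal vertices acting as valves that admit agents onto the corridor, and the leaves serve as starting cells, destination cells, or parking cells on the source side. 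Because the corridor carries no leaves, no agent can duck aside while crossing, so agents traverse it in a fixed FIFO order. Each integer $a_j$ becomes a bundle of $a_j$ agents whose sources are source-side leaves and whose targets are target-side leaves, the leaves chosen so that every (tree-unique) source-to-target path runs the full length of the corridor. The target makespan is $\ell := mB + c$ for a fixed overhead $c$ accounting for the corridor length and the valves; note that placing $3m$ bundles into only $m$ crossing ``windows'' of capacity $B$, with each $a_j\in(B/4,B/2)$, forces exactly three per window summing to $B$ — i.e., a valid \PN{3-Partition}.

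For the forward direction, given a partition of the $a_j$ into $m$ triples of sum $B$, I would schedule the triples in $m$ consecutive windows of $B$ steps: during window $i$ the $B$ agents of the $i$-th triple are released through the valves, pipelined across the corridor, and parked on destination leaves, while all later agents wait; routine bookkeeping shows this creates no vertex or edge collision and finishes by time $\ell$. The converse is the crux, and the step I expect to be the main obstacle: from any schedule of makespan at most $\ell$ I must recover a valid \PN{3-Partition}. The heart of it is a rigidity claim — since the bottleneck has throughput one agent per step and permits no reordering, a schedule this short must keep the bottleneck essentially saturated, so the agents that cross it fall into exactly $m$ contiguous blocks of exactly $B$ agents, and the way the bundles are wired (together with the valves) forces each block to respect the bundle structure, hence to consist of exactly three whole bundles summing to $B$. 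Turning this into a proof requires carefully excluding the cleverer schedules — those that split a bundle across a block boundary, interleave two bundles, or exploit the parking leaves to manufacture slack — which I would do with a potential/counting argument tracking at each step how many agents have already cleared the bottleneck, together with a ``no-stall'' argument exploiting that at makespan $\ell$ the bottleneck has no idle time to give away.

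Finally I would verify the two structural parameters. Every non-leaf vertex of $T$ lies in the nine-vertex core, so $T$ has exactly nine internal vertices; and since $T$ is a bipartite tree, by K\"onig's theorem its vertex cover number equals its maximum matching number, so it suffices to exhibit, by direct inspection of the core, a matching of size $7$ and a vertex cover of size $7$. The point needing care throughout — which I would bake into the core from the start rather than patch afterwards — is that attaching the many pendant leaves must not push the vertex cover number above $7$; this is exactly why the corridor vertices (and the outer ends of the valves) must be kept leaf-free, a design choice that conveniently also delivers the FIFO behaviour used in the hardness argument.
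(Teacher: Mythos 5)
Your choice of source problem (\PN{3-Partition}, with the normalization $B/4 < a_j < B/2$ forcing three items per group) matches the paper, and you correctly locate the crux in the ``only if'' direction. But the construction you sketch has a genuine gap at exactly that point, and it is not one that a ``potential/counting argument'' can close: a single leaf-free corridor of throughput one, carrying $mB$ agents in makespan $mB+c$, imposes \emph{no} grouping constraint whatsoever. If nothing ever interrupts the corridor, every instance admits a saturated schedule in which the bundles cross in an arbitrary interleaved order, so the reduction would map every \PN{3-Partition} instance to a yes-instance. You invoke ``$m$ crossing windows of capacity $B$'' as if they were given, but you never specify the mechanism that creates them, and that mechanism is the entire difficulty of the theorem. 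The paper manufactures the windows with a second, rigid gadget: a ``red star'' at $u_7$ with edge sizes $\varphi,\varphi+2,\dots,\varphi+2,\varphi+4$ whose only makespan-$\ell$ schedule is forced, plus a fleet of transit agents $A^X$ that must relay this timing across the tree to block the critical vertex $u_2$ during two specific turns out of every $\varphi+3$; ``bow tie'' gadgets on the intermediate vertices remove all slack so that these transit agents cannot shift their crossing times. Something playing this role has to be designed and proved rigid before your window argument can even start.

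The second missing idea is bundle atomicity. Even granting the windows, your bundles --- $a_j$ agents with sources on source-side leaves and targets on target-side leaves --- can be split across a window boundary at no cost, since each agent crosses independently. You name this failure mode but offer no gadget that excludes it. The paper's exclusion comes from the ``red edge'' structure: the $a_j$ agents of one bundle start \emph{and} end on the leaves of a single internal vertex, cyclically shifted, so that while a bundle is being resolved one of its agents must park on the corridor-adjacent vertex for the bundle's entire duration, and interrupting a bundle provably costs at least two extra turns (this is the content of the paper's Claims on continuous agent resolution). That parking behaviour is also what makes the bundles interact with the clock at all. Your final paragraph on verifying the nine internal vertices and vertex cover $7$ is fine as far as it goes, but it cannot be checked until the core is actually pinned down, and the paper needs all nine internal vertices precisely to host the clock, the relay paths, and the bow ties that your sketch omits.
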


We stress here that the above result is based on the tree having many leaves.
In fact, if a tree has few internal vertices and leaves, the entire instance is bounded in size, and finding a solution cannot be intractable.
Therefore, we ask the following question: What if the tree has only a few leaves?
A generalization of this is captured by the parameter known as the max leaf number: the maximum number of leaves a spanning tree of a graph can have.
Note that the max leaf number is a parameter that limits the structure of the input in a rather drastic way---the graph is essentially a subdivided tree with few edges (consequently, few internal vertices and leaves).
Graphs admitting this structure appear quite often as transportation networks, in particular, underground trains~\cite{Eppstein15}.
Indeed, many problems are tractable with respect to the max leaf number (see, e.g., \cite{fellows2009complexity}); a notable exception is the \textsc{Graph Motif} problem~\cite{BonnetS17}.
Here, we obtain yet another intractability result for this parameter.

\begin{restatable}{theorem}{thmNpTrees}\label{thm:pancake}
 \MAPF remains \NP-hard on trees with $11$ leaves.
\end{restatable}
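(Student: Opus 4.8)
The plan is to reduce from a sorting-by-prefix-reversals problem --- concretely, from \textsc{Pancake Flipping}, which is \NP-hard already in the decision version ``can the given permutation be sorted using at most $k$ prefix reversals?''. A tree with a bounded number of leaves has bounded branching but may be arbitrarily long, so the whole instance has to be encoded in the length-and-timing dimension rather than in the branching structure. The backbone of the construction is a long path $Q$ (the ``conveyor'') together with a constant number of long dead-end corridors attached at a constant number of branch vertices. The point of a dead-end corridor is that it behaves like a stack: if a contiguous block of agents walks into it and then walks back out, the block re-emerges reversed. Walking the block of the first $j$ conveyor agents into such a corridor and back thus realizes precisely a prefix reversal of length $j$, which is exactly the move of \textsc{Pancake Flipping}.

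First I would place the ``pancake agents'': one per element of the input permutation, sitting on $Q$ in the input order and with targets on $Q$ in sorted order; their number and the length of $Q$ are linear in the instance size, so this part by itself is already a tree with few leaves. Next I would add a constant number of auxiliary ``timer'' agents and the extra corridors they use; their source--target distances are tuned so that, within the makespan budget, each timer agent is forced to occupy or pass through the branch vertices during prescribed time windows. This blocks the conveyor at the right moments, forces the pancake agents to perform their reversals one at a time in a canonical, non-overlapping fashion, and makes the schedule proceed in equal-length ``rounds'' with at most one reversal per round --- so that the makespan counts the \emph{number} of reversals rather than their total length. Finally I would set the target makespan $\ell$ equal to the cost of $k$ such canonical rounds plus the fixed overhead contributed by the timer agents.

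For correctness, the forward direction is routine: a sorting sequence of at most $k$ prefix reversals is turned move-by-move into a schedule in which the pancake agents make the corresponding excursions into the corridor while the timer agents shuttle alongside, and a direct accounting shows that the makespan does not exceed $\ell$. The reverse direction is the substance of the argument: from a schedule of makespan at most $\ell$ one must show that the timer agents indeed enforce the canonical behaviour, that every visit of a block of agents to the corridor is a genuine prefix reversal (and not, say, a partial stir or a use of the corridor as scratch space), and that at most $k$ such visits fit into $\ell$; reading the visits off in order then yields a sorting sequence of length at most $k$.

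The main difficulty I anticipate is controlling this reverse direction. Because \MAPFShort lets every agent move in every round, there is ample room for ``clever'' schedules that pipeline several reversals, interleave the movements of timer and pancake agents, or exploit the corridors in unintended ways; ruling all of this out is exactly what the timer agents together with the exact, slack-free choice of $\ell$ must accomplish, and closing the arithmetic with no leftover budget is the delicate point. A second, orthogonal difficulty is the bookkeeping on leaves: every dead-end corridor one introduces is a new leaf, so all the auxiliary machinery has to be shared across the whole construction and packed into a fixed, small number of corridors; driving the resulting leaf count down to exactly $11$ is where the specific constant in the statement comes from.
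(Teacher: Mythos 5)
Your overall architecture matches the paper's: reduce from \PN{Pancake Flipping}, encode the permutation as agents on a path with sorted targets, and use long auxiliary paths whose agents have source--target distance exactly equal to the makespan so that their motion is forced and they block the branch vertices in prescribed time windows, partitioning the schedule into fixed-length rounds of one prefix reversal each. However, your central gadget is broken. A contiguous block of agents that walks single-file into a dead-end corridor and then walks back out the same mouth does \emph{not} re-emerge reversed --- it re-emerges in its original order. Concretely, if $a_1,\dots,a_j$ enter with $a_1$ first, then $a_1$ ends up deepest in the corridor and $a_j$ nearest the mouth; on the way out $a_j$ must leave first and, since agents on a path can never pass one another, it necessarily ends up furthest from the branch vertex among the block's final positions. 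Each single-file transfer between two paths meeting at a hub reverses the order once, so ``in and out of one corridor'' is two reversals, i.e.\ the identity. This is precisely why the paper's construction uses \emph{two} long working paths ($B$ and $C$) joined to the $A$-path at a central vertex $v^\star$: the block performs three transfers, $A\to B\to C\to A$, an odd number, netting exactly one prefix reversal while returning to the $A$-path. With a single corridor no odd-length round trip back to the conveyor exists, so your reduction cannot realize a reversal at all.

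A secondary, related issue: because the correct gadget needs two extra long paths at the hub plus the four blocking/timer paths (and the timers must block $v^A_0$, $v^B_0$, $v^C_0$ in three \emph{disjoint} phases --- push, reverse, pop --- per round, not just ``block the conveyor at the right moments''), the leaf count and the phase arithmetic are tightly coupled; your sketch defers exactly this part (``closing the arithmetic with no leftover budget''), which is where the only-if direction actually lives. The paper resolves it by making each phase last $n+2$ steps and proving (its Claims~\ref{claim:auxiliary-properties} and~\ref{claim:primary-agents}) that primary agents can never leave $G'$, can never sit on the $B$-path during a pop phase or the $C$-path during a push phase, and hence must perform exactly one clean prefix reversal per round. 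You would need to supply an analogous argument, and it cannot be built on the one-corridor stack.
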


Additionally, minor modifications to the previous theorem yield a hardness result in a more general setting involving semi-anonymous agents.
The input of the \CMAPF (\CMAPFShort) problem~\cite{BartakM21,BartakIS21} consists of a graph $G$, and multiple \emph{types} of agents. All agents have their predefined starting positions, and each type of agents has a specific subset of the vertices of $G$ which are the ending positions of that type; we are satisfied as long as every agent of a type ends up on \emph{any} ending position of that type.

\begin{restatable}{theorem}{thmNpColored}
\CMAPFShort remains \NP-hard even if there are only~$6$ groups of agents and $G$ is a tree with $11$~leaves.
\end{restatable}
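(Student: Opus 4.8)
The plan is to reuse, almost verbatim, the reduction behind \Cref{thm:pancake} and re-interpret its output as a \CMAPFShort instance. Recall that in the proof of \Cref{thm:pancake} we construct a tree $T$ with $11$ leaves together with a family of agents, each with a unique source and target, so that the resulting \MAPFShort instance has a schedule of makespan at most $\ell$ exactly when the source \NP-hard instance is a yes-instance. The first step is to observe that the agents produced by that construction come in only a constant number of structural ``roles'': fillers/blockers that sit on the long subdivided pendant paths and merely have to shuffle, the agents that encode the numbers of the source instance, and a few distinguished agents used to synchronise the gadgets. Counting these roles carefully, one checks that the agents partition into at most six classes. For each class $C$ we declare a type of agent whose set of admissible ending positions is precisely $\{\, t_a : a \in C \,\}$, the multiset of targets of the agents of $C$ in the original instance. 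The only ``minor modification'' is to massage the original source/target pairs so that each intended role becomes internally homogeneous (and no two roles accidentally merge), which costs at most a constant number of extra pendant paths; with a slightly more careful accounting this is absorbed into the bound of $11$ leaves and $6$ groups.

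The forward direction is immediate. Any schedule that solves the \MAPFShort instance moves every agent $a$ to its own target $t_a$, and by construction $t_a$ lies in the ending-position set of the type of $a$; hence the very same schedule, with the very same makespan, solves the \CMAPFShort instance.

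For the backward direction, suppose the \CMAPFShort instance admits a solution of makespan at most $\ell$. Within every type the agents are interchangeable and, by construction, the number of agents of a type equals the size of that type's ending-position set, so after a suitable bijective relabelling of agents inside each type the coloured solution becomes an assignment of the original target vertices to the original agents together with non-colliding, optimal-length paths realising it, i.e.\ a feasible schedule for the \MAPFShort instance. It remains to show that this relabelled assignment is ``good enough'', namely that meeting the bound $\ell$ forces the canonical behaviour that encodes a solution of the source instance. Here one re-runs the soundness analysis of \Cref{thm:pancake}, but tracking only \emph{how many} agents of each type occupy each bottleneck vertex at each time step, rather than \emph{which} agents; since that analysis is essentially a counting-and-timing argument along the subdivided paths, it transfers to the anonymised setting, and we conclude that the source instance is a yes-instance.

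The main obstacle is exactly the last point: ruling out that anonymisation within a type hands the agents enough extra freedom to beat the makespan bound. If the gadgets of \Cref{thm:pancake} are already ``type-rigid'' in the sense that no two agents of the same type can profitably swap roles, the argument above is routine. Otherwise, the remedy is to pin down, for each type, one distinguished agent by a small rigid sub-gadget (a short pendant path whose unique admissible occupant is that agent, possibly under a fresh auxiliary colour), thereby breaking the residual symmetry; one then shows that the remaining agents of the type are forced into their intended trajectories by the same bottleneck argument. The delicate part of the write-up is verifying that these auxiliary gadgets can be added while keeping the leaf count at $11$ and the number of groups at $6$, which is why the construction must be tuned rather than quoted verbatim.
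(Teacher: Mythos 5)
Your overall plan --- keep the graph and agents from the reduction of \Cref{thm:pancake}, partition the agents into six groups, and argue that the four auxiliary groups are rigid because within each group there is a unique way to match starting to target positions within distance $L$ --- matches the paper for the auxiliary agents. The gap is in how you handle the \emph{primary} agents. You keep \PN{Pancake Flipping} as the source problem and hope to recover the per-agent target assignment by ``breaking residual symmetry.'' This cannot work: in the \PN{Pancake Flipping} reduction the primary agents start on $\{v^A_{\pi_1},\dots,v^A_{\pi_n}\}=\{v^A_1,\dots,v^A_n\}$ and their targets are exactly $\{v^A_1,\dots,v^A_n\}$ as a set, so once the primary agents are merged into one (or any constant number of) anonymous group(s), the set of their targets coincides with the set of their sources and the coloured instance is satisfiable essentially for free, independently of whether $\pi$ can be sorted by $k$ prefix reversals. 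No ``counting-and-timing'' argument along the bottlenecks can recover which individual agent must end where, because the coloured problem genuinely does not ask for that information. Your proposed remedy --- pinning down one distinguished agent per type with a small rigid sub-gadget --- only breaks the symmetry for a constant number of agents, whereas the \PN{Pancake Flipping} reduction needs all $n$ primary agents to be pairwise distinguishable; pinning all of them would require $\Omega(n)$ extra colours or leaves, destroying both stated bounds.

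The paper resolves this by changing the source problem rather than the gadgets: it reduces from \PN{Binary String Prefix Reversal Distance} (transform a binary string $\alpha$ into $\beta$ by $k$ prefix reversals), which is \NP-hard over a two-letter alphabet. The primary agents are split into exactly two groups according to whether their position carries symbol $0$ or $1$, so the within-group interchangeability of agents is not an obstacle to be fought but precisely the semantics of the source problem. Together with the four rigid auxiliary groups this gives the six groups, the graph is unchanged (hence still a tree with $11$ leaves), and the soundness analysis of \Cref{thm:pancake} carries over verbatim because the auxiliary agents' movement is still fully predetermined. If you want to salvage your write-up, this change of source problem is the missing idea; without it the backward direction of your reduction is false, not merely unproven.
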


All the results we have so far seem to imply that having a sparse topology is not helpful. So, what about dense topologies? Our final result, which is the main algorithmic contribution of this paper, is an efficient algorithm that solves \MAPFShort on graphs of bounded \emph{distance to clique}. A graph~$G$ has distance to clique $d$ if there is a set of $d$ vertices $S$, whose removal renders the graph complete. Such topologies can be expected to appear in real-life applications where agents are required to move through a very dense network (the clique) which is connected to few docking stations (the vertices of $S$). We stress here that this is, to the best of our knowledge, the only positive result that exists that does not rely on the number of agents or makespan as parameters. In other words, the algorithm we construct remains useful even in instances where we have a large enough number of agents and/or long enough schedules that would make any previous exact algorithm infeasible in practice.

\begin{restatable}{theorem}{thmFptDC}\label{thm:fpt-distance-to-clique}
\MAPF is in FPT parameterized by the distance to clique.
\end{restatable}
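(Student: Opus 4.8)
The plan is to prove that the optimal makespan $\ell^*$ of any instance is bounded by a function $f(d)$ of the parameter, and then, for each candidate value $\ell\le f(d)$ in turn, to decide whether a schedule of makespan $\ell$ exists by brute-forcing the (bounded-size) behavior of the agents near the modulator while exploiting the symmetry of the clique for everything else. Concretely, I would first fix a modulator $S$ with $|S|=d$, that is, a set whose deletion leaves a clique $C:=V(G)\setminus S$; equivalently $S$ is a minimum vertex cover of the complement $\overline{G}$, hence computable in FPT time (and the parameter is by definition the size of a smallest such set). Components of $G$ disjoint from $C$ have at most $d$ vertices, so we may assume $G$ is connected, in which case it has diameter $O(d)$.

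The structural heart is the bound $\ell^*\le f(d)$. If $|C|$ is bounded in terms of $d$ then so is the entire instance, and the bound (and indeed the whole problem) is then handled by a direct search of the configuration graph; so assume $C$ is large. The lower bound is the diameter. For the upper bound, observe that at most $2d$ agents are \emph{special}---those whose source or whose target lies in $S$ (the sources are pairwise distinct, and so are the targets). One routes the special agents through $S$ in a queued fashion: since each vertex of $S$ lies within distance $O(d)$ of $C$ in $G$ and at most $O(d)$ agents traverse $S$, this costs $O(d^2)$ rounds. Between consecutive events near $S$, the \emph{ordinary} agents---those with both endpoints in $C$---are rearranged inside the clique in $O(1)$ rounds each, using its dense structure together with the rotation moves the model permits, which realize any permutation of the occupied cells of $C$ (relocating a free cell into $C$ if one is needed). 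Interleaving yields a schedule of makespan $f(d)=O(d^2)$.

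With $\ell^*$ bounded, for each $\ell\le f(d)$ I would decide achievability of makespan $\ell$ by exhaustively guessing: (i) for each of the $\le 2d$ special agents a trajectory \emph{profile}, recording for every one of the $\ell+1$ rounds the vertex of $S$ it occupies or the symbol ``in $C$''; and (ii) a \emph{parker pattern}, recording for each round and each $s\in S$ whether $s$ hosts an anonymous ordinary agent using $S$ as temporary scratch space, together with the corresponding boundary moves. This yields $(d+1)^{O(d\cdot f(d))}\cdot 2^{O(d\cdot f(d))}$ guesses, a function of $d$ alone. A guess passes the test if and only if: every move it prescribes inside $S$ or across the $C$--$S$ boundary is legal (the required edges exist, there are no collisions and no swaps, and every used $s\in S$ satisfies $N(s)\cap C\neq\emptyset$); the capacity bound ``the number of agents forced to be in $C$ at round $\tau$'' $\le|C|$ holds for every $\tau$; and the ordinary agents can be routed inside $C$ from their sources to their targets consistently with the reservations imposed by the special agents and by the parkers. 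Because $C$ is a clique, this last point never mentions individual vertices of $C$: it reduces to the capacity bound, to a free cell of $C$ or a scheduled parker being available whenever a non-trivial permutation of $C$ still has to be realized, and to there being enough ordinary agents to fill the parker roles---all verifiable in polynomial time, aided by a small flow computation on the $O(d\cdot f(d))$-sized time-expansion of $S$ and its $C$-neighborhood that decides which $C$-vertices serve the boundary moves. We return the least feasible $\ell$ and reconstruct a schedule from an accepted guess; over all $\ell$ this runs in FPT time.

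The step I expect to be the main obstacle is making this last part rigorous, namely proving a \emph{normalization lemma}: every valid schedule can be rewritten, without increasing its makespan, so that it conforms to the guessed template---only $O(d)$ agents ever enter $S$, each in a simple way; the scratch space used inside $C$ is at most one free cell (or one parker per vertex of $S$) at any time; and the exact $C$-vertex occupied by a special agent while it is ``in $C$'' can always be chosen compatibly---and, conversely, that any guess passing the test is realizable. Controlling the parkers, and certifying that routability inside the clique really is captured by the simple capacity-and-slack conditions (so that the unbounded part $C$ need never be inspected vertex by vertex), is where the real work lies; the remainder is bookkeeping over the $f(d)$-bounded family of guesses.
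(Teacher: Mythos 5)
Your overall strategy --- first bound the optimal makespan by a function of the parameter, then search a bounded space of ``interface behaviours'' at the modulator while treating the clique anonymously --- shares its skeleton with the paper's proof, but both load-bearing steps are genuine gaps in your write-up rather than details. First, the makespan bound. You assert $\ell^*\le O(d^2)$ via ``queue the $\le 2d$ special agents through $S$ and rearrange the clique in $O(1)$ rounds in between,'' but this presupposes that such a structured schedule exists whenever \emph{any} schedule exists, which is exactly what must be proved; it is far from clear in congested instances (e.g.\ $|A|$ close to $|V|$, the only free vertices buried inside $S$, or targets deep in $S$ behind agents that must temporarily evacuate into the clique and return). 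The paper does not prove a polynomial bound at all: it proves $\ell^*\le 3(2d+2)^{d}+2$ by passing to a partially anonymous relaxation and showing that an optimal schedule never needs to repeat a ``placement type'' (the pattern of which of the $O(d)$ tracked agents occupy which modulator vertices) more than three times --- a compression argument, not a constructive routing argument. Any computable bound suffices for FPT, but at present you have neither a correct proof of $O(d^2)$ nor a fallback.

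Second, and more seriously, you explicitly defer the normalization lemma (every valid schedule can be rewritten to conform to one of your templates) and the realizability of an accepted guess, and that is where essentially all of the paper's work lies (its Steps 2--5). Two concrete obstacles: (i) clique vertices are not interchangeable --- they split into up to $2^{d}$ twin classes according to their neighbourhoods in $S$, and the paper's reduced instance retains $3|A'|$ representative vertices \emph{per class} together with a set of representative agents closed under ``agent types'' precisely so that boundary moves and individual endpoints can be consistently re-routed; a pure capacity condition plus a flow on $S$ and its $C$-neighbourhood does not obviously capture this in combination with the fixed targets of individual ordinary agents. (ii) The no-swap constraint makes the final lifting from an anonymous clique schedule to named agents delicate: the paper needs Hall's theorem plus an explicit matching-repair and ``good ordering'' argument to avoid creating swaps in the first and last turns, whereas your proposal only checks swaps among the guessed moves inside and across the boundary of $S$. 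As it stands, the proposal correctly identifies the hard part but does not overcome it.
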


\section{Preliminaries}
We follow standard graph-theoretic notation~\cite{D12}.
For~\(a, b \in \mathbb{N}\), let~\([a, b] = \{c \in \mathbb{Z} \mid a \le c \le b\}\) and~\([a] = [1, a]\).
Let $G=(V,E)$ be a graph.
For a subset of vertices~\(U \subseteq V\) and a vertex $u \in V$ we denote by $N_{U}(u)$ the \textit{neighborhood of $u$ in $U$}, that is, the set of vertices of~\(U\) that are adjacent to~\(u\).
By~\(N_U[u]\) we mean~\(N_U(u) \cup \{u\}\).
As a short-hand, we write~\(N_G = N_{V(G)}\).
When clear from the context, the subscripts will be omitted.
Finally, for any $S\subseteq V$, let $G[S]$ be the subgraph of $G$ that is induced by~$S$, i.e., the graph that remains after deleting the vertices of $V\setminus S$ from $G$ (along with their incident edges).
For an~\(m\)\=/dimensional integer vector~\(\vec{u} \in \mathbb{N}^m\), we write~\(\|\vec{u}\|_p = \big(\sum_{i = 1}^m u_i^p\big)^{1/p}\), i.e.,~its~\(\ell_p\)\=/norm.
By default, \(\|\cdot\|\)~denotes the~\(\ell_1\)\=/norm.
For a function~\(f\colon A \to B\) and~\(C \subseteq A\) we write~\(f(C) = \{f(c) \mid c \in C\}\).

Formally, the input of the \MAPF (\MAPFShort) problem consists of a graph $G=(V,E)$, a set of agents $A$, two functions $s_0\colon A \rightarrow V$ and $t\colon A \rightarrow V$ and a positive integer~$\ell$. For any pair $a , b \in A$ where $a \neq b$, we have that $s_0(a) \neq s_0(b)$ and $t(a) \neq t(b)$.
Initially, each agent $a \in A$ is placed on the vertex $s_0(a)$. At specific times, called \emph{turns} or \emph{rounds} (we will use both interchangeably), the agents are allowed to move to a neighboring vertex, but are not obliged to do so. The agents can make at most one move per turn and each vertex can host at most one agent at a given turn. The position of the agents in the end of turn $i$ (after the agents have moved) is given by an injective function $s_i\colon A \rightarrow V$.

We say that a schedule $s_1,\ldots,s_{m}$ is a \emph{feasible solution} of an instance $\mathcal{I}=\langle G, A, s_0, t, \ell\rangle$ of \MAPFShort if:
\begin{enumerate}
 \item $s_{i}(a) \in N[s_{i - 1}(a)]$ for all $a \in A$ and every $i\in[m]$,
 \item $s_i(a) \neq s_i(b)$ for all $i \in [m]$ and $a \neq b \in A$, and
 \item $s_m = t$.
\end{enumerate}

A feasible solution $s_1,\ldots,s_{m}$ has \emph{makespan}~$m$. A feasible solution of \textit{minimum} makespan will be called \textit{optimum}. Our goal is to decide if there exists a feasible solution of makespan~$m \leq \ell$.

It is worth mentioning here that there are two principal variations of \MAPFShort. The first, more ``generous'' version, allows the agents to share edges if the wish to do so. That is, in this version, two agents $a$ and $b$ such that $s_i(a)=u,s_i(b)=v$, are allowed to move so that $s_{i+1}(a)=v,s_{i+1}(b)=u$ if $u$ and $v$ are connected through a unique edge. The second, more ``restrictive'', version does not allow this behavior. In this work we focus on the restrictive version. 

\paragraph*{Parameterized Complexity.}
Parameterized complexity is a field in algorithm design that takes into consideration additional measures to determine the time complexity. In some sense, it is about the multidimensional analysis of the time complexity of an algorithm, each dimension being dedicated to its own parameter.
Formally, a parameterized problem is a set of instances $(x,k) \in \Sigma^* \times \mathbb{N}$; $k$ is referred to as the \textit{parameter}.
The goal in this paradigm is to design \emph{Fixed-Parameter Tractable} (FPT) algorithms, i.e., an algorithm that solves the problem $f(k)|x|^{O(1)}$ time for any arbitrary computable function $f\colon \mathbb{N}\to\mathbb{N}$. We say that a problem \textit{is in FPT} if it admits an FPT algorithm. We refer the interested reader to classical monographs~\cite{CyganFKLMPPS15,downey2012parameterized} for a more comprehensive introduction to this topic.

\paragraph*{Structural Parameters.}
There are three structural parameters that we consider in this work. Let $G=(V,E)$ be a graph. 

First we have the \emph{vertex cover number} of~$G$. A set $S\subseteq V$ is called a \emph{vertex cover} of~$G$ if for every edge $uv\in E$ we have either $u\in S$ or $v\in S$ (or both). The vertex cover number of $G$ is the size of a minimum vertex cover of~$G$. This is a parameter that can be computed in FPT time, e.g., by~\cite{ChenKX06}.

Next, we will consider the \emph{distance to clique} of $G$. This is defined as the size of a minimum set $S\subseteq V$ such that $G[V\setminus S]$ is a clique (a graph where each vertex is adjacent to all other vertices of the graph); the size of such a set is denoted by $\dc(G)$. The distance to clique of $G$ can be computed in FPT time by computing the vertex cover number of the \textit{complement of} $G$. Recall that the complement of a graph $G$ is the graph $G'=(V,E')$, where $uv\in E'$ if and only if $uv\not\in E$.

Finally, the \emph{max leaf number} of $G$ is defined as the maximum number of leaves (i.e., vertices with only one neighbor) in a \textit{spanning tree} of $G$.
Recall that $T$ is a spanning tree of $G$ if it is a subgraph of $G$ with the same vertex set same as $G$ and~$T$ is a tree (i.e., it is conncted and contains no cycles).
The max leaf number of a given graph can also be computed in FPT time~\cite{FellowsL92}.

\section{Trees With Few Internal Vertices or Leaves}

This section is dedicated to the main infeasibility results we provide. 

\subsection{Trees With Bounded Vertex Cover}

This first subsection is dedicated to the proof of Theorem~\ref{thm:vc-hardness:mapf-vc-hard}. This proof is done by a 
reduction from the \PN{3-Partition} problem, which is known to be \CC{NP}\=/hard~\cite{3-partition-np-hard}:
as input we receive~\(3n\) integers~\(\vec{\beta} = (\beta_1, \ldots, \beta_{3n})\) where~\({\|\vec{\beta}\| = n\varphi}\) for some~\(\varphi \in \N\).
For a subset of indices~\(I \subseteq \Integers{3n}\) we denote~\(\beta(I) = \sum_{i \in I} \beta_i\).
The goal is to partition~\(\Integers{3n}\) into~\(n\) disjoint sets~\(\sigma_1, \ldots, \sigma_n\) of size~3 each so that~\(\beta(\sigma_i) = \varphi\) for every~\({i \in \Integers{n}}\).
Instead of saying that~\(\vec{\beta}\) is a yes-instance we will sometimes say that~\(\vec\beta\) \emph{has a 3\=/partition}.

By adding~\(2\varphi\) to every~\(\beta_i\) we can assume that~\(\varphi/4 < \beta_i < \varphi/2\) for every~\(i \in \Integers{3n}\).
We also set each~\(\beta_i\) to~\(6\beta_i\).
The reason to do this is the following claim.

\begin{claim}
  \label{claim:vc-hardness:3-partition-multiple}
  Suppose each~\(\beta_i\) is a multiple of~6, and that there exist~\(\beta_j, \beta_k, \beta_\ell \in \vec{\beta}\) with~\(i, j, k\) distinct that sum up to~\(\varphi\).
  Then no three distinct elements of~\(\vec{\beta}\) can sum up to a value in~\(\Integers{\varphi - 5, \varphi - 1} \cup \Integers{\varphi + 1, \varphi + 5}\).
\end{claim}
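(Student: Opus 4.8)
The plan is a short congruence argument, and I do not expect any genuine obstacle. Everything hinges on the observation that, under the stated hypotheses, both the target $\varphi$ and every sum of three entries of $\vec{\beta}$ are multiples of $6$, whereas the two forbidden intervals contain no multiple of $6$ at all.

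First I would pin down $\varphi \bmod 6$. By hypothesis there is a triple $\beta_j, \beta_k, \beta_\ell$ summing to $\varphi$; since each $\beta_i$ is a multiple of $6$, so is this sum, hence $6 \mid \varphi$. (This is also immediate from the construction itself: the current target equals $42$ times the original value of $\varphi$, having been obtained by adding $2\varphi$ to every integer and then scaling everything by $6$.)

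Next, for any three entries $\beta_a, \beta_b, \beta_c$ of $\vec{\beta}$, each summand is $\equiv 0 \pmod 6$, so $\beta_a + \beta_b + \beta_c \equiv 0 \pmod 6$. On the other hand, since $6 \mid \varphi$, the nearest multiples of $6$ on either side of $\varphi$ are $\varphi - 6$ and $\varphi + 6$, so every element of $[\varphi - 5, \varphi - 1] \cup [\varphi + 1, \varphi + 5]$ is of the form $\varphi + r$ with $r \in \{-5, \ldots, -1\} \cup \{1, \ldots, 5\}$ and is therefore congruent to $r \not\equiv 0 \pmod 6$. Combining the two facts, no sum of three entries of $\vec{\beta}$---in particular no sum of three distinct entries---can equal a value in that set, which is exactly the statement. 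The only subtlety worth flagging is that the triple hypothesis is used purely to fix $\varphi \equiv 0 \pmod 6$: distinctness of the indices plays no role in this particular argument, although it does matter elsewhere in the reduction.
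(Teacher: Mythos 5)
Your proof is correct and uses essentially the same argument as the paper: the triple hypothesis forces $6 \mid \varphi$, every three-element sum is a multiple of $6$, and the forbidden intervals contain no multiple of $6$. The paper's own proof is just a more terse version of exactly this congruence observation.
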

\begin{proofclaim}
  If the~\(\beta_i\)'s are multiples of~6, then the same holds for~\(\beta_j + \beta_k + \beta_\ell = \varphi\).
  Due to the~\(\beta_i\)'s being multiples of~6, the nearest possible value to~\(\varphi\) to which any three integers of~\(\vec\beta\) could sum up to is either~\(\varphi - 6\) or~\(\varphi + 6\).
\end{proofclaim}

We show a reduction from a \PN{3-Partition} instance~$\vec{\beta}$ to a \MAPFShort instance~\(\langle G, A, s_0, t, \ell\rangle\).
Before we give an intuition behind the reduction, as well as the formal proof, we first describe the construction itself.
Then we show that~\(\vec\beta\) has a 3\=/partition if and only if there exists a certain schedule with makespan~\(\ell = n\varphi + 3n\) (Lemma~\ref{lemma:vc-hardness:forward}).

In the reduction, we start with a tree~\(T\) (see Fig.~\ref{fig:vc-hardness:initial-graph}) with nine vertices~\(U = \{u_1, u_2, u_3, u_4, u_5, u_6, u_7, u_8, u_9\}\) and eight edges~\(u_1u_2, u_2u_3, u_3u_4, u_4u_5, u_2u_6, u_2u_8, u_2u_9, u_4u_7\), an empty agent set, empty functions \(s_0\) and~\(t\), and~\(\ell = n\varphi + 3n\).
We will need a few gadgets in the reduction that we describe now.
All of the gadgets are formed by a set of leaves with some agents with specified movements that we attach to vertices in the initial tree~$T$.

\begin{figure}[!t]
\centering

\begin{tikzpicture}[scale=0.6, inner sep=0.7mm]
    \node[draw, circle, line width=1pt, fill=white](u8) at (0,0)[label=above: { $u_8$}] {};
    \node[draw, circle, line width=1pt, fill=white](u1) at (0,3)[label=above: { $u_1$}] {};
    \node[draw, circle, line width=1pt, fill=white](u2) at (3,3)[label=below: { $u_2$}] {};
    \node[draw, circle, line width=1pt, fill=white](u3) at (6,3)[label=above: { $u_3$}] {};
    \node[draw, circle, line width=1pt, fill=white](u4) at (9,3)[label=above: { $u_4$}] {};
    \node[draw, circle, line width=1pt, fill=white](u5) at (12,3)[label=above: { $u_5$}] {};
    \node[draw, circle, line width=1pt, fill=white](u7) at (9,0)[label=right: { $u_7$}] {};
    \node[draw, circle, line width=1pt, fill=white](u6) at (0,6)[label=below: { $u_6$}] {};
    \node[draw, circle, line width=1pt, fill=white](u9) at (6,6)[label=below: { $u_9$}] {};
    
    \draw[-, line width=1pt]  (u8) -- (u2);
    \draw[-, line width=1pt]  (u1) -- (u2);
    \draw[-, line width=1pt]  (u2) -- (u3);
    \draw[-, line width=1pt]  (u3) -- (u4);
    \draw[-, line width=1pt]  (u4) -- (u5);
    \draw[-, line width=1pt]  (u6) -- (u2);
    \draw[-, line width=1pt]  (u9) -- (u2);
    \draw[-, line width=1pt]  (u7) -- (u4);

\end{tikzpicture}
  \caption{The initial tree~\(T\).}
  \label{fig:vc-hardness:initial-graph}
\end{figure}

\paragraph{Gadgets}
The first gadget is \emph{attaching a red edge of size~\(b\)} to a vertex~\(u \in U\) for any~\(b \in \N\).
See Fig.~\ref{fig:vc-hardness:red-edge} for an illustration.
Whenever we attach a red edge of size~\(b\), we add~\(b\)~new vertices~\(v_1, \ldots, v_b\) to~\(V\), we connect each vertex~\(v_i\) to~\(u \in U\) for~\(i \in \Integers{b}\), we add~\(b\)~new agents~\(a_1, \ldots, a_b\), for each~\(i \in [b]\) we set~\(s_0(a_i) = v_i\), for each~\(j \in [b - 1]\) we set~\(t(a_j) = v_{j + 1}\) and~\(t(a_b) = v_1\).
Throughout the reduction, we will attach multiple red edges to vertices of~\(T\) so we stress that the added vertices~\(v_1, \ldots, v_b\) and agents~\(a_1, \ldots, a_b\) are unique to each red edge.
Slightly overloading notation, if~\(D\) is a red edge, then~\(A(D)\) denotes the set of agents introduced by the red edge~\(D\).

\begin{figure}
  \centering
  \begin{tikzpicture}[scale=0.6, inner sep=0.7mm]
    \node[draw, circle, line width=1pt, fill=white](u) at (0,3)[label=above: {$u$}] {};
    \node[](k) at (0,0)[] {\textcolor{red}{$b$}};
    
    \draw[-, line width=1.5pt, color=red]  (u) -- (k);

    \draw[->, line width=1pt] (1,1.5)--(3,1.5);

    \node[draw, circle, line width=1pt, fill=white](u') at (6,3)[label=above: {$u$}] {};
    \node[draw, circle, line width=1pt, fill=white](v1) at (3,0)[] {};
    \node[draw, circle, line width=1pt, fill=white](v2) at (4.5,0)[] {};
    \node[draw, circle, line width=1pt, fill=white](vk) at (9,0)[] {};

    \draw[-, line width=1pt]  (u') -- (v1);
    \draw[-, line width=1pt]  (u') -- (v2);
    \draw[-, line width=1pt]  (u') -- (vk);
    \path (v2) -- (vk) node [black, font=\Large, midway, sloped] {$\dots$};

    \draw[->, line width=1pt, color=red] (v1) to[bend left=45] (v2);
    \draw[->, line width=1pt, color=red] (v2) to[bend left=45] (6,0);
    \draw[->, line width=1pt, color=red] (7.5,0) to[bend left=45] (vk);
    \draw[->, line width=1pt, color=red] (vk) to[bend left=35] (v1);

    \draw [decorate,decoration={brace, amplitude=10pt}] (9,-1) -- (3,-1) node [black,midway,yshift=-0.7cm] {$b$ vertices};
    
  \end{tikzpicture}
  \caption{%
    Attaching a red edge of size~\(b\) to vertex~\(u\).
    The tails and heads of the red arrows indicate the starting and target vertices, respectively, of each agent.
    Each one of the $k$ leaves is the starting and target position of exactly one agent.
  }
  \label{fig:vc-hardness:red-edge}
\end{figure}

Building on red edges, the next gadget we introduce is \emph{attaching a red star with edge sizes~\(\vec{z} \in \N^b\)}, where~\(b \in \N\), to a vertex~\(u \in U\).
See Fig.~\ref{fig:vc-hardness:red-star} for an illustration.
Whenever we attach a red star~\(R\) with edge sizes~\(\vec{z} \in \N^b\) to a vertex~\(u \in U\), for each~\(i \in [b]\) we attach a red edge of size~\(z_i\) to~\(u\).
Additionally, let $a$ be the first agent introduced by adding a red edge of size~\(z_1\), and let $v_1$ be the initial vertex of $a$. We modify this by setting $s_0(a)=u$ and $t(a)=v_1$.
For each~\(i \in [b]\) we denote by~\(A(R_i)\) the set of agents introduced by the red edge whose size corresponds to~\(z_i\).
We denote~\(A(R) = \cup_{i = 1}^b A(R_i)\) where~\(b\) is the number of red edges added by the red star~\(R\).
By \emph{an agent group of~\(A(R)\)} we mean an element of the set~\(\{A(R_i)\}_{i = 1}^b\). 

\begin{figure}
  \centering
  \begin{tikzpicture}[scale=0.6, inner sep=0.7mm]
    \node[draw, circle, line width=1pt, fill=white](u) at (2,4)[label=above: {$u$}] {};
    \node[](k2) at (0,0)[] {\textcolor{red}{$2$}};
    \node[](k22) at (2,0)[] {\textcolor{red}{$2$}};
    \node[](k3) at (4,0)[] {\textcolor{red}{$3$}};
    
    \draw[-, line width=1.5pt, color=red]  (u) -- (k2);
    \draw[-, line width=1.5pt, color=red]  (u) -- (k22);
    \draw[-, line width=1.5pt, color=red]  (u) -- (k3);

    \draw[->, line width=1pt] (5,2)--(6,2);

    \node[draw, circle, line width=1pt, fill=white](u') at (10,4)[label=above: {$u$}] {};
    \node[draw, circle, line width=1pt, fill=white](v1) at (6,0)[] {};
    \node[draw, circle, line width=1pt, fill=white](v2) at (7,0)[] {};
    \node[draw, circle, line width=1pt, fill=white](v3) at (9,0)[] {};
    \node[draw, circle, line width=1pt, fill=white](v4) at (10,0)[] {};
    \node[draw, circle, line width=1pt, fill=white](v5) at (12,0)[] {};
    \node[draw, circle, line width=1pt, fill=white](v6) at (13,0)[] {};
    \node[draw, circle, line width=1pt, fill=white](v7) at (14,0)[] {};

    \draw[-, line width=1pt]  (u') -- (v1);
    \draw[-, line width=1pt]  (u') -- (v2);
    \draw[-, line width=1pt]  (u') -- (v3);
    \draw[-, line width=1pt]  (u') -- (v4);
    \draw[-, line width=1pt]  (u') -- (v5);
    \draw[-, line width=1pt]  (u') -- (v6);
    \draw[-, line width=1pt]  (u') -- (v7);

    \draw[->, line width=1pt, color=red] (v1) to[bend left=35] (v2);
    \draw[->, line width=1pt, color=red] (u') to[bend right=25] (v1);
    \draw[->, line width=1pt, color=red] (v3) to[bend left=35] (v4);
    \draw[->, line width=1pt, color=red] (v4) to[bend left=35] (v3);
    \draw[->, line width=1pt, color=red] (v5) to[bend left=35] (v6);
    \draw[->, line width=1pt, color=red] (v6) to[bend left=35] (v7);
    \draw[->, line width=1pt, color=red] (v7) to[bend left=35] (v5);
    
  \end{tikzpicture}
  \caption{%
    A red star with edge sizes~\((2, 2, 3)\) attached to vertex~\(u\).
    The tails and heads of the red arrows indicate the starting and target vertices, respectively, of each agent.
  }
  \label{fig:vc-hardness:red-star}
\end{figure}

Finally, we have the gadget of \emph{attaching a bow tie of size~\(b\)} to a vertex~\(u \in U\) for any~\(b \in \N\).
See Fig.~\ref{fig:vc-hardness:bow-tie} for an illustration.
Whenever we attach a bow tie of size~\(b\) to~\(u \in U\), we add~\(2b\)~new vertices~\(w_1, \ldots, w_{2k}\) to~\(V\), we connect each vertex~\(w_i\) to~\(u\) for~\(i \in \Integers{2b}\), we add~\(b\)~new agents~\(a_1, \ldots, a_b\) to~\(A\) (which are distinct from the agents from red edges but we use the same variable for simplicity), and for every~\(i \in \Integers{b}\) we set~\(s_0(a_i) = w_i\) and~\(t(a_i) = w_{b + i}\).
If~\(u\) is a vertex to which a bow tie was attached, then we denote by~\(A_T(u)\) the set of agents introduced by that bow tie.

\begin{figure}
  \centering
  \begin{tikzpicture}[scale=0.6, inner sep=0.7mm]
    \node[draw, circle, line width=1pt, fill=white](u) at (0,0)[label=above: {$u$}] {};
    \draw[-,line width=0.5pt,black,fill=lightgray] (u) --+(-2,-1) --+(-2,1) -- (u);
    \draw[-,line width=0.5pt,black,fill=lightgray] (u) --+(2,-1) --+(2,1) -- (u);

    \draw[->, line width=1pt] (3,0)--(4,0);

    \node[draw, circle, line width=1pt, fill=white](w1) at (5,1.5)[] {};
    \node[draw, circle, line width=1pt, fill=white](w2) at (5,0.5)[] {};
    \node[draw, circle, line width=1pt, fill=white](w3) at (5,-1.5)[] {};

    \path (w2) -- (w3) node [black, font=\Large, midway, sloped] {$\dots$};

    \node[draw, circle, line width=1pt, fill=white](u') at (7,0)[label=above: {$u$}] {};

    \node[draw, circle, line width=1pt, fill=white](w4) at (9,1.5)[] {};
    \node[draw, circle, line width=1pt, fill=white](w5) at (9,0.5)[] {};
    \node[draw, circle, line width=1pt, fill=white](w6) at (9,-1.5)[] {};

    \path (w5) -- (w6) node [black, font=\Large, midway, sloped] {$\dots$};

    \draw [decorate,decoration={brace, amplitude=10pt}] (9.5,1.5) -- (9.5,-1.5) node [black,midway,xshift=1.5cm] {$b$ vertices};

    \draw[-, line width=1pt]  (u') -- (w1);
    \draw[-, line width=1pt]  (u') -- (w2);
    \draw[-, line width=1pt]  (u') -- (w3);
    \draw[-, line width=1pt]  (u') -- (w4);
    \draw[-, line width=1pt]  (u') -- (w5);
    \draw[-, line width=1pt]  (u') -- (w6);

    \draw[->, line width=1pt, color=red] (w1) to[bend left=35] (w4);
    \draw[->, line width=1pt, color=red] (w2) to[bend left=35] (w5);
    \draw[->, line width=1pt, color=red] (w3) to[bend right=35] (w6);
  \end{tikzpicture}
  \caption{%
    Attaching a bow tie of size~\(b\) to a vertex~\(u\).
    The tails and heads of the red arrows indicate the starting and target vertices, respectively, of each agent.
  }
  \label{fig:vc-hardness:bow-tie}
\end{figure}

\paragraph{Construction}
With the above gadgets, we now proceed to describe the instance we construct, illustrated in Fig.~\ref{fig:vc-hardness:complete-reduction}.
First, we attach a red star with edge sizes~\(\vec{\beta}\) to~\(u_1\) and we name it~\(R^1\).
Let~\(\vec{z} \in \N^n\) be the vector where~\(z_1 = \varphi\), \(z_n = \varphi + 4\), and ~\(z_i = \varphi + 2\) for each~\(i \in [2, n - 1]\), then attach a red star with edge sizes~\(\vec{z}\) to~\(u_7\) and we name the red star~\(R^7\).
We add~\(2n - 2\)~new vertices~\(x_1, \ldots, x_{2n - 2}\), denoted by~\(V_X^-\) and connect them to~\(u_5\), and another~\(2n - 2\)~new vertices~\(x_{2n - 1}, \ldots, x_{4n - 4}\), denoted by~\(V_X^+\) and connect them to~\(u_6\).
For each~\(i \in [2n - 2]\) we add a new agent~\(a_i\) and set~\(s_0(a_i) = x_i\), and~\(t(a_i) = x_{2n - 2 + i}\), and we denote these agents by~\(A^X\).
We add~\(4n\)~new vertices~\(y_1, \ldots, y_{4n}\), denoted by~\(V_Y^-\) and connect them to~\(u_8\), and another~\(4n\)~new vertices~\(y_{4n + 1}, \ldots, y_{8n}\), denoted by~\(V_Y^+\), and connect them to~\(u_9\).
For each~\(i \in [4n]\) we add a new agent~\(a_i\) and set~\(s_0(a_i) = y_i\) and~\(t(a_i) = y_{4n + i}\), and we denote these agents by~\(A^Y\).
We attach a bow tie of size~\(\ell - 1 - 4n\) to vertices~\(u_8\) and~\(u_9\) and a bow tie of size~\(\ell - 1 - (2n - 2)\) to vertices~\(u_3, u_5\), and~\(u_6\).
This concludes the construction.
We denote the graph constructed this way by $G$ and the resulting instance by~\(\mathcal{I}\).

Immediately, we can prove that~\(G\) satisfies the structural properties we desire.

\begin{observation}
  \label{lemma:vc-hardness:small-vc}
  The graph~\(G\) produced by the reduction has vertex cover number~7 and nine internal vertices.
\end{observation}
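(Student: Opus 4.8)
The plan is to prove the two assertions separately, after isolating the single structural fact that drives both.

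\emph{A structural fact.} Every vertex of $G$ not in $U$ is a leaf, and it is adjacent to one of $u_1, u_3, u_5, u_6, u_7, u_8, u_9$. I would verify this directly from the definitions of the three gadgets: a red edge, a red star, and a bow tie each only introduce new leaves joined to the single vertex they are attached to, and the vertices $x_i$ and $y_i$ are explicitly leaves as well. Going through the construction, every gadget and every $x_i, y_i$ is attached to $u_1$ (the red star $R^1$), to $u_7$ (the red star $R^7$), to $u_5$ or $u_6$ (the $x$-vertices together with the two bow ties on $u_5$ and $u_6$), to $u_3$ (one bow tie), or to $u_8, u_9$ (the $y$-vertices together with the two bow ties on $u_8$ and $u_9$). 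In particular nothing is attached to $u_2$ or $u_4$, and $T$ contains no edge $u_2 u_4$.

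\emph{Nine internal vertices.} By the structural fact the vertices outside $U$ are leaves, hence not internal. Each of $u_2$ and $u_4$ has degree at least $2$ already in $T$ (namely $5$ and $3$). For each $u_i$ with $i \in \{1, 3, 5, 6, 7, 8, 9\}$, some gadget of strictly positive size is attached to it and thus contributes at least one leaf neighbour, which together with the at least one neighbour of $u_i$ in $T$ gives it degree at least $2$; positivity of all the relevant size parameters (the entries of $\vec\beta$ and $\vec z$, the quantity $4n$, and the bow-tie sizes $\ell - 1 - 4n$ and $\ell - 1 - (2n-2)$) follows from $\ell = n\varphi + 3n$ and from $\varphi$ being large after the preprocessing of $\vec\beta$ (and although the $x$-vertices vanish when $n = 1$, the bow tie on $u_5$, respectively on $u_6$, does not). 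Hence the internal vertices of $G$ are exactly $u_1, \dots, u_9$, as claimed.

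\emph{Vertex cover number $7$.} For the upper bound I take $S = \{u_1, u_3, u_5, u_6, u_7, u_8, u_9\} = U \setminus \{u_2, u_4\}$, which has size $7$: every edge of $T$ meets $S$ because there is no edge $u_2 u_4$, and every other edge of $G$ is an edge to a leaf incident to one of the vertices of $S$ by the structural fact. For the matching lower bound, for each $i \in \{1, 3, 5, 6, 7, 8, 9\}$ I pick an edge joining $u_i$ to one of its leaf neighbours; these seven edges are pairwise vertex-disjoint, so no vertex cover can have fewer than $7$ vertices. Thus the vertex cover number of $G$ equals $7$.

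\emph{Where the work is.} There is essentially no difficulty; the only step needing care is the bookkeeping behind the structural fact, namely checking that the list of attachment points is exhaustive so that $u_2$ and $u_4$ really are the only vertices of $U$ with no leaf neighbour. After that, the degree counts, the explicit cover $S$, and the explicit matching are all immediate.
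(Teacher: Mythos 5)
Your proof is correct and follows essentially the same route as the paper, which simply notes that the vertices of $U$ are the only internal vertices and that $\{u_1, u_3, u_5, u_6, u_7, u_8, u_9\}$ is a vertex cover. The only addition is your explicit matching of seven pendant edges giving the lower bound (which the paper leaves implicit, and which is needed only for the exact value rather than for the hardness claim); this is a correct and welcome detail.
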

  In fact, the vertices of the set~\(U\) are the only internal vertices of~\(G\).
  Furthermore, the set $\{u_1, u_3, u_5, u_6, u_7, u_8, u_9\}$ is a vertex cover of $G$.

\begin{figure}
  \centering
  \begin{tikzpicture}[scale=0.7, inner sep=0.7mm]
    \filldraw[fill=red!10, draw=black] plot[smooth cycle, tension=0.9] coordinates {
        (1,3) (-3,5) (-3,1) 
    };
    \node[] at (-3,4) {\Large $R^1$};

    \filldraw[fill=green!10, draw=black] plot[smooth cycle, tension=0.9] coordinates {
        (13.5,5.5) (13.5,0.5) (16,1.5) (16,4.5) 
    };
    \node[] at (16,3) {\Large $V_X^-$};

    \filldraw[fill=green!10, draw=black] plot[smooth cycle, tension=0.9] coordinates {
        (-2.5,7.5) (-1.5,10) (1,10) (2,7.5) 
    };
    \node[] at (-0.5,9.5) {\Large $V_X^+$};

    \filldraw[fill=yellow!10, draw=black] plot[smooth cycle, tension=0.9] coordinates {
        (3.5,7.5) (4.5,10) (6.5,10) (8,7.5) 
    };
    \node[] at (5.5,9.5) {\Large $V_Y^+$};

    \filldraw[fill=yellow!10, draw=black] plot[smooth cycle, tension=0.9] coordinates {
        (-2.5,-1.5) (-1.5,-3.5) (1,-3.5) (2,-1.5)
    };
    \node[] at (-0.5,-3.5) {\Large $V_Y^-$};
    
    \filldraw[fill=red!10, draw=black] plot[smooth cycle, tension=0.9] coordinates {
        (9,1) (5,-3) (13,-3) 
    };
    \node[] at (12,-3) {\Large $R^7$};
    
    \node[draw, circle, line width=1pt, fill=white](u8) at (0,0)[label=above: { $u_8$}] {};
    \node[draw, circle, line width=1pt, fill=white](u1) at (0,3)[label=above: { $u_1$}] {};
    \node[draw, circle, line width=1pt, fill=white](u2) at (3,3)[label=below: { $u_2$}] {};
    \node[draw, circle, line width=1pt, fill=white](u3) at (6,3)[label=above right: { $u_3$}] {};
    \node[draw, circle, line width=1pt, fill=white](u4) at (9,3)[label=above: { $u_4$}] {};
    \node[draw, circle, line width=1pt, fill=white](u5) at (12,3)[label=above left: { $u_5$}] {};
    \node[draw, circle, line width=1pt, fill=white](u7) at (9,0)[label=right: { $u_7$}] {};
    \node[draw, circle, line width=1pt, fill=white](u6) at (0,6)[label=below: { $u_6$}] {};
    \node[draw, circle, line width=1pt, fill=white](u9) at (6,6)[label=below: { $u_9$}] {};
    
    \draw[-, line width=1pt]  (u8) -- (u2);
    \draw[-, line width=1pt]  (u1) -- (u2);
    \draw[-, line width=1pt]  (u2) -- (u3);
    \draw[-, line width=1pt]  (u3) -- (u4);
    \draw[-, line width=1pt]  (u4) -- (u5);
    \draw[-, line width=1pt]  (u6) -- (u2);
    \draw[-, line width=1pt]  (u9) -- (u2);
    \draw[-, line width=1pt]  (u7) -- (u4);

    \node[](b1) at (-2,4.5)[] {\textcolor{red}{$\beta_1$}};
    \node[](b2) at (-2,3.5)[] {\textcolor{red}{$\beta_2$}};
    \node[](b3n) at (-2,1.5)[] {\textcolor{red}{$\beta_{3n}$}};
    
    \draw[-, line width=1pt, color=red]  (u1) -- (b1);
    \draw[-, line width=1pt, color=red]  (u1) -- (b2);
    \draw[-, line width=1pt, color=red]  (u1) -- (b3n);

    \path (b2) -- (b3n) node [black, font=\Large, midway, sloped] {$\dots$};

    \draw[-,line width=0.5pt,black,fill=blue!30] (u3) --+(-0.5,1) --+(0.5,1) -- (u3);
    \draw[-,line width=0.5pt,black,fill=blue!30] (u3) --+(-0.5,-1) --+(0.5,-1) -- (u3);

    \draw[-,line width=0.5pt,black,fill=blue!30] (u5) --+(-0.5,1) --+(0.5,1) -- (u5);
    \draw[-,line width=0.5pt,black,fill=blue!30] (u5) --+(-0.5,-1) --+(0.5,-1) -- (u5);
    
    \node[draw, circle, line width=1pt, fill=white](x1) at (14,4.5)[label=right: { $x_1$}] {};
    \node[draw, circle, line width=1pt, fill=white](x2) at (14,3.5)[label=right: { $x_2$}] {};
    \node[draw, circle, line width=1pt, fill=white](x2n2) at (14,1.5)[label=right: { $x_{2n-2}$}] {};

    \path (x2) -- (x2n2) node [black, font=\Large, midway, sloped] {$\dots$};

    \draw[-, line width=1pt]  (u5) -- (x1);
    \draw[-, line width=1pt]  (u5) -- (x2);
    \draw[-, line width=1pt]  (u5) -- (x2n2);

    \draw[-,line width=0.5pt,black,fill=blue!30] (u6) --+(-1,-0.5) --+(-1,0.5) -- (u6);
    \draw[-,line width=0.5pt,black,fill=blue!30] (u6) --+(1,-0.5) --+(1,0.5) -- (u6);
    
    \node[draw, circle, line width=1pt, fill=white](x2n1) at (-2,8)[label=above: { $x_{2n-1}$}] {};
    \node[draw, circle, line width=1pt, fill=white](x2n) at (-0.5,8)[label=above: { $x_{2n}$}] {};
    \node[draw, circle, line width=1pt, fill=white](x4n4) at (1.5,8)[label=above: { $x_{4n-4}$}] {};

    \path (x2n) -- (x4n4) node [black, font=\Large, midway, sloped] {$\dots$};

    \draw[-, line width=1pt]  (u6) -- (x2n1);
    \draw[-, line width=1pt]  (u6) -- (x2n);
    \draw[-, line width=1pt]  (u6) -- (x4n4);

    \draw[-,line width=0.5pt,black,fill=red!30] (u9) --+(-1,-0.5) --+(-1,0.5) -- (u9);
    \draw[-,line width=0.5pt,black,fill=red!30] (u9) --+(1,-0.5) --+(1,0.5) -- (u9);
    
    \node[draw, circle, line width=1pt, fill=white](y4n1) at (4,8)[label=above: { $y_{4n+1}$}] {};
    \node[draw, circle, line width=1pt, fill=white](y4n2) at (5.5,8)[label=above: { $y_{4n+2}$}] {};
    \node[draw, circle, line width=1pt, fill=white](y8n) at (7.5,8)[label=above: { $y_{8n}$}] {};

    \path (y4n2) -- (y8n) node [black, font=\Large, midway, sloped] {$\dots$};

    \draw[-, line width=1pt]  (u9) -- (y4n1);
    \draw[-, line width=1pt]  (u9) -- (y4n2);
    \draw[-, line width=1pt]  (u9) -- (y8n);

    \node[](f1) at (6,-2)[] {\textcolor{red}{$\varphi$}};
    \node[](f2) at (7.5,-2)[] {\textcolor{red}{$\varphi+2$}};
    \node[](f3) at (10.5,-2)[] {\textcolor{red}{$\varphi+2$}};
    \node[](f4) at (12,-2)[] {\textcolor{red}{$\varphi+4$}};
    
    \draw[-, line width=1pt, color=red]  (u7) -- (f1);
    \draw[-, line width=1pt, color=red]  (u7) -- (f2);
    \draw[-, line width=1pt, color=red]  (u7) -- (f3);
    \draw[-, line width=1pt, color=red]  (u7) -- (f4);

    \path (f2) -- (f3) node [black, font=\Large, midway, sloped] {$\dots$};

    \draw [decorate,decoration={brace, amplitude=10pt}] (10.5,-2.5) -- (7.5,-2.5) node [black,midway,yshift=-0.5cm] {$n-2$ red edges};

    \draw[-,line width=0.5pt,black,fill=red!30] (u8) --+(-1,-0.5) --+(-1,0.5) -- (u8);
    \draw[-,line width=0.5pt,black,fill=red!30] (u8) --+(1,-0.5) --+(1,0.5) -- (u8);
    
    \node[draw, circle, line width=1pt, fill=white](y1) at (-2,-2)[label=below: { $y_1$}] {};
    \node[draw, circle, line width=1pt, fill=white](y2) at (-0.5,-2)[label=below: { $y_2$}] {};
    \node[draw, circle, line width=1pt, fill=white](y4n) at (1.5,-2)[label=below: { $y_{4n}$}] {};

    \path (y2) -- (y4n) node [black, font=\Large, midway, sloped] {$\dots$};

    \draw[-, line width=1pt]  (u8) -- (y1);
    \draw[-, line width=1pt]  (u8) -- (y2);
    \draw[-, line width=1pt]  (u8) -- (y4n);

\end{tikzpicture}
  \caption{%
    The construction of the graph $G$, described in the proof of Theorem~\ref{thm:vc-hardness:mapf-vc-hard}.
    Light blue bow ties (attached on the vertices $u_4, u_5$ and $u_6$) have size~\(\ell - 1 - (2n - 2)\).
    Light red bow ties (attached on the vertices $u_8$ and $u_9$) have size~\(\ell - 1 - 4n\).
  }
  \label{fig:vc-hardness:complete-reduction}
\end{figure}

\paragraph{High-level idea of the reduction.} 
On a high level, the reduction works as follows.
First, the bow ties play the role of gates, forcing the agents of $A^X$ and $A^Y$ to only move on the (shortest) path from their starting position to their terminal position, and without doing any ``back and forth'' once they begin traveling towards their destination.
Then, the choice of $\ell$ is such that, according to any optimal solution of $\mathcal{I}$, the agents of $R^1$ and $R^7$ must constantly move towards their terminal vertices.
Notice also that, in order for any agent that starts on a red edge to move towards their goal, they have to pass through either $u_1$ or $u_7$, according to which vertex their red edge is attached to.
Also, in order to have enough empty vertices so that the agents (that start on a red edge) can ``rotate'', one of them must occupy $u_2$ or $u_4$.
This forces the occupation of the vertices $u_2$ and $u_4$ by the agents of $R^1$ and $R^7$, respectively, during specific turns. 
This, essentially, sets a global clock according to which the vertices $u_2$ and~$u_4$ are unoccupied only during specific turns. 
Moreover, we are able to prove that in any optimal solution of~$\mathcal{I}$, once the first agent of any red edge from the $\beta$s ($\varphi$s resp.) starts moving towards their terminal, then all agents of the same red edge of the $\vec{\beta}$s ($\varphi$s resp.) have to reach their terminal before another red edge of the same type can begin to be treated.
This, combined with the global clock, ensures that the makespan of the schedule is exactly $\ell$ if and only if there are three red edges from the $\vec\beta$s that are dealt with every time a red edge from the $\varphi$s is completed, giving us the equivalence between the two problems. 

\medskip 

We are now ready to formally prove our result. 

\thmNpVc*

\begin{proof}
Let~\(\vec\beta\) be an instance of the \PN{3-Partition} problem and \(\mathcal{I}\) be the instance of \MAPF constructed as explained above. We will to show that~\(\vec\beta\) admits a~3\=/partition if and only if~\(\mathcal{I}\) admits a schedule of makespan at most~\(\ell\).

We say that a subset of agents~\(B \subseteq A\) is \emph{resolved in turn}~\(\tau\) if for each~\({b \in B}\) there exists~\(i \in [\tau]\) such that~\({s_i(b) = t(b)}\).
By the nature of our reduction, once an agent~\(a \in A\) reaches~\(t(a)\) in some turn, they will remain there until the end of the schedule.
Instead of providing the schedule all at once, we shall define some \emph{recipes} for various subsets of agents which can then be composed to create a schedule.
\emph{Applying a recipe} with \emph{duration~\(d\)} in turn~\(\tau\) on a subset of agents~\(B \subseteq A\) should be viewed as specifying a schedule for agents in~\(B\) in turns~\([\tau, \tau + d]\).
Each recipe is such that once we apply to a set of agents~\(B \subseteq A\), then~\(B\) will be resolved at the end of the recipe.
Thus when we say that we \emph{resolve~\(B\)}, we mean that we apply a recipe to~\(B\) so that all agents of~\(B\) are resolved by the end of the recipe.

\begin{recipe}[Agents of a red edge.]
  \label{recipe:vc-hardness:red-edge}
  Let~\(B = \{a_1, \ldots, a_c\} \subseteq A\) be a subset of agents that were introduced by attaching a red edge of size~\(c\) to a leaf~\(u\) of $T$ and let~\(w\) be the unique vertex in~\(N_T(u)\). We may assume that $c\geq 2$, since the case for $c=1$ is trivial.
  Let~\(\tau\) be the turn in which the recipe is applied.
  See Fig.~\ref{fig:opt-strategy-single-red-edge} for an illustration of this recipe.
  In turn~\(\tau + 1\) we move~\(a_1\) to~\(u\).
  In turn~\(\tau + 2\) we move~\(a_1\) to~\(w\) and~\(a_c\) to~\(u\).
  For~\(i \in [2, c]\) in decreasing order of~\(i\) we move agent~\(a_i\) from~\(u\) to~\(t(a_i)\) and simultaneously move agent~\(a_{i - 1}\) from~\(s_0(a_{i - 1})\) to~\(u\) in turn~\(\tau + (c - i) + 3\).
  In turn~\(\tau + c + 1\) we move~\(a_2\) from~\(u\) to~\(t(a_2)\) and simultaneously~\(a_1\) to~\(u\).
  Finally, in turn~\(\tau + c + 2\) we move~\(a_1\) to~\(t(a_1)\).
  This recipe has duration~\(c + 2\), and occupies~\(w\) in turns~\([\tau + 2, \tau + c]\).
\end{recipe}

With Recipe~\ref{recipe:vc-hardness:red-edge}, we are ready to specify how to resolve agents of a red star.

\begin{figure}
  \centering
  \subfloat[Turn $\tau$.]{
\begin{tikzpicture}[scale=0.6, inner sep=0.7mm]
    \node[draw, circle, line width=1pt, fill=white](u) at (3,3)[label=above: {$u$}] {\textcolor{white}{$a_4$}};
    \node[draw, circle, line width=1pt, fill=white](w) at (5,3)[label=above: {$w$}] {\textcolor{white}{$a_4$}};
    \node[draw, circle, line width=1pt, fill=white](v1) at (0,0)[label=below: {$t_4$}] {$a_1$};
    \node[draw, circle, line width=1pt, fill=white](v2) at (2,0)[label=below: {$t_1$}] {$a_2$};
    \node[draw, circle, line width=1pt, fill=white](v3) at (4,0)[label=below: {$t_2$}] {$a_3$};
    \node[draw, circle, line width=1pt, fill=white](v4) at (6,0)[label=below: {$t_3$}] {$a_4$};

    \draw[-, line width=1pt]  (u) -- (w);
    \draw[-, line width=1pt]  (u) -- (v1);
    \draw[-, line width=1pt]  (u) -- (v2);
    \draw[-, line width=1pt]  (u) -- (v3);
    \draw[-, line width=1pt]  (u) -- (v4);

\end{tikzpicture}
}

\subfloat[Turn $\tau+1$.]{
\begin{tikzpicture}[scale=0.6, inner sep=0.7mm]
    \node[draw, circle, line width=1pt, fill=white](u) at (3,3)[label=above: {$u$}] {$a_1$};
    \node[draw, circle, line width=1pt, fill=white](w) at (5,3)[label=above: {$w$}] {\textcolor{white}{$a_4$}};
    \node[draw, circle, line width=1pt, fill=white](v1) at (0,0)[label=below: {$t_4$}] {\textcolor{white}{$a_4$}};
    \node[draw, circle, line width=1pt, fill=white](v2) at (2,0)[label=below: {$t_1$}] {$a_2$};
    \node[draw, circle, line width=1pt, fill=white](v3) at (4,0)[label=below: {$t_2$}] {$a_3$};
    \node[draw, circle, line width=1pt, fill=white](v4) at (6,0)[label=below: {$t_3$}] {$a_4$};

    \draw[-, line width=1pt]  (u) -- (w);
    \draw[-, line width=1pt]  (u) -- (v1);
    \draw[-, line width=1pt]  (u) -- (v2);
    \draw[-, line width=1pt]  (u) -- (v3);
    \draw[-, line width=1pt]  (u) -- (v4);

\end{tikzpicture}
}\hspace{10pt}
\subfloat[Turn $\tau+2$.]{
\begin{tikzpicture}[scale=0.6, inner sep=0.7mm]
    \node[draw, circle, line width=1pt, fill=white](u) at (3,3)[label=above: {$u$}] {$a_4$};
    \node[draw, circle, line width=1pt, fill=white](w) at (5,3)[label=above: {$w$}] {$a_1$};
    \node[draw, circle, line width=1pt, fill=white](v1) at (0,0)[label=below: {$t_4$}] {\textcolor{white}{$a_4$}};
    \node[draw, circle, line width=1pt, fill=white](v2) at (2,0)[label=below: {$t_1$}] {$a_2$};
    \node[draw, circle, line width=1pt, fill=white](v3) at (4,0)[label=below: {$t_2$}] {$a_3$};
    \node[draw, circle, line width=1pt, fill=white](v4) at (6,0)[label=below: {$t_3$}] {\textcolor{white}{$a_4$}};

    \draw[-, line width=1pt]  (u) -- (w);
    \draw[-, line width=1pt]  (u) -- (v1);
    \draw[-, line width=1pt]  (u) -- (v2);
    \draw[-, line width=1pt]  (u) -- (v3);
    \draw[-, line width=1pt]  (u) -- (v4);

\end{tikzpicture}
}\hspace{10pt}
\subfloat[Turn $\tau+3$.]{
\begin{tikzpicture}[scale=0.6, inner sep=0.7mm]
    \node[draw, circle, line width=1pt, fill=white](u) at (3,3)[label=above: {$u$}] {$a_3$};
    \node[draw, circle, line width=1pt, fill=white](w) at (5,3)[label=above: {$w$}] {$a_1$};
    \node[draw, circle, line width=1pt, fill=white](v1) at (0,0)[label=below: {$t_4$}] {$a_4$};
    \node[draw, circle, line width=1pt, fill=white](v2) at (2,0)[label=below: {$t_1$}] {$a_2$};
    \node[draw, circle, line width=1pt, fill=white](v3) at (4,0)[label=below: {$t_2$}] {\textcolor{white}{$a_4$}};
    \node[draw, circle, line width=1pt, fill=white](v4) at (6,0)[label=below: {$t_3$}] {\textcolor{white}{$a_4$}};

    \draw[-, line width=1pt]  (u) -- (w);
    \draw[-, line width=1pt]  (u) -- (v1);
    \draw[-, line width=1pt]  (u) -- (v2);
    \draw[-, line width=1pt]  (u) -- (v3);
    \draw[-, line width=1pt]  (u) -- (v4);

\end{tikzpicture}
}

\subfloat[Turn $\tau+4$.]{
\begin{tikzpicture}[scale=0.6, inner sep=0.7mm]
    \node[draw, circle, line width=1pt, fill=white](u) at (3,3)[label=above: {$u$}] {$a_2$};
    \node[draw, circle, line width=1pt, fill=white](w) at (5,3)[label=above: {$w$}] {$a_1$};
    \node[draw, circle, line width=1pt, fill=white](v1) at (0,0)[label=below: {$t_4$}] {$a_4$};
    \node[draw, circle, line width=1pt, fill=white](v2) at (2,0)[label=below: {$t_1$}] {\textcolor{white}{$a_4$}};
    \node[draw, circle, line width=1pt, fill=white](v3) at (4,0)[label=below: {$t_2$}] {\textcolor{white}{$a_4$}};
    \node[draw, circle, line width=1pt, fill=white](v4) at (6,0)[label=below: {$t_3$}] {$a_3$};

    \draw[-, line width=1pt]  (u) -- (w);
    \draw[-, line width=1pt]  (u) -- (v1);
    \draw[-, line width=1pt]  (u) -- (v2);
    \draw[-, line width=1pt]  (u) -- (v3);
    \draw[-, line width=1pt]  (u) -- (v4);

\end{tikzpicture}
}\hspace{10pt}
\subfloat[Turn $\tau+5$.]{
\begin{tikzpicture}[scale=0.6, inner sep=0.7mm]
    \node[draw, circle, line width=1pt, fill=white](u) at (3,3)[label=above: {$u$}] {$a_1$};
    \node[draw, circle, line width=1pt, fill=white](w) at (5,3)[label=above: {$w$}] {\textcolor{white}{$a_4$}};
    \node[draw, circle, line width=1pt, fill=white](v1) at (0,0)[label=below: {$t_4$}] {$a_4$};
    \node[draw, circle, line width=1pt, fill=white](v2) at (2,0)[label=below: {$t_1$}] {\textcolor{white}{$a_4$}};
    \node[draw, circle, line width=1pt, fill=white](v3) at (4,0)[label=below: {$t_2$}] {$a_2$};
    \node[draw, circle, line width=1pt, fill=white](v4) at (6,0)[label=below: {$t_3$}] {$a_3$};

    \draw[-, line width=1pt]  (u) -- (w);
    \draw[-, line width=1pt]  (u) -- (v1);
    \draw[-, line width=1pt]  (u) -- (v2);
    \draw[-, line width=1pt]  (u) -- (v3);
    \draw[-, line width=1pt]  (u) -- (v4);

\end{tikzpicture}
}\hspace{10pt}
\subfloat[Turn $\tau+6$.]{
\begin{tikzpicture}[scale=0.6, inner sep=0.7mm]
    \node[draw, circle, line width=1pt, fill=white](u) at (3,3)[label=above: {$u$}] {\textcolor{white}{$a_4$}};
    \node[draw, circle, line width=1pt, fill=white](w) at (5,3)[label=above: {$w$}] {\textcolor{white}{$a_4$}};
    \node[draw, circle, line width=1pt, fill=white](v1) at (0,0)[label=below: {$t_4$}] {$a_4$};
    \node[draw, circle, line width=1pt, fill=white](v2) at (2,0)[label=below: {$t_1$}] {$a_1$};
    \node[draw, circle, line width=1pt, fill=white](v3) at (4,0)[label=below: {$t_2$}] {$a_2$};
    \node[draw, circle, line width=1pt, fill=white](v4) at (6,0)[label=below: {$t_3$}] {$a_3$};

    \draw[-, line width=1pt]  (u) -- (w);
    \draw[-, line width=1pt]  (u) -- (v1);
    \draw[-, line width=1pt]  (u) -- (v2);
    \draw[-, line width=1pt]  (u) -- (v3);
    \draw[-, line width=1pt]  (u) -- (v4);

\end{tikzpicture}
}
  \caption{%
    A duration~6 recipe for a single red edge of size~4 attached to vertex~\(u\) whose sole neighbor is~\(w\).
    The initial positions of agents $a_1,a_2,a_3,a_4$ are as depicted in subfigure (a). The terminal position of agent $a_i$ is the vertex labeled by $t_i$, $i\in[4]$.
  }
  \label{fig:opt-strategy-single-red-edge}
\end{figure}

\begin{recipe}[Agents of a red star]
  \label{recipe:vc-hardness:red-star}
  Let~\(\vec{z} \in \mathbb{N}^b\) where~\(b \in \mathbb{N}\).
  To apply this recipe, we need to also specify a permutation~\(\pi\) of~\([b]\) where~\(\pi_1 = 1\).
  If we do not specify~\(\pi\), then it is the identity permutation by default.
  Let~\(u\) be a leaf of $T$ with a sole neighbor~\(w\) (in $T$).
  Suppose that we attach a red star with edge sizes~\(\vec{z}\) to~\(u\).
  For each~\(i \in [b]\), we denote the red edge corresponding to~\(z_i\) by~\(r_i\), the actual graph vertices added by~\(r_i\) by~\(V_i\) and the agents added by~\(r_i\) by~\(A_i\).
  To resolve~\(\bigcup_{i = 1}^b A_i\), the idea is to successively apply Recipe~\ref{recipe:vc-hardness:red-edge} on~\(A_{\pi_1}, A_{\pi_2}, \ldots, A_{\pi_b}\).
  But there are two optimizations that we can do.
  \begin{enumerate}
    \item The first agent~\(a \in A_{\pi_1 = 1}\) has~\(s_0(a) = u\) (by our construction), meaning that for~\(A_1\) we can skip the first turn of Recipe~\ref{recipe:vc-hardness:red-edge} as it is already ``performed by the construction''.
    \item In the last turn of Recipe~\ref{recipe:vc-hardness:red-edge} applied to some agent group~\(A_{\pi_i}\), we move an agent from~\(u\) to a vertex of~\(V_{\pi_i}\).
      In that same turn we can move the first agent of~\(A_{\pi_{i + 1}}\) to~\(u\) according to Recipe~\ref{recipe:vc-hardness:red-edge} applied to~\(A_{\pi_{i + 1}}\).
      Thus we can overlap the application of Recipe~\ref{recipe:vc-hardness:red-edge} between two successive agent sets~\(A_{\pi_i}\) and~\(A_{\pi_{i + 1}}\).
  \end{enumerate}
  In general, the duration of this recipe is~\(\|\vec{z}\| + b\).
  For every~\(i \in [b]\) let~\(P_i = \sum_{j = 1}^i \beta_{\pi_j}\), i.e.~the prefix sum of~\(\vec{z}\) in order of permutation~\(\pi\).
  For every~\(i \in [b]\) this recipe keeps vertex~\(w\) unoccupied in turns~\(P_i + 1\) and~\(P_i + 2\) by agents~\(\cup_{i = 1}^b A_i\).
\end{recipe}

In our reduction, we will apply recipes in the order of their number.
In particular, it means that it is legal for a Recipe~\(i\) to check what the action of any Recipe~\(j\) with number~\(j < i\) does and which vertices are occupied by the agents on which Recipe~\(j\) is being applied.
We now specify recipes for agents~\(A^X\) and~\(A^Y\).

\begin{recipe}[Agents~\(A^X\)]
  \label{recipe:vc-hardness:Ax}
  To apply this recipe in turn~\(\tau\), we are given an agent~\(a \in A^X\) such that~\(s_\tau(a) = s_0(a)\).
  If none of the agents of~\(A(R^7)\) occupy~\(u_4\) in turn~\(\tau + 2\) and none of the agents of~\(A(R^1)\) occupy~\(u_2\) in turn~\(\tau + 4\), then we move~\(a\) along the path~\(s_0(a), u_5, u_4, u_3, u_2, u_6, t(a)\) in turns~\([\tau, \tau + 6]\).
\end{recipe}

\begin{recipe}[Agents~\(A^Y\)]
  \label{recipe:vc-hardness:Ay}
  To apply this recipe in turn~\(\tau\), we are given an agent~\(a \in A^Y\) such that~\(s_\tau(a) = s_0(a)\).
  If none of agents  of~\(A(R^1)\) or~\(A^X\) occupy~\(u_2\) in turn~\(\tau + 2\), then we move~\(a\) along the path~\(s_0(a), u_8, u_2, u_9, t(a)\) in turns~\([\tau, \tau + 4]\).
\end{recipe}

Finally, we specify a recipe for bow ties.

\begin{recipe}[Agents of a bow tie]
  \label{recipe:vc-hardness:bowtie}
  Let~\(u\) be a vertex of $T$ to which a bow tie of size~\(c\) was attached, and let this recipe be applied in turn~\(\tau\).
  If no agent being resolved by the preceding Recipes occupies~\(u\) in turn~\(\tau + 1\), then select an arbitrary agent~\(a\) of the bow tie that is not resolved yet, and move it to~\(u\) in turn~\(\tau + 1\) and to~\(t(a)\) in turn~\(\tau + 2\).
\end{recipe}

\todo[inline]{DK: The recipes are great; it would be even better if we could have some example of how the recipes are interleaving in a figure.}

\paragraph{Correctness (``if'').}
\begin{lemma}
  \label{lemma:vc-hardness:forward}
  If~\(\vec\beta\) has a 3\=/partition, then~\(\mathcal{I}\) admits a schedule of makespan at most~\(\ell = n\varphi + 3n\).
\end{lemma}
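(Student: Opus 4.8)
The plan is to produce the schedule explicitly by composing the recipes above, in the order of their numbers: Recipe~\ref{recipe:vc-hardness:red-star} on $A(R^1)$, then Recipe~\ref{recipe:vc-hardness:red-star} on $A(R^7)$, then Recipe~\ref{recipe:vc-hardness:Ax} on each agent of $A^X$, then Recipe~\ref{recipe:vc-hardness:Ay} on each agent of $A^Y$, and finally Recipe~\ref{recipe:vc-hardness:bowtie} on every bow tie; it then remains to check that all preconditions are met, that every agent is resolved, and that nothing happens after turn~$\ell$. For the red stars: given a 3-partition $\sigma_1,\dots,\sigma_n$, relabel so that $1\in\sigma_1$ and let $\pi$ be the permutation of $[3n]$ listing the three elements of $\sigma_1$ first (with index~$1$ in position~$1$), then those of $\sigma_2$, and so on; apply Recipe~\ref{recipe:vc-hardness:red-star} to $A(R^1)$ with this $\pi$ and to $A(R^7)$ with the identity permutation, both running from turn~$0$. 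By the duration formula of Recipe~\ref{recipe:vc-hardness:red-star}, each run ends exactly in turn $\|\vec\beta\|+3n=\|\vec z\|+n=\ell$, so both red stars are resolved in time and occupy all of it. The crucial point is the placement of the ``gaps''. Because $\beta(\sigma_i)=\varphi$, the prefix sum of $R^1$'s edge sizes after its $i$-th block equals exactly $i\varphi$, and Recipe~\ref{recipe:vc-hardness:red-star} leaves $u_2=N_T(u_1)$ free for two consecutive turns shortly after the point corresponding to that prefix sum (offset by the accumulated per-edge overhead of Recipe~\ref{recipe:vc-hardness:red-star} over $3i$ edges); likewise, since $R^7$'s edges have size $\varphi+2$, it leaves $u_4=N_T(u_7)$ free for two consecutive turns shortly after the point corresponding to prefix sum $i\varphi+2(i-1)$ (offset over $i$ edges). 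A computation shows that, for every $i\in[n-1]$, the $u_2$-gap after block~$i$ begins exactly two turns after the $u_4$-gap after edge~$i$: this is precisely what the size $\varphi+2$ buys, while $z_1=\varphi$ corrects the first pair of gaps for the skip-first-turn optimisation and $z_n=\varphi+4$ is forced by making the total equal to~$\ell$.

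Next I would thread the agents of $A^X$ through these aligned gaps. By Recipe~\ref{recipe:vc-hardness:Ax}, an agent of $A^X$ started in turn~$\tau$ sits on $u_4$ in turn $\tau+2$ and on $u_2$ in turn $\tau+4$. For each $i\in[n-1]$ and each of the two turns $t$ of the $u_4$-gap after $R^7$'s edge~$i$, I start an $A^X$ agent in turn $t-2$; it is then on $u_4$ in turn~$t$ (free of $A(R^7)$) and on $u_2$ in turn $t+2$, which by the alignment is one of the two turns of the $u_2$-gap after $R^1$'s block~$i$ (free of $A(R^1)$), so the precondition of Recipe~\ref{recipe:vc-hardness:Ax} holds. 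This uses $2(n-1)=2n-2=|A^X|$ agents, exactly one per free turn of $u_4$ among the $n-1$ interior gaps (the remaining $\ell-(\|\vec z\|-n)-(2n-2)=2$ free turns of $u_4$ lie at the boundary and stay idle), so all of $A^X$ is resolved; and since distinct $A^X$ agents occupy each of $u_5,u_4,u_3,u_2,u_6$ in distinct turns, there are no collisions internal to $A^X$.

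It remains to route $A^Y$ and fill the bow ties. Vertex $u_2$ is occupied by $A(R^1)$ in exactly $\|\vec\beta\|-3n=n\varphi-3n$ turns, hence free in exactly $6n$ turns; the $A^X$ agents claim $2n-2$ of these, leaving $4n+2\ge|A^Y|$ free turns of $u_2$, into each of which I slot one $A^Y$ agent via Recipe~\ref{recipe:vc-hardness:Ay} (started two turns earlier, avoiding the two latest free turns so that the recipe still finishes by turn~$\ell$), the precondition ``$u_2$ free of $A(R^1)$ and of $A^X$'' holding by construction. Finally, applying Recipe~\ref{recipe:vc-hardness:bowtie} to each bow tie greedily moves its agents through the attachment vertex whenever that vertex is free of all earlier agents; this resolves the whole bow tie because $u_3,u_5,u_6$ are each used only by the $2n-2$ agents of $A^X$ and hence free in exactly $\ell-1-(2n-2)$ of the turns $[1,\ell-1]$, matching the bow-tie size, while $u_8,u_9$ are each used only by the $4n$ agents of $A^Y$ and hence free in exactly $\ell-1-4n$ turns, again matching. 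Thus every agent is resolved, no two agents ever share a vertex in the same turn (within a recipe by construction, across recipes by the preconditions just verified), and nothing runs past turn~$\ell$; hence $\mathcal I$ admits a schedule of makespan at most~$\ell$.

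The step I expect to be the main obstacle is the exact turn-index bookkeeping behind the ``alignment'' claim in the first paragraph: one must pin down the per-edge overhead of Recipe~\ref{recipe:vc-hardness:red-star}, together with the effect of the skip-first-turn and overlap optimisations and of the atypical boundary sizes $z_1=\varphi$, $z_n=\varphi+4$, precisely enough to conclude that the $u_2$-gaps and $u_4$-gaps stay exactly two turns apart for all $i=1,\dots,n-1$, and one must verify that the free-turn counts at $u_2$, $u_4$, and every bow-tie vertex are met with no slack to spare. This is exactly where the 3-partition hypothesis is used: only $\beta(\sigma_i)=\varphi$ for all $i$ guarantees that $R^1$'s block prefix sums are exactly $i\varphi$, which is what keeps the two gap families marching in lockstep; without it (by Claim~\ref{claim:vc-hardness:3-partition-multiple} a block would miss $\varphi$ by at least~$6$) some $A^X$ agent would find both $u_4$ and $u_2$ occupied during its passage, and, the counts being tight, no rerouting would restore makespan~$\ell$. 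A minor additional chore is handling off-by-one effects at the two ends and the harmless assumption that $\varphi$ is large relative to~$n$ (enforceable by scaling the \PN{3-Partition} instance), which keeps all the gaps in the expected strict order.
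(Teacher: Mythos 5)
Your proposal is correct and follows essentially the same route as the paper's proof: apply Recipe~\ref{recipe:vc-hardness:red-star} to both red stars from the start (ordering $R^1$'s edges by the blocks of the 3\=/partition), use the fact that each block sums to $\varphi$ to align the two-turn gaps at $u_4$ and $u_2$ exactly two turns apart, thread the $2n-2$ agents of $A^X$ through those gaps, slot the $4n$ agents of $A^Y$ into the remaining free turns of $u_2$ (discarding the two at the very end), and fill the bow ties greedily with exactly matching free-turn counts. The turn-index bookkeeping you flag as the main obstacle does work out as you describe (the $u_4$-gap after $R^7$'s $i$-th edge falls in turns $i(\varphi+3)-3$ and $i(\varphi+3)-2$, and the $u_2$-gap after $R^1$'s $i$-th block in turns $i(\varphi+3)-1$ and $i(\varphi+3)$), which is precisely the computation the paper records.
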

\begin{proof}
  Let~\(\vec{\sigma} = (\sigma_1, \ldots, \sigma_n)\) be a 3\=/partition of~\(\vec\beta\).
  For convenience, we denote by~\(\vec{\iota}\) the vector~\(\vec{\sigma}\) without nesting, i.e.,~for every~\(i \in [3n]\), let \(\iota_i = (\sigma_{\lfloor i / 3 \rfloor + 1})_{1 + (i \bmod 3)}\).
  We will create a schedule of makespan at most~\(\ell\) by applying recipes to~\(\mathcal{I}\).

  We start by showing how to resolve agents~\(A(R^1)\).
  We want to apply Recipe~\ref{recipe:vc-hardness:red-star} to~\(A(R^1)\) in turn $1$.
  By permuting~\(\vec\sigma\) and entries within each~\(\sigma_i \in \vec\sigma\), we can assume that~\(\iota_1 = \sigma_{1, 1}\) so the unique agent~\(b\) that has~\(s_0(b) = u_1\) belongs to a red edge that is resolved first by Recipe~\ref{recipe:vc-hardness:red-edge}.
  The duration bound of Recipe~\ref{recipe:vc-hardness:red-star} shows that~\(A(R^1)\) will be resolved in exactly~\(\ell = \| \vec{\beta} \| + 3n = n\varphi + 3n\)~turns.

  Next, we want to take care of the agents~\(A(R^7)\).
  We can employ Recipe~\ref{recipe:vc-hardness:red-star} in turn $1$ once more.
  None of these agents will collide with agents~\(A(R^1)\) hence we have a valid schedule so far.
  The duration bound of Recipe~\ref{recipe:vc-hardness:red-star} again shows that~\(A(R^7)\) will be resolved in exactly~\(\ell\)~turns.

  For agents~\(A^X\) and~\(A^Y\) we apply Recipes~\ref{recipe:vc-hardness:Ax} and~\ref{recipe:vc-hardness:Ay} respectively whenever possible.
  This means that in every turn, we check if the preconditions of Recipes~\ref{recipe:vc-hardness:Ax} and~\ref{recipe:vc-hardness:Ay} regarding the occupancy of vertices~\(u_2\) and~\(u_4\) are satisfied.
  And if they are, then we apply the recipes on the agent with the least index among~\(A^X\) or~\(A^Y\).
  We want to prove that all agents~\(A^X\) and~\(A^Y\) are resolved by turn~\(\ell\) and that they do not collide with each other and with agents~\(A(R^7)\) and~\(A(R^1)\).
  Let us start with~\(A^X\).
  From Recipe~\ref{recipe:vc-hardness:red-star} we know that~\(u_4\) is unoccupied in turns~\(i(\varphi + 3) - 3\) and~\(i(\varphi + 3) - 2\) for every~\(i \in [2n - 2]\).
  As~\(\langle \vec\beta \rangle\) has a 3\=/partition, we know that if we resolve~\(\A(R^1)\) using Recipe~\ref{recipe:vc-hardness:red-star} with permutation~\(\vec{\iota}\), then~\(u_2\) is unoccupied in turns~\(i(\varphi + 3) - 1\) and~\(i(\varphi + 3)\) for every~\(i \in [n]\).
  So for every~\(i \in [n - 1]\) we can send two agents of~\(A^X\) using Recipe~\ref{recipe:vc-hardness:Ax} in turns~\(i(\varphi + 3) - 5\) and~\(i(\varphi + 3) - 4\) without colliding into any agent of~\(A(R^7)\) or~\(A(R^1)\).
  The last agent of~\(A^X\) arrives into their target in turn~\(\ell - \varphi\); thus, the makespan of the schedule so far is still~\(\ell\).
  Now we want to resolve~\(A^Y\).
  Let~\(P_i = \sum_{j = 1}^i \beta_{\iota_j}\), i.e.~the prefix sum of~\(\beta\) with respect to permutation~\(\iota\).
  By Recipe~\ref{recipe:vc-hardness:red-star} we know that for every~\(i \in [3n]\) vertex~\(u_2\) is unoccupied in turns~\(P_i + i - 1\) and~\(P_i + i\) by~\(A(R^1)\) but due to Recipe~\ref{recipe:vc-hardness:Ax}, vertex~\(u_2\) is now occupied in turns~\(i(\varphi + 3) - 1\) and~\(i(\varphi + 3)\).
  Nevertheless, there are now at least~\(4n + 2\)~turns during which~\(u_2\) is unoccupied.
  However, two of those turns are turns~\(\ell - 1\) and~\(\ell\) so if we want to resolve~\(A^Y\) using Recipe~\ref{recipe:vc-hardness:Ay} in turns~\(\ell - 3\) and~\(\ell - 2\), then they will not reach any vertex~\(V_Y^+\) by turn~\(\ell\).
  Thus there are exactly~\(4n\)~turns during which~\(u_2\) is unoccupied that can be used to resolve~\(A^Y\) using Recipe~\ref{recipe:vc-hardness:Ay}.
  As each input~\(\beta_i\) is at least~6, the last agent of~\(A^Y\) to be resolved by Recipe~\ref{recipe:vc-hardness:Ay} will reach their target vertex by turn~\(\ell\).

  Finally, we need to resolve the agents introduced by attaching bow ties.
  In the schedule so far, there are~\(2n - 2\)~agents that pass through vertices~\(u_3\), \(u_5\) and~\(u_6\) at some point and there are~\(4n\)~agents that pass through vertices~\(u_8\) and~\(u_9\).
  Thus, if we apply Recipe~\ref{recipe:vc-hardness:bowtie} whenever we can as soon as possible, then every agent introduced by a bow tie will be resolved in turn~\(\ell\) at the latest.
\end{proof}

\paragraph{Correctness (``only if'').}
Now we want to prove the backward direction. That is, starting from a schedule of makespan at most~\(\ell=n\varphi+3\), we want to produce a~3\=/partition of~\(\vec\beta\).
For the rest of the proof, let us fix some schedule~\(s\) of makespan at most~\(\ell\).
For any schedule, we consider the initial turn to be part of the schedule as well.
So if we claim that some invariant holds throughout the schedule, then this invariant must also hold in the~0\=/th turn.

Observe that if~\(s\) behaves according to our Recipes, then we can easily retrieve the partition: look at the permutation~\(\pi\) of with which agents~\(\{A(R^1_i)\}_{i = 1}^{3n}\) of the star~\(R^1\) are resolved using Recipe~\ref{recipe:vc-hardness:red-star}, and set every triple~\((\pi_{3j - 2}, \pi_{3j - 1}, \pi_{3j})_{j = 1}^n\) as a partition of~\(\vec\beta\).
To see why this is a partition, we need to inspect agents~\(A^X\) and~\(A(R^7)\). 

We will now show that, actually, $s$ has to follow our Recipes if it is of makespan at most $\ell$.
Let us first show a lower bound on the number of turns needed to resolve any red star.
\begin{claim}
  \label{claim:vc-hardness:red-star-opt}
  Let~\(R \in \{R^1, R^7\}\) be a red star with edge sizes~\(\vec{z} \in \N^m\) that is attached to a leaf of~$T$ that has $w$ as a sole neighbor in $T$. If the agents of~\(A(R)\) never go to vertices~\(N_T(w) \setminus \{u\}\), then it takes at least~\(\|\vec{z}\| + m=\ell\)~turns for all agents of~\(A(R)\) to reach their targets and they occupy~\(w\) for at least~\(\|\vec{z}\| - m\)~turns.
\end{claim}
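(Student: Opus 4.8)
The plan is to treat the movement of the agents of~\(A(R)\) as a token\=/sliding puzzle on a star. Write~\(u\) for the leaf of~\(T\) carrying~\(R\), let~\(L\) be the set of~\(\|\vec z\|\) leaves introduced by~\(R\) (\(z_i\) of them forming the red edge~\(r_i\)), and set~\(A_i = A(R_i)\). Under the hypothesis that no agent of~\(A(R)\) ever visits~\(N_T(w) \setminus \{u\}\), these~\(\|\vec z\|\) agents stay throughout the schedule inside~\(L \cup \{u, w\}\), which induces a star with center~\(u\), leaf set~\(L\), and the single further pendant vertex~\(w\). I would first record two elementary facts: (i)~every move of an agent of~\(A(R)\) runs along an edge incident to~\(u\); and (ii)~since each vertex hosts at most one agent, in any single turn at most one agent of~\(A(R)\) moves onto~\(u\) and at most one moves off~\(u\), so at most two agents of~\(A(R)\) move per turn.

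For the makespan bound I would count moves. Each agent of~\(A(R)\) has its target in~\(L\) and (since every~\(z_i \ge 2\)) does not start there, so it performs at least one \emph{finalizing} move, of the form~\(u \to (\text{its target leaf})\). In addition, each red edge~\(r_i\) contains an agent that, at the first moment it occupies~\(u\) (for~\(i = 1\), the agent placed on~\(u\) by the construction, already at turn~\(0\); for~\(i \ge 2\), the first agent of~\(A_i\) to leave its start leaf), finds its target leaf occupied; since~\(u\) is a single-vertex bottleneck (the blocking agent can leave its leaf only through~\(u\)), this agent must vacate~\(u\), to~\(w\) or to a non-target leaf, before its target frees up, and so makes at least two moves beyond its own minimum. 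These~\(m\) detouring agents lie in distinct sets~\(A_i\), so the total number of moves of~\(A(R)\) is at least~\(\bigl(1 + 2(\|\vec z\| - 1)\bigr) + 2m = 2\|\vec z\| + 2m - 1\). With fact~(ii), the makespan is at least~\(\lceil (2\|\vec z\| + 2m - 1)/2 \rceil = \|\vec z\| + m = \ell\).

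For the bound that~\(w\) is occupied in at least~\(\|\vec z\| - m\) turns, note that whenever the claim is applied the ambient schedule has makespan at most~\(\ell\), hence exactly~\(\ell\) by the previous paragraph, and so the move count above is tight up to one. This rigidity forces each red edge to contribute exactly one detouring agent and forces every detour to route through~\(w\): routing a detour onto another edge's momentarily empty leaf would, by the same bottleneck reasoning, create a second detouring agent in that edge and violate the count. It also forces the~\(z_i - 1\) non-detouring agents of~\(r_i\) to file through~\(u\) one per turn, during which the detouring agent of~\(r_i\) is stuck at~\(w\) waiting for its target leaf to be vacated; hence this agent occupies~\(w\) in at least~\(z_i - 1\) turns. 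Since~\(w\) holds at most one agent at a time, the~\(m\) resulting sojourns occupy pairwise disjoint sets of turns, so~\(w\) is occupied in at least~\(\sum_{i=1}^m (z_i - 1) = \|\vec z\| - m\) turns.

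The move count together with facts~(i)--(ii) is routine. The main obstacle is the rigidity analysis of the third paragraph: turning ``every detour routes through~\(w\)'' and ``the remaining agents file through~\(u\) one at a time'' into fully rigorous statements requires squeezing the tight move count carefully and, above all, ruling out that detouring agents of different red edges cycle through one another's temporarily empty leaves. The key enabling observation is that a star with a single blank vertex admits no non-trivial reconfiguration of its tokens, so every cyclic rotation of a red edge genuinely needs the buffer~\(w\); the~\(w\)\=/occupied turns can then be charged to the rotations one red edge at a time.
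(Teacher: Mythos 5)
Your makespan lower bound is correct and takes a genuinely different route from the paper's. The paper argues edge by edge (each red edge needs $z_i+1$ turns, and interleaving two edges costs extra switches of the waiting agent), whereas you count moves globally: a baseline of $2\|\vec z\|-1$ moves, plus two forced extra moves for one agent per red edge (the first agent of each edge to occupy $u$ finds its target leaf held by a same\-/edge agent whose only exit is $u$, so it must vacate $u$ and return), set against the bottleneck that at most one agent of $A(R)$ enters $u$ and at most one leaves $u$ per turn. This is clean and sidesteps the paper's informal ``interleaving is suboptimal'' step. (It does rely on reading the construction as in Fig.~\ref{fig:vc-hardness:red-star} and Recipe~\ref{recipe:vc-hardness:red-edge} --- the agent placed on $u$ has an occupied target --- rather than the literal text of the red\-/star modification, which as written gives two agents the same target; that inconsistency is the paper's, not yours.)

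The gap is in the second conclusion, the bound of $\|\vec z\|-m$ on the occupancy of $w$; you have located it correctly, but the proposed repair does not work. You claim the tight move count forces every detour through $w$ because routing a detour onto another edge's momentarily empty leaf ``would create a second detouring agent in that edge.'' It would not: an agent of $r_i$ that parks on a vacated leaf of $r_j$ makes exactly the same four moves as one that parks on $w$, and the agents of $r_j$ are only \emph{delayed}, not forced to take extra steps --- the agent of $r_j$ whose target is the borrowed leaf simply waits on its own start leaf at zero move cost. So move counting alone cannot distinguish a $w$\-/detour from a foreign\-/leaf detour, and the charging of $z_i-1$ occupied turns of $w$ to each red edge is unproven. (Note also that the claim asserts the $w$\-/bound for every schedule satisfying the hypothesis, not only those of makespan exactly $\ell$.) The paper instead couples the two bounds inside each red edge: it argues that a non\-/final agent of $A(R_i)$ can reach its target only while another agent of $A(R_i)$ waits on $w$, and charges occupied turns of $w$ to the pairwise\-/distinct target\-/arrival turns. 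To finish your version you would need an analogue of that step --- a direct argument that the only usable buffer outside a red edge's own leaves is $w$, e.g., built on your (correct) observation that a star with a single blank admits no nontrivial reconfiguration --- rather than an appeal to the move count.
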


\begin{proofclaim}
  Consider the agents~\(A' = A(R_i)\) for any~\(i \in [m]\) of any red edge.
  From the construction of a red edge (\(R_i\)), no agent of $A'$ can reach their target unless there is an agent $a'\in A'$ waiting in $w$, or there is only one agent of $A'$ that remains unresolved.
  This gives us two conclusions regarding~\(A'\):
  \begin{enumerate}
      \item the number of turns needed to move each~\(a \in A'\) from~\(s_0(a)\) to~\(t(a)\) is at least~\(|A'| + 1\), the~``\(+1\)''~factor comes from an agent that needs to return to~\(u\) from~\(w\), and
      \item the number of turns in which there is an agent occupying~\(w\) is at least~\(|A'| - 1\).
  \end{enumerate}

  Let us see what happens when there are multiple red edges.
  Let~\(A_1 = A(R_i)\) and~\(A_2 = A(R_j)\) for~\(i, j \in [m]\) distinct. Similarly to before, no agent of $A_1$ ($A_2$ resp.) can reach their target unless there is a \textit{waiting agent} $a'_1\in A_1$ ($a'_2\in A_2$ resp.) located in $w$, or there is only one agent of $A_1$ ($A_2$ resp.) that remains unresolved and occupies $u$. 
  In other words, while there is an agent of~\(A_1\) in~\(\{w,u\}\), none of the agents from~\(A_2\) can reach their targets and \textit{vice versa}. So, there are basically two ways to resolve the red star formed by these two red edges: either the agents of $A_1$ are completely resolved before resolving the agents of $A_2$, or the agents of $A_1$ start getting resolved, but this process is interrupted by the resolution of the agents of $A_2$. Clearly, the second way is suboptimal, as it requires at least two extra turns to make the extra switch between the waiting agents of $A_1$ and $A_2$. 
  Generalizing for stars with arbitrary many red edges, we get that the number of turns needed to resolve all the agents of~\(A(R)\) is at least that of resolving each red edge of $R$ before moving on to the next one, which is~\(|A(R)| + m\). Moreover, the number of turns that~\(w\) is occupied is at least~\(|A(R)| - m\).
\end{proofclaim}

We also obtain the following corollary from \Cref{claim:vc-hardness:red-star-opt}.
\begin{corollary}
  \label{crly:vc-hardness:always-occupied}
  In any schedule of makespan~\(\ell\), vertices~\(u_2\) and~\(u_4\) are occupied in turns~\([2, \ell - 2]\).
\end{corollary}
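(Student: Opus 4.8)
The plan is to derive the corollary from \Cref{claim:vc-hardness:red-star-opt} applied to the two red stars $R^1$ and $R^7$, combined with the fact that the makespan is at most $\ell$. First I would record the two relevant instantiations. The star $R^1$ is attached to the leaf $u_1$ of $T$, whose unique neighbor is $u_2$, and its edge sizes are $\vec\beta \in \N^{3n}$ with $\|\vec\beta\| = n\varphi$; thus \Cref{claim:vc-hardness:red-star-opt} says that (provided no agent of $A(R^1)$ ever visits $N_T(u_2)\setminus\{u_1\} = \{u_3,u_6,u_8,u_9\}$) resolving $A(R^1)$ takes at least $\|\vec\beta\| + 3n = \ell$ turns and keeps $u_2$ occupied for at least $\|\vec\beta\| - 3n = n\varphi - 3n$ of them. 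Symmetrically, $R^7$ is attached to the leaf $u_7$ with unique neighbor $u_4$, and with our choice of $\vec z$ we have $\|\vec z\| = n\varphi + 2n$; so (provided no agent of $A(R^7)$ ever visits $N_T(u_4)\setminus\{u_7\} = \{u_3,u_5\}$) resolving $A(R^7)$ takes at least $\|\vec z\| + n = \ell$ turns and keeps $u_4$ occupied for at least $\|\vec z\| - n = n\varphi + n$ of them. The hypotheses of the claim are discharged by an exchange argument: a star agent that leaves its region can only return to its (local) target through $u_2$, resp.\ $u_4$, so such a detour strictly increases the number of turns needed to resolve the star beyond $\ell$, contradicting a makespan of at most $\ell$. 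Since the makespan is at most $\ell$, each of the two stars is in fact resolved in exactly $\ell$ turns and its resolution is forced to follow an optimal pattern; consequently $u_2$ is left unoccupied by $A(R^1)$ in at most $\ell - (n\varphi - 3n) = 6n$ turns, and $u_4$ is left unoccupied by $A(R^7)$ in at most $\ell - (n\varphi + n) = 2n$ turns.

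It then remains to show that every turn of $[2,\ell-2]$ in which the owning star does not occupy $u_2$ (resp.\ $u_4$) is nonetheless covered by some other agent. The key point is that only a very limited pool of agents can ever reach these vertices: using the same exchange argument (and the fact, forced by the bow ties attached to $u_3,u_5,u_6,u_8,u_9$, that agents of $A^X$ and $A^Y$ follow their shortest paths $s_0(a),u_5,u_4,u_3,u_2,u_6,t(a)$ and $s_0(a),u_8,u_2,u_9,t(a)$ respectively, and that bow-tie agents never leave their attachment vertex), the only agents other than $A(R^1)$ that can ever occupy $u_2$ are those of $A^X \cup A^Y$, the only agents other than $A(R^7)$ that can ever occupy $u_4$ are those of $A^X$, and each such agent occupies $u_2$ (resp.\ $u_4$) in exactly one turn. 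Hence at most $|A^X| = 2n-2$ turns can be plugged at $u_4$ and at most $|A^X| + |A^Y| = 6n-2$ turns at $u_2$. Comparing with the previous paragraph ($2n$ turns uncovered by $A(R^7)$ at $u_4$, and $6n$ by $A(R^1)$ at $u_2$), in both cases the discrepancy is exactly $2$, i.e.\ precisely two of the uncovered turns must remain uncovered by any agent. The final step is to argue that these two genuinely free turns are the two extremal ones — the first turn (before any owning-star agent has reached the staging vertex) and a turn in $\{\ell-1,\ell\}$ (after it has left for good) — both of which lie outside $[2,\ell-2]$; equivalently, that any free turn in the interior of the schedule would force one of the $\le 6n-2$ plugging agents to reach its destination only after turn $\ell$, contradicting the makespan bound. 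Once that is established, every interior turn at which $u_2$ or $u_4$ is not occupied by its owning star is plugged by another agent, which is exactly the statement of the corollary.

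I expect the main obstacle to be this last counting-plus-timing step. It is delicate because it requires pinning down the exact timing of the $A^X$- and $A^Y$-traversals (each using a tight shortest path of length $6$, resp.\ $4$, with no room to pause anywhere) against the rigid ``global clock'' that the resolutions of $R^1$ and $R^7$ impose on $u_2$ and $u_4$, and verifying that the gaps these two resolutions leave are interleaved so that a single $A^X$-agent plugging a gap at $u_4$ is in exactly the right position, two turns later, to plug a gap at $u_2$. A secondary technical point, needed already at the outset, is the exchange argument that no agent — star, $A^X$, $A^Y$, or bow-tie — ever profitably strays outside its designated region; this is precisely where the bow ties do their work, and it may be cleanest to isolate it as a preliminary observation before invoking \Cref{claim:vc-hardness:red-star-opt}.
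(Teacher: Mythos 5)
You have assembled the right ingredients — the occupancy lower bounds from \Cref{claim:vc-hardness:red-star-opt} and the $6n-2$ agents of $A^X \cup A^Y$ (resp.\ the $2n-2$ agents of $A^X$) that must cross $u_2$ (resp.\ $u_4$) — but the step you yourself flag as the main obstacle is a genuine gap, and it is one the paper never has to face. You set up the count so that you are left with two ``genuinely free'' turns and then need a timing/interleaving analysis to show they are the extremal ones; that analysis is not carried out, and trying to carry it out would essentially require the global-clock machinery that the paper develops only \emph{after} (and partly \emph{using}) this corollary. The paper closes the argument with a one-line static observation instead: the at least $(n\varphi - 3n) + (6n-2) = \ell - 2$ turns in which $u_2$ is occupied cannot include turns $1$, $\ell-1$ or $\ell$, simply because $\dist(u_2, s_0(A) \cup t(A)) > 1$ and $u_2 \notin t(A)$. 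No agent can be on $u_2$ in the first turn or in the last two, so all the occupied turns are forced into $[2,\ell-2]$ and the corollary follows by counting. In your complementary phrasing this is the missing observation that the extremal turns are \emph{already known} to belong to the free set $F$; combined with your upper bound $|F| \le 2$ this pins $F$ down with no timing argument at all.

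A secondary issue is that several of your intermediate assertions are stronger than what the corollary needs and would themselves require justification: that only $A^X \cup A^Y$ can ever occupy $u_2$, that each such agent occupies it in \emph{exactly} one turn, and that bow-tie agents never stray — all of this leans on the bow-tie/shortest-path forcing that the paper establishes later for other purposes. For the corollary the only fact needed about these agents is that $u_2$ (resp.\ $u_4$) is a cut vertex separating their sources from their targets, so each must occupy it in \emph{at least} one turn; since no two agents share a vertex in a turn, these visits contribute $6n-2$ (resp.\ $2n-2$) further distinct occupied turns on top of the star's, and the count closes.
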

    Indeed, \Cref{claim:vc-hardness:red-star-opt} shows that~\(u_2\) is occupied in at least~\(n\varphi - 3n\)~turns by agents~\(A(R^1)\).
    Each agent~\(a \in A^X \cup A^Y\), of which there are~\(6n - 2\), needs to visit~\(u_2\) at least once to get to~\(t(a)\).
    This means that there are at least~\(n\varphi -3n + 6n - 2 = n\varphi + 3n - 2 = \ell -2\)~turns during which there is an agent who is occupying~\(u_2\).
    And none of these~\(n\varphi + 3n - 2\)~turns can be turns~1 or~\(\ell - 1\) as~\(\dist(u_2, s_0(A) \cup t(A)) > 1\), and it cannot be turn~\(\ell\) as~\(u_2 \not\in t(A)\). The argument for~\(u_4\) is analogous.
\medskip

Now, we want to prove that the assumptions of \Cref{claim:vc-hardness:red-star-opt} are satisfied.
First let us observe the following about the behavior of the agents of the bow ties in schedule~\(s\). Recall that \(A_T(u)\) is the set of agents introduced by a bow tie attached on a vertex $u\in U$.
\begin{observation}
  \label{obs:vc-hardness:bowtie-bound}
  Consider any bow tie of size $b$ attached to a vertex $u\in U$. 
  There are at most~\(\ell - b - 1\)~turns during which an agent in~\(A \setminus A_T(U)\) can enter~\(u\).
\end{observation}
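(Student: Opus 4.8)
The plan is a counting argument over the turns of a fixed schedule~$s$ of makespan at most~$\ell$. Since $s$ can be padded with stationary moves to makespan exactly~$\ell$ (no agent's target is~$u$, so this changes nothing at~$u$), I would work with the $\ell+1$ turns $0,1,\dots,\ell$. At each turn, $u$ hosts at most one agent, so every turn falls into exactly one of three classes: $u$ is empty; $u$ is occupied by a bow\-tie agent (an agent of~$A_T(U)$); or $u$ is occupied by an agent of~$A\setminus A_T(U)$. It then suffices to show that the first two classes together account for at least $b+2$ turns, since the third class then has at most $\ell+1-(b+2)=\ell-b-1$ turns.

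First I would check that $u$ is empty in turns~$0$ and~$\ell$. For turn~$0$: inspecting the construction, the only agents whose starting vertices lie in~$U$ are the modified first agents of the red stars~$R^1$ and~$R^7$, sitting on~$u_1$ and~$u_7$, and neither of those vertices carries a bow tie; hence no agent starts on a bow\-tie vertex~$u$. For turn~$\ell$: every agent sits on its target, and every target produced by the reduction (for red edges, red stars, $A^X$, $A^Y$, and bow ties alike) is a degree\-one vertex of~$G$, whereas $u$ is one of the nine internal vertices (Observation~\ref{lemma:vc-hardness:small-vc}); so no agent targets~$u$.

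Next I would lower\-bound the number of turns in which $u$ is occupied by a bow\-tie agent. Each of the $b$ agents $a_1,\dots,a_b$ of the bow tie attached to~$u$ starts on a leaf~$w_i$ and must reach a leaf~$w_{b+i}$, and both of these have $u$ as their unique neighbour; hence $u$ is a cut vertex separating $s_0(a_i)$ from~$t(a_i)$, so $a_i$ occupies~$u$ in at least one turn, which by the previous paragraph lies in~$[1,\ell-1]$. Since distinct agents never share a vertex in the same turn, the turn\-sets during which the individual~$a_i$ occupy~$u$ are pairwise disjoint, yielding at least $b$ such turns, all different from~$0$ and~$\ell$.

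Combining the two steps, at least $b+2$ of the $\ell+1$ turns are unavailable to agents of~$A\setminus A_T(U)$, leaving at most $\ell-b-1$ turns in which such an agent occupies~$u$; in particular a non\-bow\-tie agent can enter~$u$ in at most $\ell-b-1$ turns, as claimed. I do not expect a genuine obstacle: the argument is just the pigeonhole split above, and the only place needing care is the bookkeeping in the two ``inspection'' facts (no agent starts on, and no agent targets, a bow\-tie vertex), both of which are immediate from the gadget definitions.
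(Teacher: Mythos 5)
Your proposal is correct and follows essentially the same pigeonhole argument as the paper: each of the $b$ bow-tie agents must occupy~$u$ at a distinct turn, and $u$ is unoccupied in the final turn since it is no agent's target, leaving at most $\ell-b-1$ turns for other agents. Your extra bookkeeping step (noting $u$ is also empty at turn~$0$ and counting $\ell+1$ turns instead of $\ell$ entering-turns) nets out to the identical bound.
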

Indeed, each agent~\(a \in A_T(u)\) has to enter~\(u\) at least once to get to~\(t(a)\). Also, since~\(u\) is not a target of any agent, \(u\) is unoccupied in turn~\(\ell\).

\smallskip

We stress that \Cref{obs:vc-hardness:bowtie-bound} does not prohibit the existence of a bow tie attached to a vertex~\(u \in U\) and an agent~\(a \in A_T(u)\) that goes to~\(t(a)\) via a detour that includes a vertex~\(N_T(u)\) and then returns to~\(u\) in order to get to~\(t(a)\).\todo{DK: I am not sure a fully understand this sentence.}
But that will not happen in our case.
Let~\(u \in \{u_3, u_5, u_6\}\).
For a bow tie attached to~\(u\), \Cref{obs:vc-hardness:bowtie-bound}~gives~\(2n - 1\)~turns during which $u$ is not occupied by an agent of~\(A_T(u)\); let us call these the \textit{available turns}.
One of the available turns is the last turn of~\(s\), so there are actually~\(2n - 2\)~available turns during which an agent of~\(A \setminus A_T(u)\) can stay in~\(u\).
Each agent of~\(A^X\) has to pass through~\(u\) to get to their target, meaning that $u$ should be occupied by an agent of $A^X$ during each of the available turns. 
A similar argument applies for~\(u_8\)~and~\(u_9\) and agents~\(A^Y\).
Thus, none of the agents~\(A(R^1)\) can enter~\(N_T(u_2) \setminus \{u_1\}\) and none of the agents~\(A(R^7)\) can enter~\(N_T(u_4) \setminus \{u_7\}\), and this shows that the assumptions of \Cref{claim:vc-hardness:red-star-opt} are satisfied.

As agents of~\(A^X\) and~\(A^Y\) have to pass through~\(u_2\) to get to their targets and~\(|A^X| + |A^Y| = 6n - 2\), we now know that agents of~\(A(R^1)\) can occupy~\(u_2\) for at most~\(\ell - (6n - 2) = n\varphi - 3n + 2\)~turns.
Moreover, as~\(\dist(u_2, t(A)) > 1\), then the upper bound can be decreased to~\(n\varphi - 3n\) since in the last two turns of~\(s\) there cannot be an agent in~\(u_2\).
An analogous argument shows that~\(u_4\) can be occupied by~\(A(R^7)\) in at most~\(n\varphi + n\)~turns.
This will be helpful to show that both~\(A(R^1)\) and~\(A(R^7)\) have to be resolved by~\(s\) using Recipe~\ref{recipe:vc-hardness:red-star}.
\begin{claim}
  \label{claim:vc-hardness:continuous-agent-resolution}
  The only way for~\(s\) (of makespan~\(\ell\)) to resolve the agents of~\(A(R^1)\), assuming that they can only enter vertices~\(u_1\) and~\(u_2\) of~\(U\), and that they can spend at most~\(n\varphi - 3n\)~turns in~\(u_2\), is if they follow Recipe~\ref{recipe:vc-hardness:red-star} from the first turn of $s$.
\end{claim}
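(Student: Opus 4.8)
The plan is to use \Cref{claim:vc-hardness:red-star-opt} in a ``rigidity'' mode: the two hypotheses force every inequality in that claim to be tight, and the only schedules meeting those bounds are instantiations of Recipe~\ref{recipe:vc-hardness:red-star}. First I would apply \Cref{claim:vc-hardness:red-star-opt} with $R = R^1$, $\vec{z} = \vec{\beta}$, $m = 3n$, $u = u_1$ and $w = u_2$; this is legal since the hypothesis that $A(R^1)$ enters only $u_1$ and $u_2$ among $U$ implies in particular that $A(R^1)$ never visits $N_T(u_2) \setminus \{u_1\} = \{u_3, u_6, u_8, u_9\}$. The claim then gives that resolving $A(R^1)$ needs at least $\|\vec{\beta}\| + 3n = \ell$ turns and occupies $u_2$ for at least $\|\vec{\beta}\| - 3n = n\varphi - 3n$ turns. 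Since $s$ has makespan at most $\ell$, the first bound forces the first move of some agent of $A(R^1)$ to happen in turn $1$, all agents of $A(R^1)$ to be resolved exactly in turn $\ell$, and no idle turn in between; together with the hypothesis that $A(R^1)$ spends at most $n\varphi - 3n$ turns in $u_2$, the second bound becomes an equality as well. I would also note that, since in $s$ the agents of $A(R^1)$ never collide with agents outside $A(R^1)$, the restriction of $s$ to $A(R^1)$ is a valid schedule on the subgraph induced by $u_1$, $u_2$, and the vertices added by $R^1$, so the rest of the argument may be carried out there.

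Next I would extract the combinatorial structure that tightness forces, by revisiting the proof of \Cref{claim:vc-hardness:red-star-opt}. The bound $\sum_i (\beta_i - 1) = \|\vec{\beta}\| - 3n$ on $u_2$-occupation (and the $\ell$ bound on the total number of turns) is attained only if the $3n$ red edges of $R^1$ are resolved one at a time: for every $i$ there is a contiguous block of turns during which only agents of $A(R^1_i)$ move toward their targets, and the blocks occur in some order $\pi$ of $[3n]$. Indeed, interleaving two red edges incurs an extra exchange of the ``waiting agent'' sitting in $u_2$, costing at least two additional turns, which tightness rules out. I would also argue that $\pi_1 = 1$: the agent $a$ with $s_0(a) = u_1$ blocks the hub, turn $1$ must be productive so $a$ must leave $u_1$ in turn $1$, and the only neighbor of $u_1$ it can reach is $u_2$; resolving some other red edge before $R^1_1$ would then either leave $a$ parked in $u_2$ during that resolution (a forbidden double use of $u_2$) or force $a$ into a wasted round trip, both contradicting tightness. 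Hence the block order is a permutation $\pi$ with $\pi_1 = 1$, which is exactly the parameter expected by Recipe~\ref{recipe:vc-hardness:red-star}.

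It then remains to show that inside each block $s$ performs precisely Recipe~\ref{recipe:vc-hardness:red-edge}. Fix the block for $R^1_i$ of size $c = \beta_i$. Within it the $c$ agents carry out the directed cyclic shift $v_1 \to v_2 \to \cdots \to v_c \to v_1$ using only $u_1$ and $u_2$ as scratch space, and tightness forces the block to have the minimum possible length, namely the $(c+2)$-turn pattern of Recipe~\ref{recipe:vc-hardness:red-edge} shortened by the one-turn overlaps built into Recipe~\ref{recipe:vc-hardness:red-star} at each block boundary. Since $u_1$ is the unique cut vertex separating the vertices of $R^1_i$ from $u_2$ and hosts at most one agent per turn, and since once a single agent has been parked in $u_2$ exactly one of $v_1, \ldots, v_c$ is free, at most one agent of $A(R^1_i)$ can make a productive step per turn, and the cyclic structure forces the sequence of those steps to be unique. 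Minimality of the block length then excludes every wasteful alternative---an idle turn, a back-and-forth move, a second agent entering $u_2$, or the parked agent leaving $u_2$ too early. Concatenating the per-block schedules with the forced boundary overlaps is, by inspection, exactly Recipe~\ref{recipe:vc-hardness:red-star} applied from turn $1$ with the permutation $\pi$.

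The main obstacle is this last step: turning the slogan ``a directed cyclic shift of $c$ tokens around a hub with one spare vertex has a unique time-optimal schedule'' into a watertight argument that rules out all alternative maneuvers. I would formalize it with a potential argument---tracking the number of still-unresolved agents of $A(R^1_i)$ together with the cut-vertex bottleneck at $u_1$---to show that the potential can drop by at most one per turn and only when the rotation advances in the canonical order; since tightness forces this drop in every turn of the block, $s$ is pinned down to Recipe~\ref{recipe:vc-hardness:red-edge}. A minor secondary point is the bookkeeping of the one-turn overlaps at block boundaries so that the concatenation literally equals Recipe~\ref{recipe:vc-hardness:red-star}, which is routine once the per-block rigidity is in place.
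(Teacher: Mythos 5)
Your proposal follows essentially the same route as the paper: invoke \Cref{claim:vc-hardness:red-star-opt} for the makespan and hub-occupancy lower bounds, use tightness of the makespan~$\ell$ to force the red edges of~$R^1$ to be resolved in contiguous blocks with $R^1_1$ first (any interleaving or deferral of the agent starting on~$u_1$ costing at least two extra turns of occupancy), and conclude that the schedule matches Recipe~\ref{recipe:vc-hardness:red-star}. The per-block rigidity you flag as the main remaining obstacle is in fact also left implicit in the paper's proof, so your proposed potential-function argument is, if anything, more explicit than what the paper provides.
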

\begin{proofclaim}
  Let~\(A' = A(R^1)\) and~\(A'_i = A(R^1_i)\) for every~\(i \in [3n]\).
  Due to the lower bound from \Cref{claim:vc-hardness:red-star-opt} on the number of turns needed to resolve~\(A'\), we know that~\(s\) has to resolve~\(A'\) from the first turn.

  The lower bound of~\(n\varphi + 3n\) on the occupancy of~\(u_1\) from \Cref{claim:vc-hardness:red-star-opt} also shows that once an agent arrives to their target they will not return to~\(u_1\) as doing so would extend the makespan beyond~\(\ell\).
  Note that in each turn there can only be one agent of~\(a \in A'\) that arrives to~\(t(a)\) since~\(\cup_{v \in s_0(A')} N(v) = u_1\).
  Hence we can define~\(\tau(a)\) for every~\(a \in A'\) to be the turn in which~\(a\) arrives in~\(t(a)\) and we know for distinct~\(a', a'' \in A'\) values~\(\tau(a')\) and~\(\tau(a'')\) are also distinct.
  For~\(i \in [3n]\) let~\(\tau_i\) be the~\(i\)\=/th agent to arrive to their target.

  We want to prove that~\(s\) has to resolve the agents~\(A'_i\) one by one (i.e., consecutively with no breaks).
  Suppose not, and assume that there exists an~\(A'_i\) such that there exist distinct~\(p, q \in [|A'|]\) with~\(p < q - 1\) where~\(\tau_p, \tau_q \in A'_i\) but for each~\(r \in [p + 1, q - 1]\) we have that  \(\tau_r \not\in A'_i\).
  Let~\(p'\) be the number of agents of~\(A'_i\) that are resolved by turn~\(p\).
  Whenever there is a turn~\(\rho\) such that there exists an~\(a \in A'_i\) whose path to~\(t(a)\) is not blocked, there must exist an~\(a' \in A'_i\) with~\(s_{\rho}(a') = u_2\).
  From the proof of \Cref{claim:vc-hardness:red-star-opt} it follows that~\(u_1\) is occupied by agents of~\(A'_i\) for at least~\(p' + 1\)~turns by turn~\(\tau(\tau_p)\).
  And from turn~\(\tau(\tau_q)\) on, vertex~\(u_1\) must again be occupied by agents of~\(A'_i\) for at least~\(|A'_i| - p' + 1\)~turns.
  But these two amounts together imply that agents of~\(A'_i\) occupy~\(u_1\) for at least~\(|A'_i| + 2\)~turns.
  The existence of such an~\(A'_i\) in combination with the lower bound of \Cref{claim:vc-hardness:red-star-opt} shows that the makespan of~\(s\) is strictly more than~\(\ell\).

  Now we know that the agents~\(A'_i\) are resolved one by one.
  We also need to show that~\(A'_1\) is the first one to be resolved.
  If another~\(A'_j\) with~\(j \neq 1\) is resolved first, then this means that the agent~\(a \in A'_1\) with~\(s_0(a) = u_1\) has to be moved elsewhere.
  But then when it is the turn of the agents of~\(A'_1\) to be resolved, then they will occupy~\(u_1\) for at least~\(|A'_1| + 2\)~turns where~\(|A'_1| +1\) turns are spent by following \Cref{recipe:vc-hardness:red-star} and one more turn for the~0\=/th~turn.
  This in combination with \Cref{claim:vc-hardness:red-star-opt} contradicts that the makespan of~\(s\) is~\(\ell\).
\end{proofclaim}

We can also prove an analogue of \Cref{claim:vc-hardness:continuous-agent-resolution} for agents~\(A(R^7)\).
We will additionally show that the last agent group~\(A(R^7_n)\) is the last one resolved among~\(A(R^7)\).
We need to make an addition for~\(R^7\) because we want to show that the red edge~\(R^7_n\) of size~\(\varphi + 4\) is the last one to be resolved.
\begin{claim}
  The only way for~\(s\) (of makespan~\(\ell\)) to resolve the agents of~\(A(R^7)\), assuming that they can only enter the vertices~\(u_7\) and~\(u_4\) of~\(U\), and that they spend at most~\(n\varphi + n\)~turns in~\(u_4\), is if they follow \Cref{recipe:vc-hardness:red-star} where~\(A(R^7_n)\) is the last agent group resolved among the agents of~\(A(R^7)\).
\end{claim}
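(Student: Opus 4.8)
The plan is to mirror the proof of \Cref{claim:vc-hardness:continuous-agent-resolution} for the first assertion and then add a synchronisation argument for the part about $A(R^7_n)$. For the first part---that $s$ resolves $A(R^7)$ from the first turn, resolves the groups $A(R^7_i)$ one at a time with no interleaving, and resolves $A(R^7_1)$ first---I would repeat the argument of \Cref{claim:vc-hardness:continuous-agent-resolution} verbatim, with $u_7$ and $u_4$ playing the roles of $u_1$ and $u_2$ and $A(R^7)$ playing the role of $A(R^1)$: indeed $\|\vec z\|+n=\ell$, so \Cref{claim:vc-hardness:red-star-opt} applies to $R^7$ with the same tightness, the hypothesis ``at most $n\varphi+n$ turns in $u_4$'' matches its lower bound $\|\vec z\|-n$ exactly, and the agent of $A(R^7_1)$ that starts on $u_7$ forces $A(R^7_1)$ to be resolved first. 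It then remains to show that the group $A(R^7_n)$---the one coming from the red edge of size $\varphi+4$---is the \emph{last} one resolved.

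The idea for that is to synchronise the resolution of $R^7$ with that of $R^1$ through the agents $A^X$. Let $\pi$ (resp.\ $\iota$) be the order in which $s$ resolves the red edges of $R^7$ (resp.\ $R^1$), so $\pi_1=1$ and, by \Cref{claim:vc-hardness:continuous-agent-resolution}, $\iota_1=1$; write $c_j$ for the size of the $j$\=/th resolved edge of $R^7$, so $c_1=\varphi$ and $c_j\in\{\varphi+2,\varphi+4\}$ for $j\ge2$. Since \Cref{claim:vc-hardness:red-star-opt} is tight for $R^7$, vertex $u_4$ is occupied by $A(R^7)$ for exactly $\|\vec z\|-n$ turns; by the structure of \Cref{recipe:vc-hardness:red-edge,recipe:vc-hardness:red-star} (with the first-edge optimisation), this forces the turns in which $u_4$ is free from $A(R^7)$ to consist, apart from the irrelevant initial turn, of exactly $n$ pairs of consecutive turns---the $j$\=/th pair occurring immediately after $A(R^7_{\pi_j})$ is resolved, and the $n$\=/th pair being $\{\ell-1,\ell\}$. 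A counting argument on $[2,\ell-2]$, using the bow ties at $u_3,u_5,u_6$ exactly as in the earlier part of this proof, then shows that each of the $2n-2$ agents of $A^X$ occupies $u_4$ for exactly one turn, which is necessarily $\le\ell-4$ and so lies in one of the first $n-1$ pairs; since those pairs comprise exactly $2n-2$ turns, every turn of pairs $1,\dots,n-1$ is occupied by a distinct agent of $A^X$. Each such agent visits $u_2$ exactly two turns after it visits $u_4$, so the two agents occupying the $j$\=/th pair force $u_2$ to be unoccupied by $A(R^1)$ during two consecutive turns; as the turns in which $u_2$ is free from $A(R^1)$ also come in consecutive pairs separated by long blocks of occupancy, the $j$\=/th $u_4$\=/free pair of $R^7$, shifted forward by $2$, must coincide exactly with some $u_2$\=/free pair of $R^1$.

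Spelling out positions, \Cref{recipe:vc-hardness:red-edge,recipe:vc-hardness:red-star} place the $j$\=/th free pair of a star right after $\big(\sum_{i\le j}(\text{size of the }i\text{-th resolved edge})\big)+j-1$ turns, so, writing $\sigma(j)$ for the (strictly increasing) index of the $u_2$\=/free pair of $R^1$ matched to the $j$\=/th $u_4$\=/free pair of $R^7$, the coincidence reads $\sum_{i\le\sigma(j)}\beta_{\iota_i}+\sigma(j)=\big(\sum_{i\le j}c_i\big)+j+2$. For $j=1$ this is $\sum_{i\le\sigma(1)}\beta_{\iota_i}=\varphi+3-\sigma(1)$, and $\varphi/4<\beta_i<\varphi/2$ forces $\sigma(1)=3$, hence $\beta_{\iota_1}+\beta_{\iota_2}+\beta_{\iota_3}=\varphi$. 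For $j\ge2$, subtracting the equation for $j-1$ shows that the block of $p:=\sigma(j)-\sigma(j-1)$ new terms satisfies $\sum_{i=\sigma(j-1)+1}^{\sigma(j)}\beta_{\iota_i}=c_j+1-p$, and the same bounds (with $\varphi+2\le c_j\le\varphi+4$) force $p\in\{3,4\}$. If $c_j=\varphi+4$, then $p=3$ would make three distinct $\beta$'s sum to $\varphi+2$, contradicting \Cref{claim:vc-hardness:3-partition-multiple} (applicable because $\beta_{\iota_1}+\beta_{\iota_2}+\beta_{\iota_3}=\varphi$), while $p=4$ would make four $\beta$'s sum to $\varphi+1$, impossible since every $\beta_i$, and hence $\varphi$, is a multiple of $6$. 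Therefore $c_j\neq\varphi+4$ for all $j\le n-1$; since $c_1=\varphi$ and exactly one red edge of $R^7$ has size $\varphi+4$, that edge is resolved in position $n$, i.e.\ $A(R^7_n)$ is resolved last.

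The step I expect to be the main obstacle is the second paragraph: making the bow-tie counting precise enough to pin down that each $A^X$ agent occupies $u_4$ for exactly one turn that is at most $\ell-4$, and then justifying rigorously that these agents cannot avoid aligning the $u_4$\=/free pairs of $R^7$ with the $u_2$\=/free pairs of $R^1$ at an offset of exactly $2$---in particular that both families genuinely consist of pairs of \emph{consecutive} turns rather than longer runs. Once the synchronisation identity is in place, the remaining arithmetic---the interplay of the $(\varphi/4,\varphi/2)$ window, divisibility by $6$, and \Cref{claim:vc-hardness:3-partition-multiple}---is routine, but it must be tracked carefully so that the forbidden sums come out as exactly $\varphi+2$ and $\varphi+1$ and not neighbouring values.
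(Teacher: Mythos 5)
Your proposal is correct and follows essentially the same route as the paper: the first assertion is obtained by repeating the argument of \Cref{claim:vc-hardness:continuous-agent-resolution} with $u_7, u_4$ in place of $u_1, u_2$, and the ``last group'' assertion by synchronizing the resolution windows of $R^7$ with those of $R^1$ (through the $A^X$ agents and the global clock) and observing that a mid-schedule window accommodating the size-$(\varphi+4)$ edge would force three (or four) of the $\beta_i$'s to sum to a value in $[\varphi+1,\varphi+5]$, which \Cref{claim:vc-hardness:3-partition-multiple} and divisibility by~6 rule out. The paper's version is terser---it phrases the contradiction directly as ``no three agent groups can total $\varphi+4$'' inside the $(\varphi+5)$-turn window---whereas your explicit prefix-sum identity and the separate $p=4$ case spell out details the paper delegates to the surrounding text; in particular, the offset-2 alignment you flag as the main obstacle is exactly what the bow-tie counting and \Cref{crly:vc-hardness:always-occupied} establish there.
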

\begin{proofclaim}
  We can prove that the schedule obeys \Cref{recipe:vc-hardness:red-star} analogously to the proof of \Cref{claim:vc-hardness:continuous-agent-resolution}.
  We want to additionally prove that the agents of~\(A(R^7_n)\) are the last agent group to be resolved among the agents of~\(A(R^7)\).
  Suppose not and assume that~\(A(R^7_n)\) are resolved as the \(i\)\=/th agent group among the agents of~\(A(R^7)\).
  By applying \Cref{recipe:vc-hardness:red-star}, this would mean that~\(A(R^7_n)\) are resolved in turn~\(\varphi + 1 + (i - 2)(\varphi + 3) + \varphi + 5 = i\varphi + 3i\) and the agent group resolved before~\(A(R^7_n)\) is resolved in turn~\(i\varphi + 4i - (\varphi + 5) = i\varphi + 3i - \varphi - 5\).
  During the~\(\varphi + 5\) turns between the resolutions of~\(A(R^7_n)\) and the preceding agent group, the maximum number of agents that can be resolved is~\(\varphi + 4\) by \Cref{recipe:vc-hardness:red-edge}.
  Due to~\(\varphi / 4 < \beta_i < \varphi / 2\) for each~\(i \in [3n]\) we know that there are exactly three agent groups that are resolved between the resolution of the agents of~\(A(R^7_n)\) and the preceding agent group.
  And due to \Cref{claim:vc-hardness:3-partition-multiple} there are no three agent groups whose total size can be~\(\varphi + 4\).
  Therefore a schedule which does not resolve~\(A(R^7_n)\) the last among the agent groups of~\(A(R^7)\) necessarily wastes some turns during which it could be resolving agents, which in combination with \Cref{crly:vc-hardness:always-occupied} and~\Cref{claim:vc-hardness:red-star-opt} (depending on whether in the wasted turns the agent returning from~\(u_4\) goes straight to a vertex of~\(V(R^7)\) or stays a few turns in~\(u_7\)) contradicts that the makespan of~\(s\) at most~\(\ell\).
\end{proofclaim}

By \Cref{recipe:vc-hardness:red-star} we know that~\(u_4\) is unoccupied by the agents of~\(A(R^7)\) in turns~\(i(\varphi + 3) - 3\) and~\(i(\varphi + 3) - 2\) for every~\(i \in [n]\) and these agents must occupy~\(u_4\) in these turns due to \Cref{claim:vc-hardness:red-star-opt}. This shows that \Cref{recipe:vc-hardness:red-star} has to be applied from the beginning.
This allows~\(2n\)~turns for the agents of~\(A^X\) to get to their targets.
But~2 of these~\(2n\)~turns are actually turns~\(\ell - 1\) and~\(\ell\).
If an agent of~\(A^X\) tries to enter~\(u_4\) in either of these turns, then they will not make it to their targets by turn~\(\ell\).
Thus there are~\(2n - 2\)~turns during which the agents of~\(A^X\) can enter~\(u_4\).
If they do so, then they will reach~\(u_2\) in turns~\(i(\varphi + 3) - 1\) and~\(i(\varphi + 3)\) for every~\(i \in [n]\).
This in particular means that the agents of~\(A(R^1)\) are not in~\(u_2\) in the turns~\(i(\varphi + 3) - 1\) and~\(i(\varphi + 3)\).
It then follows by \Cref{claim:vc-hardness:continuous-agent-resolution} and \Cref{recipe:vc-hardness:red-star} that the total number agents of~\(A(R^1)\) that are resolved between turns~\((i - 1)(\varphi + 3)\) and~\(i(\varphi + 3)\) is~\(\varphi\) (not~\(\varphi + 3\) as \Cref{recipe:vc-hardness:red-star} requires~\(|A(R^1_i)| + 1\)~turns to resolve the agents of~\(A(R^1_i)\) for any~\(i \in [3n]\)).

To sum up, we obtain that according to any schedule $s$ of makespan at most $\ell$, each agent group we have defined has to be resolved following one of our Recipes. This, in combination with \Cref{lemma:vc-hardness:forward} and \Cref{lemma:vc-hardness:small-vc}, concludes the proof of this theorem.
\end{proof}

\subsection{Trees With Few Leaves}

This subsection is dedicated to the proof of Theorem~\ref{thm:pancake}. This proof is achieved through a reduction from the \PN{Pancake Flipping} problem, which is known to be \CC{NP}-hard~\cite{BFR15}:
as input we receive a permutation $\pi = \pi_1, \dots, \pi_n$ of the set~$[n]$ together with a positive integer~$k$.
The question is whether $\pi$ can be sorted using $k$~prefix reversals, i.e., the operation of replacing a prefix of arbitrary length with its reversal.
Note that we can assume that $k$ is at most $\frac{18}{11} n$, otherwise it is trivially a yes-instance since every permutation of length~$n$ can be sorted using at most $\frac{18}{11} n$ prefix reversals~\cite{CFMMSSV09}.

\thmNpTrees*
\begin{proof}
Let $\langle \pi, k\rangle$ be an instance of \PN{Pancake Flipping} where $\pi$ is a permutation of length~$n$ and set $n^+ \coloneqq n + 2$, $L \coloneqq 3n^+k$.
Let us first describe the high level idea of the reduction before getting into all the technical details.
We start with a description of a graph~$G'$ that forms the backbone of the final graph in the \MAPFShort instance.
The graph $G'$ consists of three paths, referred to as \emph{$A$-path}, \emph{$B$-path} and \emph{$C$-path}, with their endpoints connected to a central vertex~$v^\star$.
The $A$-path is of length~$n$, whereas both $B$- and $C$-paths are of length $L$.
Let us denote the vertices of the $A$-path as $v^A_0,\dots, v^A_n$, and the vertices of the $B$-path and $C$-path as $v^B_0, \dots, v^B_L$ and $v^C_0, \dots, v^C_L$ respectively, always starting from the neighbors of $v^\star$ outwards.

There are exactly $n$~\emph{primary agents} $a_1, \dots, a_n$ that start on the vertices of the $A$-path in the order given by~$\pi$, i.e., we set $s_0(a_i) = v^A_{\pi_i}$ for each $i \in [n]$.
Their final destinations lie also on the $A$-path, this time in the increasing order of their indices, i.e., $t(a_i) = v^A_i$ for each $i \in [n]$.
By the inclusion of many additional \emph{auxiliary agents}, we will enforce that any optimum schedule has makespan~$L$ and moreover, it can be divided into exactly $k$~rounds with each round consisting of three distinct phases --- \emph{push}, \emph{reverse} and \emph{pop}.
Each phase lasts exactly $n^+$ time steps.

In the push phase, the vertex~$v^C_0$ is blocked and arbitrary prefix of primary agents with respect to the distance from~$v^\star$ can move from the $A$-path to the $B$-path.
In the reverse phase, the vertex~$v^A_0$ is blocked and thus, the $A$-path is separated from the $B$- and $C$-paths.
Primary agents can move from the $B$-path to the $C$-path, effectively reversing their order with respect to the distance from~$v^\star$.
Finally in the pop phase, the vertex~$v^B_0$ is blocked, separating the $B$-path from the rest of the graph.
All the primary agents located on the $C$-path now return to the $A$-path in reversed order.
Moreover, auxiliary agents enforce that no primary agent can reside on the $C$-path during the push phase and similarly, that no primary agent can wait on the $B$-path during the pop phase.
In this way, the primary agents can alter their order along the $A$-path in each round by performing a prefix reversal.
Consequently, the primary agents can rearrange themselves into their final destinations on the $A$-path if and only if $\pi$ can be sorted with $k$~prefix reversals.

\paragraph{Reduction.}
Let us now describe the reduction in full detail, starting with the construction of the graph.
The final graph~$G$ is obtained by adding to~$G'$ four \emph{auxiliary paths} of length~$2L$, each connected with an edge to  one of the vertices $v^A_0$, $v^B_0$ or $v^C_0$.
Namely, we add
\begin{enumerate}
\item a path with vertices $u^A_{-L}, \dots, u^A_0, \dots, u^A_L$ that is connected by an edge $u^A_{-1} v^A_0$,
\item a path with vertices $w^A_{-L}, \dots, w^A_0, \dots, w^A_L$ that is connected by an edge $v^A_0 w^A_1$,
\item a path with vertices $u^B_{-L}, \dots, u^B_0, \dots, u^B_L$ that is connected by an edge $u^B_{-1} v^B_0$, and
\item a path with vertices $u^C_{-L}, \dots, u^C_0, \dots, u^C_L$ that is connected by an edge $u^C_{-1} v^C_0$.
\end{enumerate}

Observe that $G$ is indeed a tree with exactly 11~leaves.
See Fig.~\ref{fig:pancake}.

\begin{figure}[!t]
\centering

\begin{tikzpicture}[scale=0.6, inner sep=0.6mm]

    \node[draw, circle, line width=1pt, color=red, fill=white](va1) at (-1,2)[label=left: { $v_n^A$}] {};
    \node[draw, circle, line width=1pt, color=red, fill=white](va2) at (3,2)[] {};
    \node[draw, circle, line width=1pt, color=red, fill=white](va3) at (5,2)[label=above right: { $v_0^A$}] {};
    \node[draw, circle, line width=0.5pt, fill=white](va4) at (0,0)[label=left: { $u_L^A$}] {};
    \node[draw, circle, line width=0.5pt, fill=white](va5) at (1,0)[] {};
    \node[draw, circle, line width=0.5pt, fill=white](va6) at (4,0)[] {};
    \node[draw, circle, line width=0.5pt, fill=white](va7) at (5,0)[label=right: { $u_{-1}^A$}] {};
    \node[draw, circle, line width=0.5pt, fill=white](va8) at (5,-1)[] {};
    \node[draw, circle, line width=0.5pt, fill=white](va9) at (5,-4)[] {};
    \node[draw, circle, line width=0.5pt, fill=white](va10) at (5,-5)[label=right: { $u_{-L}^A$}] {};

    \draw[thick middle dotted line] (va1) -- (va2);
    \draw[-, line width=1pt, color=red]  (va2) -- (va3);
    \draw[-, line width=0.5pt]  (va3) -- (va7);
    \draw[-, line width=0.5pt]  (va4) -- (va5);
    \draw[middle dotted line] (va5) -- (va6);
    \draw[-, line width=0.5pt]  (va6) -- (va7);
    \draw[-, line width=0.5pt]  (va7) -- (va8);
    \draw[middle dotted line] (va8) -- (va9);
    \draw[-, line width=0.5pt]  (va9) -- (va10);

    \node[draw, circle, line width=0.5pt, fill=white](w1) at (0,4)[label=left: { $w_{-L}^A$}] {};
    \node[draw, circle, line width=0.5pt, fill=white](w2) at (1,4)[] {};
    \node[draw, circle, line width=0.5pt, fill=white](w3) at (4,4)[] {};
    \node[draw, circle, line width=0.5pt, fill=white](w4) at (5,4)[label=right: { $w_{1}^A$}] {};
    \node[draw, circle, line width=0.5pt, fill=white](w5) at (5,5)[] {};
    \node[draw, circle, line width=0.5pt, fill=white](w6) at (5,8)[] {};
    \node[draw, circle, line width=0.5pt, fill=white](w7) at (5,9)[label=right: { $w_{L}^A$}] {};

    \draw[-, line width=0.5pt]  (w1) -- (w2);
    \draw[middle dotted line] (w2) -- (w3);
    \draw[-, line width=0.5pt]  (w3) -- (w4);
    \draw[-, line width=0.5pt]  (w4) -- (va3);
    \draw[-, line width=0.5pt]  (w4) -- (w5);
    \draw[middle dotted line] (w5) -- (w6);
    \draw[-, line width=0.5pt]  (w6) -- (w7);

    \node[draw, circle, line width=1pt, color=red, fill=white](vb1) at (8,11)[label=right: { $v_L^B$}] {};
    \node[draw, circle, line width=1pt, color=red, fill=white](vb2) at (8,7)[] {};
    \node[draw, circle, line width=1pt, color=red, fill=white](vb3) at (8,5)[label=left: { $v_0^B$}] {};
    \node[draw, circle, line width=0.5pt, fill=white](vb4) at (10,10)[label=right: { $u_L^B$}] {};
    \node[draw, circle, line width=0.5pt, fill=white](vb5) at (10,9)[] {};
    \node[draw, circle, line width=0.5pt, fill=white](vb6) at (10,6)[] {};
    \node[draw, circle, line width=0.5pt, fill=white](vb7) at (10,5)[label=below: { $u_{-1}^B$}] {};
    \node[draw, circle, line width=0.5pt, fill=white](vb8) at (11,5)[] {};
    \node[draw, circle, line width=0.5pt, fill=white](vb9) at (14,5)[] {};
    \node[draw, circle, line width=0.5pt, fill=white](vb10) at (15,5)[label=right: { $u_{-L}^B$}] {};

    \draw[thick middle dotted line] (vb1) -- (vb2);
    \draw[-, line width=1pt, color=red]  (vb2) -- (vb3);
    \draw[-, line width=0.5pt]  (vb3) -- (vb7);
    \draw[-, line width=0.5pt]  (vb4) -- (vb5);
    \draw[middle dotted line] (vb5) -- (vb6);
    \draw[-, line width=0.5pt]  (vb6) -- (vb7);
    \draw[-, line width=0.5pt]  (vb7) -- (vb8);
    \draw[middle dotted line] (vb8) -- (vb9);
    \draw[-, line width=0.5pt]  (vb9) -- (vb10);

    \node[draw, circle, line width=1pt, color=red, fill=white](vc1) at (17,2)[label=right: { $v_L^C$}] {};
    \node[draw, circle, line width=1pt, color=red, fill=white](vc2) at (13,2)[] {};
    \node[draw, circle, line width=1pt, color=red, fill=white](vc3) at (11,2)[label=above: { $v_0^C$}] {};
    \node[draw, circle, line width=0.5pt, fill=white](vc4) at (16,0)[label=right: { $u_L^C$}] {};
    \node[draw, circle, line width=0.5pt, fill=white](vc5) at (15,0)[] {};
    \node[draw, circle, line width=0.5pt, fill=white](vc6) at (12,0)[] {};
    \node[draw, circle, line width=0.5pt, fill=white](vc7) at (11,0)[label=left: { $u_{-1}^C$}] {};
    \node[draw, circle, line width=0.5pt, fill=white](vc8) at (11,-1)[] {};
    \node[draw, circle, line width=0.5pt, fill=white](vc9) at (11,-4)[] {};
    \node[draw, circle, line width=0.5pt, fill=white](vc10) at (11,-5)[label=right: { $u_{-L}^C$}] {};

    \draw[thick middle dotted line] (vc1) -- (vc2);
    \draw[-, line width=1pt, color=red]  (vc2) -- (vc3);
    \draw[-, line width=0.5pt]  (vc3) -- (vc7);
    \draw[-, line width=0.5pt]  (vc4) -- (vc5);
    \draw[middle dotted line] (vc5) -- (vc6);
    \draw[-, line width=0.5pt]  (vc6) -- (vc7);
    \draw[-, line width=0.5pt]  (vc7) -- (vc8);
    \draw[middle dotted line] (vc8) -- (vc9);
    \draw[-, line width=0.5pt]  (vc9) -- (vc10);

    \node[draw, circle, line width=1pt, color=red, fill=white](v) at (8,2)[label=below: { $v^\star$}] {};
    \draw[-, line width=1pt, color=red]  (v) -- (va3);
    \draw[-, line width=1pt, color=red]  (v) -- (vb3);
    \draw[-, line width=1pt, color=red]  (v) -- (vc3);

\end{tikzpicture}
\caption{The graph $G$ constructed in the proof of Theorem~\ref{thm:pancake}. The color red (black resp.) is used to indicate the main (auxiliary resp.) vertices and paths.}\label{fig:pancake}
\end{figure}

Now let us define the auxiliary agents.
First, there are $L$~\emph{auxiliary $B$-agents} $\{b^B_i \mid i \in [L]\}$ and $L$~\emph{auxiliary $C$-agents} $\{b^C_i \mid i \in [L]\}$.
These groups of agents start in a single file on vertices $\{u^B_{-i} \mid i \in [L]\}$ and $\{u^C_{-i} \mid i \in [L]\}$ respectively.
Their target destinations lie at distance exactly~$L$ from their starting positions, either on their starting auxiliary paths or on the $B$- or $C$-paths, respectively.
We set~\(s_0(b^B_i) = u^B_{-i}\), \(s_0(b^C_i) = u^C_{-i}\), and
\begin{align*}
  t(b^B_i)   &= \begin{cases}
  v^B_{L-i}  &\text{if } i \in [3 \ell n^+ + 2n^+, 3\ell + 3n^+) \text{ for some } \ell \in \Nzero, \text{ and}\\
  u^B_{L-i}  &\text{otherwise.}
  \end{cases} \\
  t(b^C_i)   &= \begin{cases}
    v^C_{L-i} &\text{if } i \in [3 \ell n^+, 3\ell n^+ + n^+) \text{ for some } \ell \in \Nzero, \text{ and} \\ 
    u^C_{L-i} &\text{otherwise}
  \end{cases}
\end{align*}
where $[a,b)$ denotes the set $\{a, a+1, \dots,b-1\}$.

Finally, there are $L + 2n^+k$~\emph{auxiliary $A$-agents} split into two sets.
First, we have agents $\{b^A_{i,1} \mid i \in [L]\}$ with $s_0(\cdot)$ and $t(\cdot)$ defined as follows
\begin{align*}
  s_0(b^A_{i,1}) &= u^A_{-i}, \\
  t(b^A_{i,1})   &= \begin{cases}
    w^A_{L-i} & \text{if } i \in [3 \ell n^+ + n^+, 3\ell + 2n^+) \text{ for some } \ell \in \Nzero, \text{ and} \\
    u^A_{L-i} &\text{otherwise.}
  \end{cases}
\end{align*}
Second, we add auxiliary $A$-agent $b^A_{i,2}$ for every $i \in [L] \cap \bigcup_{\ell \in \Nzero} [3 \ell n^+, 3\ell + n^+) \cup [3 \ell + 2n^+, 3\ell + 3n^+)$ with the following starting and target positions
\[s_0(b^A_{i,2}) = w^A_{-i}, \quad t(b^A_{i,2}) = w^A_{L-i}.\]
Notice that every vertex~$u^A_{-i}$ for $i \in [L]$ is a starting position of some auxiliary $A$-agent, and every vertex~$w^A_{i-1}$ for $i \in [L]$ is a target position of some auxiliary $A$-agent.
This concludes the description of the instance $\langle G, A, s_0, t, L \rangle$ of \MAPFShort.

Before moving to prove the correctness of the reduction, let us make a few observations about the movement of auxiliary agents in any schedule with optimum makespan.
We first formally define the partition of the schedule into rounds and phases.
The $i$th round takes place in the time interval $[3(i-1) n^+, 3 i n^+)$ and it consists of a push phase in the interval $[3(i-1) n^+, 3(i-1)n^+ + n^+)$, a reverse phase in the interval $[3(i-1) n^+ + n^+, 3(i-1)n^+ + 2n^+)$ and a pop phase in the interval $[3(i-1) n^+ + 2n^+, 3 i n^+)$.
For technical reasons, we consider the time step 0 to be neither part of the first round nor the first push phase.

Recall that the distance of all auxiliary agents from their respective targets is exactly~$L$.
Moreover, the graph~$G$ is a tree and thus, their movement in any schedule of makespan~$L$ is fully predetermined.
It is easy to check that this movement alone creates no conflicts and is therefore feasible on its own.
Let us observe how the auxiliary agents interact with the vertices on the boundary of~$G'$.

\begin{claim}\label{claim:auxiliary-properties}
In any solution to $\langle G, A, s_0, t, L \rangle$ of makespan $L$, the auxiliary agents occupy
\begin{enumerate}
\item the vertices $u^A_{-1}$, $u^B_{-1}$ and $u^C_{-1}$ at time steps $[0,L-1]$,
\item the vertex $w^A_{1}$ at time steps $[2, L]$,
\item the vertex $v^A_0$ precisely during every reverse phase,
\item the vertex $v^B_0$ precisely during every pop phase, and
\item the vertex $v^C_0$ precisely during every push phase.
\end{enumerate}
\end{claim}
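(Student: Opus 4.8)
The plan is to leverage the observation recorded immediately above the claim: since $G$ is a tree and every auxiliary agent begins at distance exactly $L$ from its target, in \emph{any} schedule of makespan $L$ every auxiliary agent must advance one edge towards its target at each of the $L$ time steps and therefore follows the \emph{unique} start-to-target path in $G$. Hence the trajectory of every auxiliary agent is completely determined, and proving the claim reduces to reading off where these trajectories meet the seven ``boundary'' vertices in question and comparing the visiting times against the partition of $[0,L]$ into rounds and phases. The first step is to note that the only vertices of $G'$ incident with an edge leaving $G'$ are $v^A_0$, $v^B_0$ and $v^C_0$, and that among the vertices of the four auxiliary paths the only ones adjacent to $G'$ are $u^A_{-1}$, $w^A_1$, $u^B_{-1}$ and $u^C_{-1}$; together these are precisely the vertices mentioned in the claim, so no auxiliary agent interacts with the boundary of $G'$ anywhere else.

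Next I would write down, family by family, each forced trajectory and the time it hits a boundary vertex. For $X\in\{B,C\}$, agent $b^X_i$ runs $u^X_{-i}\to u^X_{-i+1}\to\cdots$, so it occupies $u^X_{-1}$ at time $i-1$; moreover, inspecting the definitions, its target lies on the $X$-path (rather than on the auxiliary path) exactly when $i$ lies in a push-phase range (for $X=C$) or a pop-phase range (for $X=B$), in which case it then crosses the edge $u^X_{-1}v^X_0$ and occupies $v^X_0$ at time $i$ before continuing along the $X$-path; otherwise it never reaches $v^X_0$. Likewise, agent $b^A_{i,1}$ occupies $u^A_{-1}$ at time $i-1$, and if its target is $w^A_{L-i}$ --- which by the definitions happens exactly when $i$ lies in a reverse-phase range --- it then occupies $v^A_0$ at time $i$ and $w^A_1$ at time $i+1$ before running along the $w^A$-path, while otherwise it stays on the $u^A$-path and touches neither $v^A_0$ nor $w^A_1$. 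Finally, agent $b^A_{i,2}$ runs along the $w^A$-path, occupying $w^A_1$ at time $i+1$ whenever $i\le L-1$, and never touches $v^A_0$ or $u^A_{-1}$.

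Assembling these facts yields the five items. For (1), every $i\in[L]$ provides an auxiliary agent occupying $u^X_{-1}$ at time $i-1$, and $\{ i-1 : i\in[L] \}=[0,L-1]$. For (2), one checks that for each $i\in[L-1]$ exactly one of $b^A_{i,1}$ (when its target is $w^A_{L-i}$) and $b^A_{i,2}$ (when it exists) is defined, and that agent occupies $w^A_1$ at time $i+1$; since no other auxiliary agent ever reaches $w^A_1$, the set of visiting times is $\{ i+1 : i\in[L-1] \}=[2,L]$. For (3)--(5), the only auxiliary agents that ever reach $v^A_0$ (resp.\ $v^B_0$, $v^C_0$) are those $b^A_{i,1}$ with target $w^A_{L-i}$ (resp.\ $b^B_i$, $b^C_i$ with target in $G'$), which by the previous paragraph exist exactly when $i$ lies in a reverse-phase (resp.\ pop-phase, push-phase) range, and each such agent sits on the gateway at time $i$; hence the gateway is occupied by auxiliary agents at precisely the union of all reverse (resp.\ pop, push) phases. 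I expect the main (indeed the only) delicate point to be the index bookkeeping: one must verify that the arithmetic progressions appearing in the definitions of the target functions coincide with the push/reverse/pop phase intervals, and one must treat the boundary indices $i=1$ and $i=L$ --- together with the stipulation that time step $0$ lies in no phase --- with care to avoid an off-by-one slip.
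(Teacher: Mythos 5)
Your proposal is correct and follows essentially the same route as the paper: it invokes the fact that each auxiliary agent starts at distance exactly $L$ from its target in a tree, so its trajectory in any makespan-$L$ schedule is forced, and then reads off the occupancy times of the boundary vertices from the definitions of the target functions, matching the arithmetic-progression index ranges against the push/reverse/pop phase intervals. The only difference is that you spell out the per-family trajectory bookkeeping somewhat more explicitly than the paper does.
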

\begin{proofclaim}
The first part holds since the agents $\{b^A_{i,1} \mid i \in [L]\}$ pass successively through the vertex~$u^A_{-1}$ with~$b^A_{1,1}$ starting on~$u^A_{-1}$.
The same holds for $u^B_{-1}$ and $u^C_{-1}$ with the auxiliary $B$- and $C$-agents respectively.
The second part holds since the agent~$b^A_{1,2}$ arrives at the vertex~$w^A_1$ in two moves and afterwards, $w^A_1$ is occupied at each turn by either~$b^A_{i,1}$ or~$b^A_{i,2}$.

The last three parts follow directly from the definition of the target function~$t(\cdot)$ for auxiliary agents.
For example, the vertex~$v^B_0$ can possibly be occupied at time step~$i$ only by the auxiliary $B$-agent starting at distance exactly~$i$ from~$v^B_0$, i.e., the agent~$b^B_i$.
However, that happens only when the target destination of~$b^B_i$ lies on the $B$-path, i.e., if and only if $i \in [3 \ell n^+ + 2n^+, 3\ell + 3n^+)$.
\end{proofclaim}

We now show that $\langle G, A, s_0, t, L \rangle$ is a yes-instance of \MAPFShort if and only if $\langle \pi, k \rangle$ is a yes-instance of \PN{Pancake Flipping}.

\paragraph{Correctness (``only if'').}
First, let $\langle \pi, k\rangle$ be a yes-instance of \PN{Pancake Flipping} and let $r_1, \dots, r_k$ be the lengths of prefixes such that their successive reversals sort $\pi$.
As we already argued, the movement of all auxiliary agents is predetermined and thus, it remains to define the movement of primary agents.

Note that we shall not define the functions $s_i(\cdot)$ explicitly, we will rather define them implicitly through describing the movement of primary agents.
Throughout the whole schedule, we maintain the invariant that at the start of each round the primary agents are located at vertices $v^A_1, \dots, v^A_n$.
This clearly holds at the very beginning.

Let us describe the movement of primary agents in the $i$th round.
Let $A_i$ be the set of agents located at vertices $v^A_1, \dots, v^A_{r_i}$ at the start of the $i$th round.
These will be the only primary agents moving in this round.
In the push phase, the agents in $A_i$ immediately start moving in a single file to reach the vertices $v^B_1, \dots, v^B_{r_i}$.
Then they wait here until the push phase ends.
There is enough time to perform this movement since the agents travel distance exactly $r_i+1 \le n +1$ and each push phase lasts $n^+ = n + 2$ time steps, except for the first round when it lasts only $n+1$ steps.
Similarly in the reverse phase, the agents in $A_i$ move in parallel from vertices $v^B_1, \dots, v^B_{r_i}$ to vertices $v^C_1, \dots, v^C_{r_i}$.
Again, there is enough time since each reverse phase lasts exactly $n+2$ time steps.
Finally in the pop phase, the agents in $A_i$ move from vertices $v^C_1, \dots, v^C_{r_i}$ back to $v^A_1, \dots, v^A_{r_i}$.
Again, this can be done since each pop phase lasts exactly $n+2$ time steps.

Notice that the order of the agents in $A_i$ has reversed with respect to the distance from~$v^\star$.
Each round thus reorders the agents with respect to the distance from~$v^\star$ by reversing a prefix of length~$r_i$ and all primary agents arrive at their final destination at the end of the $k$th round.
It is straightforward to check that this movement of primary agents creates no conflict with the auxiliary agents.

\paragraph{Correctness (``if'').}
Now assume that $\langle G, A, s_0, t, L \rangle$ is a yes-instance of \MAPF and fix a particular feasible schedule.
We first prove that the movement of primary agents is quite restricted throughout the schedule.

\begin{claim}\label{claim:primary-agents}
In any solution to $\langle G, A, s_0, t, L \rangle$ of makespan~$L$, all primary agents stay within the graph~$G'$.
Moreover, there are no primary agents on the $B$-path during any pop phase and there are no primary agents on the $C$-path during any push phase.
\end{claim}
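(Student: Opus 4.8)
The claim has two independent assertions; the plan is to derive the first directly from the occupancy facts of \Cref{claim:auxiliary-properties}, and the second by a ``moving wall'' argument. We use throughout that any feasible schedule has makespan exactly $L$ — every auxiliary agent is at distance exactly $L$ from its target — so no auxiliary agent ever waits and each follows the predetermined trajectory noted just above.

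\medskip\noindent\textit{Primary agents stay in $G'$.}
The only edges of $G$ with one endpoint in $G'$ and the other outside are $u^A_{-1}v^A_0$, $v^A_0w^A_1$, $u^B_{-1}v^B_0$ and $u^C_{-1}v^C_0$, and each of the four auxiliary paths hangs off $G'$ through exactly one of them. Hence a primary agent could leave $G'$ only after first occupying one of $u^A_{-1}, u^B_{-1}, u^C_{-1}$ or $w^A_1$. By \Cref{claim:auxiliary-properties}, the first three are occupied at every time step in $[0,L-1]$, so a primary agent could occupy one of them only at time $L$, which is impossible since its target lies on the $A$-path. Also $w^A_1$ is occupied at every time step in $[2,L]$, so a primary agent could occupy it only at time $0$ or $1$; this too is impossible, since at time $0$ every primary agent sits on the $A$-path, and reaching $w^A_1$ at time $1$ would require a primary agent to start on $w^A_0$, $w^A_2$ or $v^A_0$. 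Hence no primary agent leaves $G'$.

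\medskip\noindent\textit{No primary agent on the $B$-path during a pop phase (and symmetrically for $C$ and push).}
Suppose for contradiction that a primary agent $a$ is on the $B$-path at some moment of the pop phase of round~$i$, i.e. the interval $[3in^+-n^+,3in^+)$. By \Cref{claim:auxiliary-properties}, $v^B_0$ is occupied throughout this interval, and since $\{v^B_1,\dots,v^B_L\}$ is joined to the rest of $G$ only at $v^B_0$, the agent $a$ must already lie in $\{v^B_1,\dots,v^B_L\}$ when the phase begins and cannot leave before it ends. Meanwhile the auxiliary $B$-agents whose target lies on the $B$-path are exactly those $b^B_j$ with $j$ in this interval; each crosses $v^B_0$ at time $j$ and then marches down the $B$-path without ever waiting. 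These $n^+$ agents therefore form a solid block of consecutive $B$-path vertices whose top index at time $\theta\ge 3in^+$ equals $f(\theta)=\theta-3in^++n^+$, starting at $f(3in^+)=n^+$ and increasing by one each turn, so $f(L)=n^++3n^+(k-i)\ge 1$. At time $3in^+$ the agent $a$ is still on the $B$-path but not on $v^B_0$ (it cannot reach $v^B_0$ in one move from where it sat during the phase) and not on the block, hence it lies strictly beyond the block's front. Because the restrictive variant of \MAPFShort{} forbids two agents on a vertex and forbids swapping along an edge, $a$ can never step onto a block vertex; as the front advances exactly one step per turn while $a$ moves at most one step per turn, a short induction shows $a$ stays strictly ahead of the front for all of $[3in^+,L]$. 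Thus $s_L(a)=v^B_m$ with $m>f(L)\ge 1$, contradicting $s_L(a)=t(a)$. Applying the same argument to $v^C_0$, the push phase, and the auxiliary $C$-agents gives the $C$-path statement; the very first push phase needs no argument, since at time $0$ all primary agents are on the $A$-path and $v^C_0$ is free only then.

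\medskip\noindent
I expect the ``moving wall'' induction to be the main obstacle: one has to make precise why the marching block of auxiliary agents really does stop the primary agent from ever returning toward $v^B_0$ (or $v^C_0$), and this hinges on two facts that must be invoked explicitly — that auxiliary agents never wait (their distance to target equals the makespan) and that we work in the restrictive model (no edge swaps). Everything else, including the bookkeeping with phase boundaries, is routine.
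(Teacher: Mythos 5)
Your proposal is correct and follows essentially the same route as the paper: part one is the same blocking argument via the occupancy facts of Claim~\ref{claim:auxiliary-properties} (the paper phrases the time-$0$/$1$/$L$ cases uniformly as ``distance at least $2$ from every primary start and target''), and part two is the same wall-on-a-path argument, except that the paper uses only the single auxiliary agent $b^B_t$ with $t=3(i-1)n^++2n^+$ as the wall and observes that the primary agent would have to swap with it, whereas you march the whole block of $n^+$ auxiliary agents. Your extra care in making the ``moving wall'' induction explicit (no waiting for auxiliary agents, no swaps in the restrictive model) is exactly the content the paper compresses into one sentence.
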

\begin{proofclaim}
The primary agents could escape~$G'$ only through vertices~$u^A_{-1}$, $u^B_{-1}$, $u^C_{-1}$ or $w^A_1$.
However, all these vertices are at distance at least~2 from the starting and target position of any primary agent and by Claim~\ref{claim:auxiliary-properties}, they are all blocked by auxiliary agents in the time interval $[2, L-1]$.
Therefore, all primary agents stay within the graph~$G'$ throughout the whole schedule.

Now assume for a contradiction that there is a primary agent~$a_j$ on the $B$-path during a pop phase in the~$i$th round.
By the fourth part of Claim~\ref{claim:auxiliary-properties}, the vertex~$v^B_0$ is occupied by auxiliary agents throughout the whole pop phase and thus, the primary agent~$a_j$ must already be on some vertex~$v^B_p$ for $p \ge 1$ when the pop phase starts.
We set $t = 3(i-1) n^+ + 2n^+$.
At time step~$t$, the auxiliary $B$-agent~$b^B_t$ enters the $B$-path on the vertex~$v^B_0$ and afterwards, it keeps moving along the $B$-path towards its target vertex~$v^B_{L-t}$.
Therefore, the agent~$a_j$ can never return to its final destination as it would have to swap with~$b^B_t$ at some point.
Analogous argument show that there cannot be any primary agent on the $C$-path during a push phase.
\end{proofclaim}

Let us refer to the $A$-path together with the central vertex $v^\star$ as the \emph{extended $A$-path}, with the \emph{extended $B$- and $C$-paths} defined analogously.
It follows by a combination of Claims~\ref{claim:auxiliary-properties} and~\ref{claim:primary-agents} that at the beginning of each round all primary agents are located on the extended $A$-path.
In a push phase, the primary agents can move only on the extended $A$- and $B$-paths.
Suppose that at the end of the $i$th push phase there are $r_i$~primary agents on the extended $B$-path that get separated from the remaining $n-r_i$ primary agents on the $A$-path.
In the following reverse phase, all $r_i$~primary agents from the extended $B$-path must move to the extended $C$-path since they can neither stay on the $B$-path due to Claim~\ref{claim:primary-agents} nor escape to the $A$-path due to the third part of Claim~\ref{claim:auxiliary-properties}.
Finally, these $r_i$~primary agents must vacate the $C$-path and return to the extended $A$-path in the final pop phase of this round by Claim~\ref{claim:primary-agents}.
As a result, the order of the primary agents with respect to the distance from~$v^\star$ changed exactly by a prefix reversal of length~$r_i$.
Since the primary agents are able to rearrange themselves into their final order in $k$~rounds, the permutation~$\pi$ can be sorted by performing successive prefix reversals of lengths $r_1, \dots, r_k$ and $\langle \pi,k \rangle$ is a yes-instance of \PN{Pancake Flipping}.
\end{proof}

Additionally, a minimal modifications to the reduction above yield a hardness result in a more general setting where the agents are semi-anonymous.
The input of a \CMAPF (\CMAPFShort) problem consists of a graph $G$, positive integers~$k$ and $\ell$ and a group of agents~$A_i$ for each $i \in [k]$ with a prescribed starting positions~$S_i$ and target positions~$T_i$.
We are again looking for a feasible schedule of makespan at most~$\ell$ but unlike~\MAPFShort, we do not specify the target of each individual agent.
We only require that the agents in~$A_i$ move from~$S_i$ to~$T_i$ for each $i \in [k]$.

\thmNpColored*
\begin{proof}
We reduce from the \PN{Binary String Prefix Reversal Distance} problem known to be \NP-hard~\cite{HurkensIKKST07}:
the input consists of two binary strings $\alpha = \alpha_1, \dots, \alpha_n$ and $\beta = \beta_1, \dots, \beta_n$ of length~$n$, and a positive integer~$k$.
The question is whether $\alpha$ can be transformed into $\beta$ with $k$ prefix reversals.

Let $\langle \alpha, \beta, k \rangle$ be an instance of \PN{Binary String Prefix Reversal Distance}.
We refer to the reduction of Theorem~\ref{thm:pancake}.
Observe that the constructed graph~$G$ and the starting and target positions of all auxiliary agents depend only on~$n$ and~$k$.
In particular, we do not make any modifications to the graph~$G$.
We shall also use the same set of agents, only split into six different groups.

The primary agents form the first two agent groups~$A_1$ and~$A_2$ where their starting and target positions correspond to the distribution of symbols $0$ and $1$ in the binary strings on input.
Specifically, we set $S_1 = \{i \mid \alpha_i = 0\}$, $T_1 = \{i \mid \beta_i = 0\}$, $S_2 = \{i \mid \alpha_i = 1\}$ and $T_2 = \{i \mid \beta_i = 1\}$.
The third and fourth group of agents consist of the auxiliary $B$-agents and auxiliary $C$-agents, respectively, with $S_3 = \{s_0(b_i^B) \mid i \in [L]\}$, $T_3 = \{t(b_i^B) \mid i \in [L]\}$ and $S_4$, $T_4$ defined analogously for the auxiliary $C$-agents.
The final two groups of agents are formed by the two types of auxiliary $A$-agents.
We set $S_5$ and $T_5$ to contain the starting and target position of each $b^A_{i,1}$ agent, respectively.
And we define analogously $S_6$ and $T_6$ for the agents~$b^A_{i,2}$.

Let us focus on the auxiliary $B$-agents, i.e., the third group~$A_3$.
Observe that within this group, there is a single possible pairing of positions in $S_3$ and $T_3$ such that each pair is at distance at most~$L$ and it exactly corresponds to the individual agents in the~\MAPFShort instance.
The same holds for agent groups $A_4$, $A_5$ and $A_6$.
Therefore, the movement of auxiliary agents in arbitrary schedule of makespan~$L$ is predetermined and in particular, analogues to Claim~\ref{claim:auxiliary-properties} and~\ref{claim:primary-agents} hold.

The rest of the arguments remain unchanged.
We again conclude that the primary agents can rearrange themselves exactly by a prefix reversal in each round.
Therefore, $\langle G, (A_i, S_i, T_i)_{i=1}^6, L)$ is a yes-instance of \CMAPFShort if and only if $\langle \alpha, \beta, k \rangle$ is a yes-instance of \PN{Binary String Prefix Reversal Distance}.
\end{proof}

\section{Efficient Algorithm for Centralized Networks}

In this section, we prove the main positive result of this paper. 
First, we prove that the \MAPF problem can be solved in polynomial time in cliques; this will be necessary later on. 

Throughout this section, there is a notion of \emph{swap} that is going to play a very important role. A swap is the behavior that happens when two neighboring agents have to exchange positions in one turn according to the schedule. Formally, let $\alpha,\beta$ be two agents and $s,s'$ be two placements of agents. We say that $\alpha,\beta$ are \emph{swapping between $s$ and $s'$} (alternatively swapping in $s,s'$) if $s'(\alpha)=u=s(\beta)$ and $s'(\beta)=v=s(\alpha)$; observe that this can only happen if $uv\in E(G')$. Additionally, given a potential solution $s_1, \ldots s,_m$, we will say that a swap is happening in turn $i$ if there exits a pair of agents that are swapping between the placements $s_{i-1}$ and $s_i$. We stress here that we introduce this notion here purely for the sake of exposition, as swapping is not allowed in the version of \MAPF that we consider.

\begin{theorem}\label{thm:clique:polynomial}  
    Let $\mathcal{I} = \langle G, A, s_0, t, \ell \rangle$ be an instance of \MAPF{} and $G=(V,E)$ be a clique. We can decide if $\mathcal{I}$ is a yes-instance in polynomial time. Furthermore, if $\ell \geq 2$ and $|V(G)|\ge 4$, then $\mathcal{I}$ is always a yes-instance.
\end{theorem}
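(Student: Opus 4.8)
The plan is to recast the problem as one about permutations of $V$. Since $G$ is a clique, the only real constraint on a single turn is the no-swap rule, so I record two facts. \emph{(i) Realising $2$-cycle-free permutations.} If $\rho$ is a permutation of $V$ with no cycle of length exactly $2$, then from \emph{any} placement $s$ one can reach in a single turn the placement $a\mapsto\rho(s(a))$: decompose $\rho$ into disjoint cycles and, along each cycle $(v_1\,v_2\,\cdots\,v_m)$ with $m\ge 3$, simultaneously move the agent sitting on $v_j$ (if any) to $v_{j+1}$, cyclically, leaving agents on fixed points of $\rho$ in place. This respects adjacency (clique), injectivity ($\rho$ is a bijection) and the no-swap rule (every moved agent lies on a cycle of length $\ge 3$). \emph{(ii) One-turn obstruction.} Conversely, two agents each starting on the other's target vertex can never both reach their targets in one turn, as that would be a swap.

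The crux is the following decomposition statement: \emph{if $|V|=n\ge 4$, then every permutation $\sigma$ of $V$ equals $\rho_1\rho_2$ with both $\rho_1$ and $\rho_2$ free of $2$-cycles.} Granting it, the ``always yes'' claim follows at once: choose any permutation $\sigma$ of $V$ extending the partial injection $s_0(a)\mapsto t(a)$ (its action on the vertices missed by $s_0$ may be any bijection onto those missed by $t$), write $\sigma=\rho_1\rho_2$, and by fact (i) realise $\rho_2$ in turn $1$ and $\rho_1$ in turn $2$; then $s_2(a)=\rho_1(\rho_2(s_0(a)))=\sigma(s_0(a))=t(a)$ for every agent, so $s_2=t$ and the makespan is $2\le\ell$. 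For the polynomial-time decision, note $\ell\ge 1$, so either $\ell\ge 2$ or $\ell=1$. If $\ell\ge 2$ and $n\ge 4$, answer yes by the above. If $\ell=1$, form the functional digraph on $V$ with an arc $s_0(a)\to t(a)$ for each $a$; it is a disjoint union of simple paths and cycles, and the answer is yes iff it has no cycle of length exactly $2$ --- the ``only if'' is fact (ii), and for ``if'' each path is resolved by a simultaneous one-step shift (its final vertex is empty initially) and each cycle of length $\ge 3$ by a rotation, these being vertex-disjoint. Finally, if $n\le 3$ the whole configuration space has constant size (at most $6$ placements), so we compute the minimum makespan by breadth-first search in constant time and compare it with $\ell$.

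It remains to prove the decomposition statement, by case analysis on the $2$-cycles of $\sigma$. If $\sigma$ has none, take $\rho_1=\sigma$, $\rho_2=\mathrm{id}$. Otherwise write $\sigma=\big(\prod_{i=1}^p(a_i\,b_i)\big)\tau$ with $\tau$ free of $2$-cycles and supported off $\{a_i,b_i\}$. If $p\ge 2$, the product of the transpositions itself splits into two $2$-cycle-free permutations supported on $\{a_i,b_i\}$: for instance $(a_1\,b_1)(a_2\,b_2)=(a_1\,b_2\,a_2)(a_1\,b_1\,a_2)$, and for $p\ge 3$ one checks $\prod_{i=1}^p(a_i\,b_i)=(a_1\,a_p\,a_{p-1}\,\cdots\,a_2)(a_1\,b_1\,a_2\,b_2\,\cdots\,a_p\,b_p)$; we then combine with the disjoint $\tau$. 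If $p=1$ and $\tau\ne\mathrm{id}$, let $c$ be a vertex of some cycle of $\tau$ (necessarily of length $\ge 3$); then $\rho:=(a_1\,c\,b_1)$ and $\rho\sigma$ are both $2$-cycle-free (the transposition is absorbed into the longer cycle and $a_1$ becomes a fixed point), so $\sigma=\rho^{-1}(\rho\sigma)$. If $p=1$ and $\tau=\mathrm{id}$, then $\sigma=(a_1\,b_1)$ and, using two further vertices $c,d$ (available since $n\ge 4$), $\sigma=(a_1\,d\,c)(a_1\,b_1\,c\,d)$; in every case both factors are $2$-cycle-free.

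I expect this decomposition statement to be the main obstacle: a cycle-by-cycle treatment fails because ``repairing'' an isolated transposition needs spare vertices, so the case $p=1$ must borrow them either from another cycle of $\sigma$ or from the ambient $n\ge 4$ vertices. Moreover the bound $n\ge 4$ is sharp --- on a triangle carrying three agents the one-turn moves generate only the cyclic group of order $3$, so an odd rearrangement is unreachable for every $\ell$, which is precisely why the $n\le 3$ case of the algorithm must be decided by explicit search rather than a blanket ``yes''. The remaining points (that a $2$-cycle-free permutation is realisable in one turn, and that the displayed cycle identities hold) are routine verifications.
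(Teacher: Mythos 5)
Your proof is correct, and all the cycle identities check out (I verified $(a_1\,b_2\,a_2)(a_1\,b_1\,a_2)=(a_1\,b_1)(a_2\,b_2)$, the general $p\ge 3$ formula, the absorption of an isolated transposition into a longer cycle of $\tau$, and $(a_1\,d\,c)(a_1\,b_1\,c\,d)=(a_1\,b_1)$). The overall skeleton matches the paper's: both characterize makespan\-/$1$ feasibility by the absence of ``swapping pairs'' (your $2$-cycles of the functional digraph), handle $|V|\le 3$ by exhaustive search in constant time, and otherwise build a two-turn schedule via an intermediate placement from which every agent jumps directly to its target. Where you genuinely diverge is in how that intermediate placement is produced. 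The paper argues on the instance itself: it counts the swapping pairs $p$, rotates the involved agents through a $(p+1)$-cycle when $p\ge 2$, and for $p=1$ splits into further subcases depending on whether a spare vertex exists and whether two additional agents can be found that are both fixed or both non-fixed, each subcase needing its own swap-freeness verification. You instead isolate a clean, reusable algebraic lemma --- every permutation of a set of size at least $4$ factors as a product of two permutations with no $2$-cycles --- and derive the schedule uniformly from any extension of $s_0(a)\mapsto t(a)$ to a full permutation of $V$. Your route buys a shorter and more mechanical verification; the paper's case analysis is in effect an implicit, instance-specific proof of your factorization lemma. Your remark that on a triangle the one-turn moves generate only the rotation subgroup, so that odd rearrangements are unreachable, is also the right explanation of why $|V|\ge 4$ is sharp and why the small cases must be decided by search rather than answered ``yes'' outright.
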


\begin{proof}
    Clearly, if  $|V| < 4$ we can compute an optimal solution of $\mathcal{I}$ in polynomial time. Therefore, we may assume that $|V| \ge 4$.

    First, we check if there exists a feasible solution of makespan $1$. For every $i\in[\ell-1]$, we define a \emph{swapping pair} of agents to be any pair of agents $\alpha, \beta \in A$, with $\alpha \neq \beta$, such that $s_0(\alpha) = t(\beta)$ and $s_0(\beta) = t(\alpha)$. It is easy to see that $\mathcal{I}$ accepts a feasibly solution of makespan $1$ if and only if there exists no swapping pair.
    If such a pair does not exists, then $s_1 =t$ is a feasible solution of makespan $1$.
    Otherwise, there is no feasible solution of makespan $1$.
    
    Next, we focus on the case where there is no feasible solution of makespan $1$.
    Let $p\ge 1$ be the number of swapping pairs that exist and let $(\alpha_i,\beta_i) $, $i \in [p]$, be these pairs. We consider two cases, according to whether $p\ge 2$ or not.

    \medskip
    \noindent\textbf{Case 1.} ($p\ge 2$): Here, we can compute a feasible solution of makespan $2$ as follows.\todo{DK: Again, I believe that we could add a figure for this case.}
    \begin{itemize}
        \item For all $i\in [p-1]$, we set $s_1(\alpha_i) = s_0(\alpha_i)$,
        \item we set $s_1(\alpha_p) = s_0(\beta_1)$,
        \item for all $i \in [p-1]$ we set $s_1(\beta_i)= s_0(\beta_{i+1})$,
        \item we set $s_1(\beta_p) = s_0(\alpha_p)$ and
        \item for every other agent $\alpha\in A$, we set $s_1(\alpha) = s_0(\alpha)$.
    \end{itemize}
    Then, for every agent $\alpha\in A$, we set $s_2(\alpha)=t(\alpha)$. The schedule $s_1,s_2$ is a feasible solution of $\mathcal{I}$ which does not contain swapping pairs. Indeed, the positions of $s_1$ are achieved by having $i+1\ge 3$ agents moving in a cycle.

    \medskip
    \noindent\textbf{Case 2.} ($p=1$): If $|V| > |A|$, then there exists at least one vertex $v\neq s_0(\alpha)$, for all $\alpha\in A$.
    We set $s_1 (\alpha_1)=v$ and $s_1(\alpha) = s_0(\alpha)$ for all $\alpha \in A\setminus \{ \alpha_1\}$.
    Then $s_1,s_2$, where $s_2(\alpha) =t(\alpha)$ for all $\alpha \in A$, is a feasible solution which does not contain any swapping pair.

    Therefore, we may assume that $|V|=|A|$.
    Since $|V|\geq 4$, there exist at least two agents in $A\setminus\{\alpha_1,\beta_1\}$. We select a pair of agents $\alpha_0, \beta_0$ such that, either $s_0(\alpha_0) = t(\alpha_0) $ and $s_0(\beta_0) = t(\beta_0) $, or $s_0(\alpha_0) \neq t(\alpha_0) $ and $s_0(\beta_0) \neq t(\beta_0) $. Notice that such a pair $(\alpha,\beta)$ always exists.
    We deal with each case of the $\alpha_0$ and $\beta_0$ pair separately.

    \medskip
    \noindent\textbf{Case 2 (a).} ($s_0(\alpha_0) = t(\alpha_0) $ and $s_0(\beta_0) = t(\beta_0) $): Consider the following schedule:
    \begin{itemize}
        \item $s_1(\alpha_1) = s_0(\alpha_0)$, $s_1(\alpha_0) = s_0(\beta_0)$ and $s_1(\beta_0 ) = s_0(\alpha_1)$.
        \item For all $\alpha \in A\setminus \{\alpha_0, \alpha_1, \beta_0 \}$, we set $s_1(\alpha)=s_0(\alpha)$.
    \end{itemize}
    We claim that $s_1,s_2$, where $s_2(\alpha) =t(\alpha)$ for all $\alpha \in A$, is a feasible solution which does not contain any swapping pair. Since we are in a clique we just need to prove that there are no swaps between $s_1$ and $s_2$. 
    
    Assume that we have a swapping pair $\alpha,\beta$ and that the swap happens between $s_0$ and $s_1$. Notice that, $\alpha,\beta \in \{ \alpha_0, \alpha_1, \beta_0 \}$. 
    Indeed, if $\alpha \in A\setminus \{\alpha_0, \alpha_1, \beta_0 \}$,  then $s_1(\alpha) = s_0(\alpha) \neq s_0(\beta)$, for any $\beta \in A$; therefore there is no swap. 
    Also, if $\alpha,\beta \in \{\alpha_0, \alpha_1, \beta_0 \}$, then again there is no swap by the construction of $s_1$. Indeed, the agents $\alpha_0,\alpha_1,\beta_0$ are moving in a cycle of length $3$.

    Next, assume we have a swapping pair $\alpha,\beta$ and that the swap happens between $s_1$ and $s_2$.
    Recall that we have assumed that we have exactly one pair of swapping agents, the pair $\alpha_1, \beta_1$. 
    \begin{claim}
        $\alpha,\beta \in \{\alpha_0,\alpha_1,\beta_0,\beta_1\}$
    \end{claim}
    \begin{proofclaim}
        Assume that  $\alpha \notin \{\alpha_0,\alpha_1,\beta_0,\beta_1\}$. 
        By the selection of $\{\alpha_0,\alpha_1,\beta_0,\beta_1\}$, it follows that $t(\alpha) \notin s_0(\{\alpha_0,\alpha_1,\beta_0,\beta_1\}) = s_1(\{\alpha_0,\alpha_1,\beta_0,\beta_1\})$.
        Since we have assumed that $\alpha$ and $\beta$ are a swapping pair, we have that $\beta \notin \{\alpha_0,\alpha_1,\beta_0,\beta_1\}$, as $s_1(\beta) = t(\alpha) \notin s_1(\{\alpha_0,\alpha_1,\beta_0,\beta_1\})$.
        It follows by the construction of $s_1$ that $s_1(\alpha) = s_0(\alpha)$ and $s_1(\beta) = s_0(\beta)$. Therefore, $\alpha$ and $\beta$ are swapping between $s_1$ and $s_2$ if and only if they are swapping between $s_0$ and $t$ (as $s_2=t$). This is a contradiction as only the pair $(\alpha_1,\beta_1)$ is swapping between $s_0$ and $t$. 
    \end{proofclaim}
    
    We proceed with $\alpha,\beta \in \{\alpha_0,\alpha_1,\beta_0,\beta_1\}$.
    By the selection of $\{\alpha_0,\alpha_1,\beta_0,\beta_1\}$, we have that $t(\alpha_0) = s_0(\alpha_0)$, $t(\beta_0)= s_0(\beta_0)$, $t(\alpha_1) = s_0(\beta_1)$ and $t(\beta_1) = s_0(\alpha_1)$. Also, by the construction of $s_1$, we have that $s_1(\beta_1) = s_0(\beta_1)$, $s_1(\alpha_1) = s_0(\alpha_0)$, $s_1(\alpha_0) = s_0(\beta_0)$ and $s_1(\beta_0 ) = s_0(\alpha_1)$.
    Therefore, the agents are moving in a cycle of length $4$.

    \medskip
    \noindent\textbf{Case 2 (b).} ($s_0(\alpha_0) \neq t(\alpha_0) $ and $s_0(\beta_0) \neq t(\beta_0) $): In this case we have that at least one of the following is true: $t(\alpha_0) \neq s_0(\beta_0)$ or $t(\beta_0) \neq s_0(\alpha_0)$. Indeed, if both are true, there exists a second swapping pair, contradicting that $p=1$. W.l.o.g., assume that $t(\alpha_0) \neq s_0(\beta_0)$.
    Consider the following schedule:
    \begin{itemize}
        \item $s_1(\alpha_1) = s_0(\alpha_0)$, $s_1(\beta_1) = s_0(\alpha_1)$ and $s_1(\alpha _0 ) = s_0(\beta_1)$.
        \item For all $\alpha \in A\setminus \{ \alpha_0, \alpha_1, \beta_1 \}$ we are setting $s_1(\alpha)=s_0(\alpha)$.
    \end{itemize}
    Since we are in a clique we just need to prove that there are no swaps between $s_1$ and $s_2$.

    Assume that two agents $\alpha,\beta$ are a swapping pair and that the swap happens between $s_0$ and $s_1$. Notice that $\alpha,\beta \in \{\alpha_0, \alpha_1, \beta_1 \}$. 
    Indeed, if $\alpha \in A\setminus \{\alpha_0, \alpha_1, \beta_1 \}$, then $s_1(\alpha) = s_0(\alpha) \neq s_0(\beta)$, for any $\beta \in A$; therefore, there is no swap. 
    Since $\alpha,\beta \in \{\alpha_0, \alpha_1, \beta_1 \}$, then again there is no swap by construction of $s_1$. Indeed, the agents $\alpha_0, \alpha_1, \beta_1$ are moving in a cycle of length $3$.

    Next, we consider the case where $\alpha,\beta$ are swapping between $s_1$ and $s_2$.
    Recall that $s_1(\{\alpha_0, \alpha_1, \beta_1\}) = s_0(\{\alpha_0, \alpha_1, \beta_1\})$ and $s_1(\gamma) = s_0(\gamma)$, for all $\gamma \notin \{\alpha_0, \alpha_1, \beta_1\}$.
    We will prove that if both $\alpha$ and $\beta$ do not belong in $ \{\alpha_0, \alpha_1, \beta_0, \beta_1\}$, then $\alpha$ and $\beta$ are not swapping.

    Indeed, since $s_1(\alpha) = s_0(\alpha)$, $s_2(\alpha) = t(\alpha)$, $s_1(\beta) = s_0(\beta)$ and $s_2(\beta) = t(\beta)$, we have that $\alpha, \beta$ are swapping between $s_1$ and $s_2$ if and only if they are swapping in $s_0,t$. This is a contradiction as the only pair that swaps in $s_0,t$ is $\alpha_1,\beta_1$.

    Therefore we can assume, w.l.o.g., that $\alpha \in \{\alpha_0, \alpha_1, \beta_0, \beta_1\}$. We proceed by considering each case separately.

    \medskip
    \noindent\textbf{Case $\boldsymbol{\alpha =\alpha_0}$.} We have that $t (\alpha_0) \notin s_1(\{\alpha_1,\beta_0,\beta_1\})$. Indeed, $s_1(\alpha_1)= s_0(\alpha_0) \neq t(\alpha_0)$ (by the selection of $\alpha_0$), $s_1(\beta_0)= s_0(\beta_0) \neq t(\alpha_0)$ (by the selection of $\alpha_0$) and $s_1(\beta_1)= s_0(\alpha_1) = t(\beta_1) \neq t(\alpha_0)$. Since we have assumed that $\alpha$ and $\beta$ swap between $s_1$ and $s_2$, we have that $s_1(\beta) = t(\alpha)$ thus  $\beta \notin \{\alpha_1,\beta_0,\beta_1\} $. Additionally, because we have assumed that $\alpha$ and $\beta$ also swap between $s_1$ and $s_2$, we have that $t(\beta) = s_1(\alpha)=s_1(\alpha_0) = t(\alpha_1)$. Therefore, $\beta=\alpha_1$. This is a contradiction.

    \medskip
    \noindent\textbf{Case $\boldsymbol{\alpha =\beta_0}$.} Here we need to consider two sub-cases, whether $t(\beta_0)=s_0(\alpha_0)$ or $t(\beta_0) \notin s_0(\{\alpha_0,\alpha_1,\beta_1\})$. In the first case, we have no swap because $s_1(\alpha_1) = s_0(\alpha_0) = t(\beta)$ but $t(\alpha_1)=s_0(\beta_1) \neq s_0(\beta_0) = s_1(\beta_0)$. In the second case, $t(\beta_0) = s_1(\gamma)$ and $\beta = \gamma  \notin \{\alpha_0,\alpha_1,\beta_1\}$. As both $\alpha, \beta \notin \{\alpha_0,\alpha_1,\beta_1\}$, we have that  $s_1(\alpha) = s_0(\alpha)$ and $s_1(\beta) = s_0(\beta)$. Therefore, $\alpha$ and  $\beta$ are swapping between $s_1$ and $s_2$ if and only if they are swapping in $s_0,t$. This is a contradiction as the only pair that swaps in $s_0,t$ is $\alpha_1,\beta_1$.

    \medskip
    \noindent\textbf{Case $\boldsymbol{\alpha =\alpha_1}$.} Notice that $t(\alpha_1) = s_0(\beta_1) = s_1(\alpha_0)$. Since $t(\alpha_0) \neq s_0(\alpha_0) =s_1(\alpha_1) $, there is non swap.

    \medskip
    \noindent\textbf{Case $\boldsymbol{\alpha =\beta_1}$.} Notice that $t(\beta_1) = s_1(\beta_1)$. Therefore, there is no swap including $\beta_1$. 

    Therefore, $s_1,s_2$ does not include any swaps.
    This completes the proof of the theorem.
\end{proof}

\subsection{Efficient algorithm for graphs with bounded distance to clique}
This subsection is dedicated to the proof of Theorem~\ref{thm:fpt-distance-to-clique}. Since this proof is quite involved, we break it down in the upcoming lemmas.
Before we start, note that we can assume that the makespan of any optimal solution is strictly larger than $2$, as instances with for $\ell \leq 2$ are easily solvable in polynomial time through matching.

Instead of just deciding whether the given instance of \MAPF{} is a yes-instance or not, we will compute an optimal solution of the given instance (i.e., a feasible solution of minimum makespan) if such a feasible solution exists.
Basically, we will deal with the optimization version instead of the decision version of the problem; thus, we drop the target makespan $\ell$ from the input instance.

Let $\mathcal{I} = \langle G, A, s_0, t \rangle$ be an instance of \MAPF{}. Recall that $\dc(G)$ is the minimum number of vertices that we need to delete from $G$ such that the resulting graph is a clique. That is, there exists a partition $(M,Q)$ of $V(G)$ such that $G[Q]$ is a clique and $|M| = \dc(G)$ and for any other partition $(M',Q')$, either $G[Q']$ is not a clique or $|M'|\ge \dc(G)$. Note also that we can compute $M$ in \FPT{} time parameterized by $\dc(G)$.
In the following, we assume that we have such a partition $(M,Q)$ and will use $\dc$ to refer to the number $\dc(G) = |M|$.

Let us break down the steps required to take during this proof: 
\begin{enumerate}
    \item We prove that if $\mathcal{I}$ admits a feasible solution, then the makespan of any optimal solution is bounded by a computable function of $\dc$; namely, it is bounded by $3(2\dc + 2)^{\dc} + 2$. This is done in subsection~\ref{step1}
    \item We create a new instance $\mathcal{K}=\langle G', A', s'_0, t' \rangle$ of \MAPF{} such that: 
    \begin{itemize}
        \item $G'$ is an induced subgraph of $G$ with a number of vertices that is bounded by a computable function of $\dc$,
        \item $V(G') \supseteq M$ and
        \item $A'$ is an appropriately selected subset of $A$.
    \end{itemize}
    This is done in subsection~\ref{step2}
    \item We prove that if $\mathcal{I}$ admits a feasible solution $s_1,\ldots,s_m$, then $\mathcal{K}$ admits a feasible solution $s'_1,\ldots,s'_m$ such that $|s'_i(A') \cap M| \ge |A|- |Q|$. These solutions are not necessarily optimal. This is done in subsection~\ref{step3}\todo{DK: quantify $i$}
    \item We compute an optimal solution $s'_1,\ldots,s'_{m'}$ of $\mathcal{K}$ such that $|s'_i(A') \cap M| \ge |A|- |Q|$ (if such a solution exists). This is done in \FPT{} time parameterized by \dc{} as $|V(G')|$ is bounded by a computable function of~$\dc$. This is proven in subsection~\ref{step4}
    \item We extend $s'_1,\ldots,s'_{m'}$ into a feasible solution $s_1,\ldots,s_{m'}$ of $\mathcal{I}$. To do so, we create consecutive matchings between the positions of the agents $A\setminus A'$ in the turn $i \in [m'-1]\cup \{0\}$ and $Q\setminus s'_{i+1}(A')$, i.e., the empty positions in the clique in the turn $i+1$. This is done in subsection~\ref{step5}
\end{enumerate}
It is easy to see that the algorithm that follows the above steps will return an optimal solution of $\mathcal{I}$ if it exists. Otherwise, we can conclude that there is no feasible solution for $\mathcal{I}$. Thus, we obtain the main theorem of this subsection.
\thmFptDC*

We now deal with each of the above-mentioned steps. 

\subsubsection{Step 1.}\label{step1} 
We will show that if there exists a feasible solution $s_1,\ldots, s_m$ of $\mathcal{I}$, then there exists a feasible solution $s'_1,\ldots, s'_{f(|M|)}$ of $\mathcal{I}$, where $f\colon \mathbb{N} \rightarrow \mathbb{N}$ is a computable function. Before we do this, we need to introduce some new notation.

We first define a generalization of \MAPF, called \PAMAPF, as follows. The input of the (optimization version of) \PAMAPF{} problem consists of a graph $G=(V,E)$, two sets of agents $A$ and $B$, two functions $s_0\colon A \cup B \rightarrow V$ and $t\colon A \rightarrow V$ and a subset $T$ of $V$ such that $|T| = |B|$. F
For any $a , a' \in A$ and $b \in B$ where $a \neq a'$, we have that $s_0(a) \neq s_0(a')$, $s_0(a) \neq s_0(b)$, $t(a) \neq t(a')$ and $t(a),t(a') \notin T$.
A sequence $s_1,\ldots,s_m$ is a feasible solution of an instance $\langle G, A, B, s_0, t, T \rangle$ of \PAMAPF{} if
\begin{enumerate}
    \item for all $i\in [m]$ we have that $s_i\colon A \cup B \rightarrow V$ are injective,
    \item $s_i(a)$ is a neighbor of $s_{i-1}(a)$ in $G$, for every agent $a \in A \cup B$ and $i \in [1,m]$,
    \item for all $a \in A$ we have that $s_m(a) = t(a)$ and
    \item for the set $S_m = \{s_{m}(b) \mid b \in B \} $ we have that $S_m = T$.
\end{enumerate}
A feasible solution $s_1,\ldots,s_m$ has makespan~$m$. The goal is to find a feasible solution of minimum makespan.

Now, starting from the instance $\mathcal{I}$ of \MAPF{}, we create an instance $\mathcal{J} = \langle G, A', B, s'_0, t', T \rangle$ of \PAMAPF{} as follows. Let $X = \{a \in A \mid s_0(a) \in M \text{ or } t(a) \in M \}$. If $|A \setminus X| < 4$, we set:
\begin{itemize}
    \item $A' = A$, 
    \item $B= \emptyset$,
    \item $t'= t$ and 
    \item $T = \emptyset$.
\end{itemize}
Otherwise, we set:
\begin{itemize}
    \item $A' = X$, 
    \item $B= A\setminus A'$,
    \item $t'\colon A'\rightarrow V$ such that $t'(a) = t(a)$ for all $a \in A'$,
    \item $T = \{ t(b) \mid b \in B \}$.
\end{itemize}

In the first case (i.e., $B= \emptyset$), any feasible solution of $\mathcal{I}$ is a feasible solution of $\mathcal{J}$ and \emph{vice versa}.
Therefore, we may only focus on the second case. Notice that, again, a feasible solution $s_1,\ldots,s_m$  for the instance $\langle G, A, s_0, t, \rangle$ of \MAPF{} is a feasible solution for the instance $\langle G, A', B, s'_0, t', T  \rangle$ of \PAMAPF{}.
Also, any feasible solution $s'_1,\ldots,s'_m$ for the instance $\langle G, A', B, s'_0, t', T  \rangle$ of \PAMAPF{} can be extended into a feasible solution $s'_1,\ldots,s'_{m+2}$ of the instance $\mathcal{I}$ of \MAPF{}. The last statement becomes clear once we observe that we just need to compute a feasible solution of $\langle G[T], B, s'_m, t'' \rangle$ of \MAPF{}, where $t''\colon B \rightarrow T$ is such that $t''(a) = t(a)$ for all $a \in B$.
Since $G[T]$ is a clique and $|B| = |A\setminus A'| \ge 4 $, we know that there exists a feasible solution $s''_1, s''_2$ of $\langle G[T], B, s'_m, t'' \rangle$ (see Theorem~\ref{thm:clique:polynomial}). Then, we set:
\begin{itemize}
    \item $s'_{m+1}(a) = s''_{1}(a)$ and $ s'_{m+2} = s''_{2}(a)$ for all $a \in B$, and  
    \item $s'_{m+1}(a) = s'_{m+2}(a) = s'_{m}(a)$ for all $ a \in  A'$.
\end{itemize}
We claim that the sequence $s'_1,\ldots,s'_{m+2}$ is a feasible solution of $\mathcal{I}$.
Indeed, for all $a \in A'$, we have that $s'_{m+2}(a) = s'_{m}(a) = t'(a)=t(a)$, and for all $a \in B$, we have that $ s'_{m+2}(a) = s''_{2}(a) = t''(a)=t(a)$.
Also, for all $a \in A$ we have that $s'_i(a)$ is a neighbor of $s'_{i-1}(a)$ for all $i \in [m+2]$, because these positions are decided based on feasible solutions.
Finally, we need to show that for all $i \in [m+2]$ and any pair of agents $a,b \in A$, if $a\neq b$, then $s'_i(a)\neq s'_i(b)$.
For any $i \in [m]$, this holds because we have assumed that $s'_1,\ldots,s'_{m+2}$ is a feasible solution of $\mathcal{J}$. So, we just need to deal with $i\in \{m+1,m+2\}$.
Notice that for every $a,b\in A'$, if $a\neq b$ then $s'_{m+j}(a) = s'_{m}(a) \neq s'_{m}(b) = s'_{m+j}(b)$ for any $j\in \{1,2\}$.
Also, if $a,b\in A'$ and $a\neq b$, then $s'_{m+j}(a) = s''_{j}(a) \neq s''_{j}(b) = s'_{m+j}(b)$ for any $j\in \{1,2\}$ as $s''_1,s''_2$ is a feasible solution of $\langle G[T], B, s'_m, t'' \rangle$.
Finally, for any $a\in A'$ and $b \in B$ we have to observe that $s'_{m+j}(a) = t(a) \notin T$ and $s'_{m+j}(b) \in T$.

We are now ready to show that if there exists a feasible solution for $\mathcal{J}$, then there is also one that has makespan $f(|M|)$, for some computable function $f$.
\begin{lemma}\label{lem:1}
    Let $\mathcal{J}= \langle G, A, B, s_0, t, T \rangle$ be the instance of \PAMAPF{} that was constructed above. Assume that $V(G)$ can be partitioned into $(Q, M)$, such that $G[Q]$ is a clique and $|Q| \ge 4$.
    If there exists a feasible solution $s_1,\ldots,s_m$ of $\mathcal{J}$, then there also exists a feasible solution $s'_1,\ldots,s'_{m'}$ of $\mathcal{J}$, where $m'\le 3(|A|+2)^{|M|}$.
\end{lemma}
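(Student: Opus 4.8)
\emph{Proof plan.}
The strategy is to show that a solution that runs for too long must revisit an ``essentially identical'' situation, and that any such repetition can be contracted. For a placement $s\colon A\cup B\to V(G)$, define its \emph{$M$-signature} $\mathrm{sig}_s\colon M\to A\cup\{\star,\bot\}$ by $\mathrm{sig}_s(v)=a$ if $s(a)=v$ for some $a\in A$, $\mathrm{sig}_s(v)=\star$ if $v\in s(B)$, and $\mathrm{sig}_s(v)=\bot$ if $v$ is empty; there are at most $(|A|+2)^{|M|}$ distinct $M$-signatures. The first step is the (immediate) observation that if $\mathrm{sig}_{s}=\mathrm{sig}_{s'}$, then an agent $a\in A$ lies on a vertex of $M$ under $s$ exactly when it lies on the \emph{same} vertex of $M$ under $s'$; consequently the set $A_Q$ of agents of $A$ lying in $Q$ is the same under $s$ and $s'$, the number of agents of $B$ lying in $Q$ is the same, and so is the number of empty vertices of $Q$.

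\emph{Contracting a repetition.}
Suppose the given feasible solution $s_0,\dots,s_m$ has indices $i<j$ with $j-i\ge 3$ and $\mathrm{sig}_{s_i}=\mathrm{sig}_{s_j}$. I would build a strictly shorter feasible solution by keeping $s_0,\dots,s_i$, then inserting a short ``re-sorting phase'' that moves from $s_i$ to a placement $\hat s$ which agrees with $s_j$ on every agent of $A$ and in which the agents of $B$ occupy exactly $s_j(B)$, and finally appending $s_{j+1},\dots,s_m$. By the observation above, every agent of $A$ that needs to move between $s_i$ and $\hat s$ lies in $Q$, and together with the agents of $B$ currently in $Q$ they must be rearranged into the $Q$-part of $s_j$ (mapping the anonymous $B$-agents to the vertices of $s_j(B)\cap Q$ by an arbitrary bijection, which is possible since both sides have the same size), while the agents sitting on $M$ stay put. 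Because $M$ and $Q$ are disjoint and $G[Q]$ is a clique on at least four vertices, this rearrangement is an instance of \MAPFShort{} on $G[Q]$ with target makespan $2$, hence solvable by \Cref{thm:clique:polynomial}; the re-sorting phase therefore takes at most two steps. When appending $s_{j+1},\dots,s_m$, relabel the interchangeable agents of $B$ so that the agent occupying $s_j(b)$ in $\hat s$ follows the original trajectory of $b$; all conditions of \PAMAPFShort{} are preserved, and the new makespan is at most $i+2+(m-j)=m-(j-i-2)<m$.

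\emph{Conclusion and the main obstacle.}
Take a feasible solution $s_0,\dots,s_{m'}$ of $\mathcal J$ of minimum makespan. If $m'\ge 3(|A|+2)^{|M|}$, set $N=(|A|+2)^{|M|}$ and consider the $M$-signatures of the placements $s_0,s_3,s_6,\dots,s_{3N}$: there are $N+1$ of them and at most $N$ distinct values, so two coincide, say $\mathrm{sig}_{s_{3a}}=\mathrm{sig}_{s_{3b}}$ with $a<b$, and since $3b-3a\ge 3$ the contraction above yields a shorter feasible solution, contradicting minimality. Hence $m'<3(|A|+2)^{|M|}$, which in particular gives the claimed bound. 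The one delicate point is the re-sorting phase: one must verify that equal $M$-signatures really do pin the $M$-bound agents of $A$ to identical positions (so that only the clique $Q$ needs re-sorting) and that the two-step clique rearrangement of \Cref{thm:clique:polynomial} — which is exactly where the hypothesis $|Q|\ge 4$ is used — can be performed while the vertices of $M$ remain frozen; the anonymity of the agents of $B$ is what makes the subsequent relabelling harmless.
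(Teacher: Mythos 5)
Your proposal is correct and follows essentially the same route as the paper: your ``$M$-signature'' is exactly the paper's notion of placement type (with the same count $(|A|+2)^{|M|}$), and your contraction step is the paper's replacement of a repeated placement type by a length-$3$ segment obtained from the makespan-$2$ clique rearrangement of \Cref{thm:clique:polynomial}, followed by the same relabelling of the anonymous $B$-agents via a bijection on $s_j(B)$. The only cosmetic difference is that you phrase the counting as a pigeonhole over every third step of a minimum-makespan solution, whereas the paper argues directly that each placement type can be assumed to occur at most three times; both yield the bound $3(|A|+2)^{|M|}$.
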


\begin{proof}
    In the case where $|A|+|B|\le |Q|$, there exists a feasible solution $s_1,\ldots,s_{O(|M|^2)}$. Indeed, we can first move all the agents to the vertices of $Q$, then move the agents which have terminals in $M$ to their terminals and then deal with the agents in the clique. Notice that the first two steps can be done in $O(|M|^2)$ turns (even if we move the agents one by one) and we need at most two turns for the third.

    We now deal with the case where $|A|+|B| > |Q|$.
    Let $s\colon A\cup B \rightarrow V(G)$ be a placement of the agents in the graph. We will say that two placements $s$, $s'$ have the same \emph{placement type} if there exist $A' \subseteq A$ and $B'\subseteq B$ such that:
    \begin{itemize}
        \item $s(\alpha)= s'(\alpha) =v \in M$, for all $\alpha \in A\setminus A'$,
        \item $s(B\setminus B') = s'(B\setminus B') \subseteq M$ and
        \item $s(\alpha) \in Q$ for all $\alpha \in A'\cup B'$.
    \end{itemize}
    Notice that the placement types are defined based on how the agents of $A$ and $B$ are placed in $M$. In order to count the number of different potential placement types, we need to consider the options of each vertex $v \in M$. First, we have $|A|$ options, one for each agent of $A$ that may occupy $v$. If $v$ is not occupied by any agent of $A$, then~$v$ is either empty or being occupied by an agent of $B$. In total, we have $|A|+2$ possibilities. Therefore, there exists at most $(|A|+2)^{|M|}$ different placement types.
    
    Consider a feasible solution $\{s_1\ldots,s_m\}$ of $\mathcal{J}$. We will show that there exists a feasible solution $\{s'_1\ldots,s'_{m'}\}$ where any placement type appears no more than three times. If this is true, then $m'$ is upper bounded by $3(|A|+2)^{|M|}$. Assume that in $\{s_1\ldots,s_m\}$ there exists a placement type that appears more than three times and let $0\le p <q \le m$ such that $s_p$ and $s_q$ are the first and last, respectively, appearances of this placement type.
    
    On a high level, our goal is to replace $s_p,\ldots, s_q$ with a new sequence $s'_1,s'_2,s'_3$ such that the occupied vertices are the same and the same holds for the positions of agents of $A$. Then we will use the $s_{q+1},\ldots, s_{m}$ in order to compute a feasible solution.
    To do so, we first find an optimal solution to the following instance of \PAMAPF: $\mathcal{J}'= \langle G[Q], A', B', s, t', T' \rangle$ where $s(\alpha) = s_p (\alpha) $ for all $\alpha \in A' \cup B'$, and $t'$ gives the terminal positions. Since $G[Q]$ is a clique of size at least $4$, we know that there is always a feasible solution of makespan at most $2$ (Theorem~\ref{thm:clique:polynomial}). Let $s''_1,s''_2,s''_3$ be such a feasible solution (where $s''_1 = s$). For all $i \in [3]$, we set $s'_i (\alpha ) = s''_i (\alpha )$ for all $\alpha \in A'\cup B'$ and $s'_i (\alpha ) = s_{p} (\alpha)$ for all $\alpha \in (A\cup B) \setminus (A'\cup B')$. Notice that $s_p = s'_1$. 
    Therefore, the schedule $s_1,\ldots, s_p,s'_2,s'_3$ satisfies the first two conditions for being a feasible solution of $\mathcal{J}$ and does not include any swaps.

    Notice that for any agent $\alpha \in A$, we have that $s'_3(\alpha) = s_q(\alpha)$. However, the same does not hold for the agents of $B$. We will take advantage of the fact that occupied positions of the agents of $B$ in the placements $s_q$ and $s'_3$ are the same. For each agent $\beta \in B $ we need to find the agent of $\beta' \in B$ with which they have ``switched'' positions according to $s_q$ and $s'_3$. In particular, we set $\beta' = s^{-1}_q (s'_3(\beta))$\footnote{Given a placement $s:A \rightarrow V(G)$, we define $s^{-1} : s(A) \rightarrow A$ to be a function such that $s^{-1}(v) = \alpha$ if and only if $s(\alpha) = v$.}. 
    Now, for each $i \in \{q,\ldots,m\} $ and agent $\beta \in B $, we set $s'_i (\beta) = s_i (\beta' ) $. Finally, we set $s'_i(\alpha) = s_i(\alpha)$ for each $i \in \{q,\ldots,m\} $ and agent $\alpha \in A $.
    
    Now we claim that $s_1,\ldots,s_{p-1}, s'_1,s'_2, s'_3,s'_{q+1}, \ldots , s'_m $, where $s'_1 = s_p $ and $s'_3=s'_q$, is a feasible solution of $\mathcal{J}$.

    \begin{claim}
        For any agent $\alpha \in A' \cup B'$ and two placements $s^1 $, $s^2$ that appear consecutively in $s_1,\ldots,s_{p-1}, s'_1,s'_2, s'_3,s'_{q+1}, \ldots , s'_m $, we have that $s^2(\alpha)\in N[s^1(\alpha)]$.
    \end{claim}
    \begin{proofclaim}
        First, observe that is suffices to prove the claim in two case: when $s^1$, $s^2$  appear consecutively in:
        \begin{enumerate}
            \item $s_1,\ldots,s_{p-1},s'_1,s'_2, s'_3$ and 
            \item $s'_3,s'_{q+1}, \ldots , s'_m$.
        \end{enumerate}
        Since $s_1,\ldots,s_{p-1},s'_1,s'_2, s'_3$ satisfies the first condition of being a feasible solution, we can directly conclude that the statement holds in the first case. Therefore, we focus on the second case. 
        
        Consider any agent $\alpha\in A\cup B$ and let $s^1 =s'_i$ and $s^2 =s'_{i+1}$, for some $i \in [m-1]\setminus[q-1]$, be two consecutive placements in $s'_3,s'_{q+1}, \ldots , s'_m$ (recall that $s'_3,s'_{q}$). 
        By the construction of $s'_i$ and $s'_{i+1}$, $i \in [m-1]\setminus[q-1]$, we have that, for any $\alpha \in A\cup B$, either $s'_i(\alpha) = s_i(\alpha)$ and $s'_{i+1}(\alpha) = s_{i+1}(\alpha)$ or $s'_i(\alpha') = s_i(\alpha')$ and $s'_{i+1}(\alpha') = s_{i+1}(\alpha')$. By the feasibility of $s_1,\ldots,s_m$, we have that $s_{i+1}(\beta) \in N[s_{i}(\beta)]$ for any $\beta \in A\cup B$. Therefore, $s^2(\alpha) \in N[s^1(\alpha)]$.  
    \end{proofclaim}

    Now, we will prove the following claim.

    \begin{claim}
        For any placement $s^1$ that appears in $s_1,\ldots,s_{p-1}, s'_1,s'_2, s'_3,s'_{q+1}, \ldots , s'_m $ and any pair of agents $\alpha \neq \beta$, it holds that $s^1(\alpha)\neq s^1 (\beta)$.
    \end{claim}
    \begin{proofclaim}
    Since $s_1,\ldots,s_{p-1},s'_1,s'_2, s'_3$ satisfies the second condition of being a feasible solution, we can directly conclude that the statement holds for any $s^1$ in $s_1,\ldots,s_{p-1},s'_1,s'_2, s'_3$.
    Now, we need to prove the same for the placements in $s'_{q+1}, \ldots , s'_m $. By the construction of $s'_{q+1}, \ldots , s'_m $, there exists a matching  $N :A\cup B \rightarrow A\cup B $ such that $s'_{i}(\alpha) = s_{i}(N(\alpha))$, for all $i \in [m]\setminus[q-1]$. Since $N(\alpha) \neq N(\beta)$ (and thanks to the feasibility of $s_1\ldots,s_m$), we have $s'_i(\alpha) = s_{i}(N(\alpha)) \neq s_i(N(\beta)) = s'_i(\beta)$.         
    \end{proofclaim}
    
    It remains to prove that $s'_m(\alpha) = t(\alpha)$ and $s'_m(\beta)\in T$, for all $\alpha \in A$ and $\beta\in B$.

    By construction, $s'_m(\alpha) = s_m(\alpha)$ for all $\alpha \in A$. Since we have assumed that $s_1,\ldots, s_m$ is a feasible solution of $\mathcal{J}$, we have that $s'_m(\alpha) = s_m(\alpha) = t(\alpha)$ for all $\alpha \in A$. 
    Consider any agent $\beta \in B$. By the construction of $s'_m$, there exists a matching $N\colon B\rightarrow B$ such that $s'_m(\beta) = s_m(N(\beta))$. Since $N(\beta) \in B$ and $s_1,\ldots, s_m$ is a feasible solution of $\mathcal{J}$, we have that $s_m(N(\beta)) \in T$. Therefore, $s'_m(\beta) = s_m(N(\beta)) \in T$. 
    This completes the proof that $s_1,\ldots,s_{p-1}, s'_1,s'_2, s'_3,s'_{q+1}, \ldots , s'_m $ is a feasible solution of $\mathcal{J}$.

    The previous process can be repeated until we have no position type that appears more than three times.
\end{proof}

    As a direct implication of the previous lemma, we get that if the instance $\mathcal{I}$ of \MAPF{} has a feasible solution, then it also has a feasible solution of makespan $3(2|M|+2)^{|M|}+2$. To do this, it suffices to apply the previous lemma on the instance $\mathcal{J}$ of \PAMAPF{} we created from $\mathcal{I}$. Thus, we have achieved the goal of Step $1$.
    
    \subsubsection{Step $2$.}\label{step2} 
    Recall that in this step we will start from the instance $\mathcal{I} = \langle G, A, s_0, t \rangle$ of \MAPF{} and construct a new instance $\mathcal{K}=\langle G', A', s'_0, t' \rangle$ of \MAPF{} such that: 
    \begin{itemize}
        \item $G'$ is an induced subgraph of $G$ with a number of vertices that is bounded by a computable function of $\dc$,
        \item $V(G') \supseteq M$ and
        \item $A'$ is an appropriately selected subset of $A$.
    \end{itemize}
    Before we describe the construction of $\mathcal{K}$ we need to give some additional definitions. Consider $s_1,\ldots, s_m$, a solution of $\mathcal{I}$ of minimum makespan. If such a solution exists, then $m\le 3(2|M|+2)^{|M|}+2$.
    Furthermore, the number of different agents that occupy a vertex in $M$ in any turn is at most $|M|\big(3(2|M|+2)^{|M|}+3\big)\le (10|M|)^{|M|+1}$ (we also count turn $0$). 
    Let $\kappa= (10|M|)^{|M|+1}$ and notice that $\kappa $ is bounded by a computable function of $|M|$. 

    We define \emph{vertex types} and \emph{agent types}. First, we partition $Q$ into the sets $Q_\tau$, for $\tau \in [2^{|M|}]$, such that, for any $\tau \in [2^{|M|}]$, two vertices $v,u \in Q_\tau$ if and only if $N_G[v] = N_G[u]$. We will say that two vertices $w,y \in Q$ are of the same vertex type if $w,y \in Q_\tau$ for a $\tau \in [2^{|M|}]$. Next, we define agent types based on the vertex types of their starting and terminal positions. We set $A_0\subseteq A$ as the set of agents $\alpha \in A_0$ if and only if $s_0(\alpha) \in M$ or $t(\alpha) \in M$. Notice that all the other agents have starting and terminal positions in $Q$. We set $A_{\sigma, \tau}$, where $(\sigma, \tau) \in [2^{|M|}] \times [2^{|M|}]$, to be the set of agents $\{\alpha \in A \mid s_0(\alpha)\in Q_\sigma \text{ and } t(\alpha)\in Q_\tau \}$. Notice that any agent $\alpha \in A$ belongs in exactly one of the $A_0$, $A_{\sigma,\tau}$, where $(\sigma, \tau) \in [2^{|M|}] \times [2^{|M|}]$. We will say that two agents $\alpha,\beta$ are of the same agent type if $\alpha,\beta \in A_0$ or $\alpha,\beta \in A_{\sigma,\tau}$, for a vertex type $(\sigma, \tau) \in [2^{|M|}] \times [2^{|M|}]$.

    In the construction of $\mathcal{K}$ we have to treat carefully the definition of the agent set $A'$ that will be considered. This is done inductively. We first construct a sequence of subsets of agents $A^0,A^1,\ldots,A^p$ where $p \le 2^{|M|}$ and $|A^p|$ is bounded by a function of $|M|$.
    
    We start with the description of $A^0$. From each agent type $(\sigma, \tau) \in [2^{|M|}] \times [2^{|M|}]$ we select (arbitrarily) $\min\{ \kappa, |A_{\sigma, \tau}| \}$ agents. Let $A^0_{\sigma, \tau}$ be this set.
    The set $A^0$ is defined as the set $A_0 \cup \bigcup_{(\sigma, \tau) \in [2^{|M|}] \times [2^{|M|}]} A_{\sigma, \tau}$. Notice that $|A^0| \le 2|M|+ 2^{|M|} (10|M|)^{|M|+1} \le (20|M|)^{|M|+1} $; this will be used later in order to provide an upper bound the size of $A^p$.

    Next, assume that we have constructed the set $A^i$ for some $i \in [0,p-1]$.
    For each $\tau \in [2^{|M|}]$, we define a set of agents $X^i_\tau$ as follows:
    \begin{itemize}
        \item If $|Q_\tau| > 3|A^i|$, then $X^i_\tau=\emptyset$, otherwise 
        \item $X^i_\tau = \bigcup_{\sigma \in [2^{|M|}]} (A_{\sigma,\tau} \cup A_{\tau, \sigma})$.
    \end{itemize} 
    Now, if for all $\tau \in [2^{|M|}]$ it holds that $X^i_\tau \subseteq A^i$, we stop and we have constructed the sequence.
    Otherwise we select one $\tau\in [2^{|M|}]$ such that $X^i_\tau \setminus A^i \neq \emptyset$, and we set $A^{i+1} = X^i_{\tau} \cup A^i$.

    We need to show that $p \le 2^{|M|}$ and $|A^p|$ is bounded by a function of $|M|$.
    Notice that if the sequence stops with $A^0$, then this is true. Therefore, we first assume that $p \in [2^{|M|}]$.
    
    \begin{claim}
        Assuming that we have constructed the set $A^q$ for an integer $q \in [p]$ then:
        \begin{itemize}
            \item there exist at least $q$ vertex types $\tau \in [2^{|M|}]$ such that $X^{q-1}_\tau \neq \emptyset$ and $X^{q-1}_\tau \subseteq A^q$ and
            \item $|A^q| \le 7 |A^{q-1}| $.
        \end{itemize}
    \end{claim}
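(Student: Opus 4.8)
The plan is to prove the claim by induction on~$q$. The key structural fact I will lean on is that the sets~$X^i_\tau$ depend on the index~$i$ only through the cardinality~$|A^i|$: by definition $X^i_\tau = \emptyset$ exactly when $|Q_\tau| > 3|A^i|$, and otherwise $X^i_\tau = \bigcup_{\sigma \in [2^{|M|}]} (A_{\sigma,\tau} \cup A_{\tau,\sigma})$, a set that does not depend on~$i$ at all. Since every $A^{i+1} = X^i_{\tau} \cup A^i$ contains~$A^i$, the sequence $(|A^i|)_i$ is non-decreasing; hence once $X^i_\tau$ is nonempty it remains nonempty and equal to the fixed set $\bigcup_\sigma (A_{\sigma,\tau} \cup A_{\tau,\sigma})$ for all larger indices. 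The other ingredient is a cardinality estimate: since distinct agents have distinct starting vertices and distinct target vertices, at most $|Q_\tau|$ agents have their start in~$Q_\tau$ and at most $|Q_\tau|$ agents have their target in~$Q_\tau$, so $|X^i_\tau| \le 2|Q_\tau|$ whenever $X^i_\tau \neq \emptyset$.

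For the base case $q=1$, the construction gives $A^1 = X^0_{\rho} \cup A^0$ for some vertex type~$\rho$ with $X^0_\rho \setminus A^0 \neq \emptyset$; in particular $X^0_\rho \neq \emptyset$ and $X^0_\rho \subseteq A^1$, which yields the required single vertex type. Nonemptiness of $X^0_\rho$ forces $|Q_\rho| \le 3|A^0|$, hence $|X^0_\rho| \le 2|Q_\rho| \le 6|A^0|$ and $|A^1| \le |X^0_\rho| + |A^0| \le 7|A^0|$, settling the size bound. For the inductive step, assume the claim for some~$q$ with $1 \le q \le p-1$ and write $A^{q+1} = X^q_\sigma \cup A^q$ for the vertex type~$\sigma$ selected in the construction, where $X^q_\sigma \setminus A^q \neq \emptyset$. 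The size bound is identical to the base case: $X^q_\sigma \neq \emptyset$ gives $|Q_\sigma| \le 3|A^q|$, so $|X^q_\sigma| \le 6|A^q|$ and $|A^{q+1}| \le 7|A^q|$. For the count, the induction hypothesis supplies $q$ distinct types $\rho_1,\dots,\rho_q$ with $X^{q-1}_{\rho_j} \neq \emptyset$ and $X^{q-1}_{\rho_j} \subseteq A^q$; by the "eventually constant" observation, $X^q_{\rho_j} = X^{q-1}_{\rho_j} \neq \emptyset$ and $X^q_{\rho_j} = X^{q-1}_{\rho_j} \subseteq A^q \subseteq A^{q+1}$ for each~$j$. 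The new type~$\sigma$ satisfies $X^q_\sigma \neq \emptyset$ and $X^q_\sigma \subseteq A^{q+1}$ by construction, and $\sigma \neq \rho_j$ for every~$j$ because $X^q_{\rho_j} \subseteq A^q$ whereas $X^q_\sigma \not\subseteq A^q$. Thus $\rho_1,\dots,\rho_q,\sigma$ are $q+1$ distinct vertex types with the desired property.

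The main obstacle I anticipate is not a hard argument but a delicate bookkeeping point: one must state precisely that a vertex type that has already been "absorbed" into some~$A^i$ stays absorbed and, crucially, that its associated agent set $X^i_\tau$ does not grow afterwards. This is exactly what allows the $q$ witnesses from the induction hypothesis to carry over unchanged to stage~$q+1$, so that the new type~$\sigma$ genuinely increments the count. Everything else reduces to the injectivity-based estimate $|X^i_\tau| \le 2|Q_\tau|$ and the threshold $|Q_\tau| \le 3|A^i|$ that accompanies nonemptiness of~$X^i_\tau$; together these give the multiplicative factor~$7$ cleanly.
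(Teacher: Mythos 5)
Your proof is correct and follows essentially the same route as the paper: induction on $q$, using that a nonempty $X^{i}_\tau$ equals the fixed set $\bigcup_\sigma(A_{\sigma,\tau}\cup A_{\tau,\sigma})$ and stays that way as $|A^i|$ grows, together with the injectivity bound $|X^i_\tau|\le 2|Q_\tau|\le 6|A^i|$ to get the factor~$7$. The only difference is cosmetic: where you argue directly that the newly absorbed type $\sigma$ is distinct from the $q$ inherited witnesses (since $X^q_\sigma\not\subseteq A^q$), the paper phrases the same step as a short proof by contradiction via its stopping condition.
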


    \begin{proofclaim}
    The proof is by induction on $q$.

    \smallskip
    \noindent\textbf{Base of the induction $\boldsymbol{(q=1)}$:} We are constructing $A^1$ only in the case where there exists $\tau \in [2^{|M|}]$ such that $X^{0}_\tau \neq \emptyset$ and $X^{0}_\tau \cap  A^{q-1} \neq \emptyset$. Then, by construction, $A^{1} \supseteq X^{0}_\tau$. Also by construction, $X^{0}_\tau \le 2 |Q_\tau|\le 6 |A^{0}|$, for all $\tau \in [2^{|M|}]$. Since we have added exactly one set $X^{0}_\tau$, i.e., $A^1 = X^0_\tau \cup A^0$, we have $|A^1| \le |A^0| +  6 |A^0| \le 7 |A^0| $.

    \smallskip
    \noindent\textbf{Induction Hypothesis ($\boldsymbol{1\le i <q}$):} We assume that the claim is true for an $1 \le i <q$.

    \smallskip
    \noindent\textbf{Induction Step ($\boldsymbol{i+1}$):} 
    Notice that there exists at least $i$ vertex types $\tau_1,\ldots,\tau_i$ such that $X^{i-1}_{\tau_j} \neq \emptyset$ and $X^{i-1}_{\tau_j} \subseteq A^{i}\subseteq A^{i+1}$.
    Also, since $X^{i-1}_{\tau_j} \neq \emptyset$, we have that $|Q_{\tau}| \le 2|A^i| \le 2|A^{i+1}|$.
    Therefore, $X^{i}_{\tau_j}=X^{i-1}_{\tau_j} \subseteq A^{i+1}$. 
    Assume that there is no other $\sigma \in [2^{|M|}]$ such that $X^{i}_\sigma \neq \emptyset $ and $ X^{i}_\sigma \subseteq A^{i+1}$.
    Then, $\bigcup_{\tau} X^i_\tau = \bigcup_{\tau} X^{i-1}_\tau \subseteq A^i$ and the sequence has ended before constructing $A^{i+1}$. This contradiction concludes the induction.

    For computing the size of $A^{i+1}$, it suffices to observe that, to construct it, we are selecting one of the previous agent types and setting $A^{i+1}  = X^i_\tau \cup A^i$. Since $|X^i_\tau| \le 2|Q_\tau| \le 6|A^{i}|$, we have that $|A^{i+1}| \le 7|A^i| $.\todo{DK: Why not 6 $\to$ 4? Since $|Q_{\tau}| \le 2|A^i| \le 2|A^{i+1}|$.}
    \end{proofclaim}

    Now we prove that $p\le 2^{|M|}$. Assume that at some point we have constructed the set $A^{2^{|M|}}$. By the previous claim, for any vertex type $\tau \in [2^{|M|}]$, we have that $X^{2^{|M|}-1}_{\tau} \neq \emptyset$ and $X^{2^{|M|}-1}_{\tau} \subseteq A^{2^{|M|}}$. Therefore, $\bigcup_{\tau \in [2^{|M|}]} X^{2^{|M|} - 1}_\tau \subseteq A^{2^{|M|}}$. This means that the process stops and  $p\le 2^{|M|}$.

    Finally, since $|A^0|$ is upper bounded by a computable function of $|M|$, $|A^i| < 7 |A^{i-1}|$ and $p \le 2^{|M|}$, we conclude that $|A^p|$ is upper bounded by a computable function of $|M|$. In particular,
    \[
        |A^p| \le 7^{2^{|M|}}|A^0| \le 7^{2^{|M|}} (20|M|)^{|M|+1}< 2^{2^{|M|+3}+ 2|M| \log |M|} < 2^{2^{|M|+4}} \,.
    \]

    We are ready to define the set $A'$ of agents that we will consider in $\mathcal{K}$. If $|A|\le \max\{ 2|A^p|, 100\}$, we set $A' = A$; otherwise $A' = A^p$. Hereafter, let $A'_{\sigma,\tau}=A'\cap A_{\sigma,\tau}$.

    Finally, we create the instance $ \mathcal{K} = \langle G[U], A', s'_0, t' \rangle$ where:
    \begin{itemize}
        \item $U = \bigcap_{\tau \in [2^{|M|}]} U_\tau \cup M $, where $U_\tau  \subseteq Q_\tau $ is such that
        \begin{itemize}
        \item if $|Q_\tau| \le 3|A'|$, then $U_\tau = Q_\tau $;
        \item otherwise, $U_\tau \supseteq (s_0(A') \cup t(A')) \cap Q_\tau$ and $|U_\tau| = 3|A'|$.
        \end{itemize}
        Notice that the selection of the vertices of $U_\tau \setminus (s_0(A') \cup t(A'))$ is arbitrary.
        \item The set $A'$ is defined as explained before.
        \item $s_0'\colon A' \rightarrow U$ is such that $s_0'(\alpha) = s_0(\alpha)$ for all $\alpha \in A'$.
        \item $t'\colon A' \rightarrow U$ is such that $t'(\alpha) = t(\alpha)$ for all $\alpha \in A'$.
    \end{itemize}
    The definition of $s_0'$ and $t'$ is consistent as, for all $\alpha \in A'$, both $s_0 (\alpha)$ and $t(\alpha)$ belong to $U$. Also, for each vertex type $\tau \in [2^{|M|}]$ we have that the selected set of vertices $U$ has size at most $3|A'|$ and $|A'| \le 2|A^p|$. It follows that $|U|$ is upper bounded by a computable function of $|M|$. In particular, $|U| \le |M|+2^{|M|} 3 |A'| \le |M|+ 6 \ 2^{|M|} 2^{2^{|M|+4}} < 2^{2^{|M|+5}}$.
    This completes the construction of $\mathcal{K}$ and the Step $2$. 

    \subsubsection{Step $3$.}\label{step3} 
    Here, we prove that if the instance $\mathcal{I}$ admits a feasible solution of makespan $m$, then the instance $\mathcal{K}$ that was constructed in the previous step admits a feasible solution $s'_1,\ldots,s'_{m}$ such that $|s'_i(A') \cap M| \ge |A|-|Q|$, for all $i \in [m]$.

    \begin{lemma} \label{lemma_distance_clique:K_admits_appropriate_solution}
        Given a feasible solution $s_1,\ldots,s_m$ of $\mathcal{I}$, of makespan $m = \kappa \le \big(3(2|M|+2)^{|M|}+2\big)$, we can compute a feasible solution $s'=s'_1,\ldots,s'_m$ of $\mathcal{K}$ such that $|s'_i(A') \cap M| = |s_i(A) \cap M|\ge |A|-|Q|$, for all $i \in [m]$.
    \end{lemma}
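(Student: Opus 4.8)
The plan is to construct the schedule $s'$ for $\mathcal{K}$ directly from the schedule $s$ for $\mathcal{I}$ by a careful ``renaming'' argument: in every turn we will reinterpret which agents of $A$ play the role of which agents of $A'$. First I would observe that the agents of $A_0 \subseteq A'$ (those with a source or target in $M$) should simply follow their own trajectory from $s$; the content of the lemma is really about the anonymous-looking agents sitting inside the clique. For a fixed turn $i$, consider the multiset of vertices $s_i(A) \cap Q$ occupied inside the clique, refined by vertex type: for each $\tau$, let $c_\tau^{(i)} = |s_i(A) \cap Q_\tau|$. The key point is that $A'$ was chosen (Step~2) to contain, for every agent type $(\sigma,\tau)$, either all of $A_{\sigma,\tau}$ or at least $\kappa \geq m$ representatives, and the vertex set $U$ retains, for every type $\tau$, either all of $Q_\tau$ or at least $3|A'|$ vertices including all sources and targets of $A'$. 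This slack is exactly what lets us mimic $s$.

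The main step is to define, for each $i\in[m]$, a bijection between a subset of $A$ and $A'$ that is consistent across consecutive turns in the sense needed for feasibility (each agent of $A'$ moves to a neighbor, no two collide, and at turn $m$ everyone of $A'$ sits on its true target). I would proceed turn by turn, maintaining an injection $\phi_i\colon A' \to A$ with the property that $s'_i := s_i \circ \phi_i$ is a valid placement into $U$, that $s'_i$ extends $s'_{i-1}$ by legal moves, and that $|s'_i(A')\cap M| = |s_i(A)\cap M|$. At turn $0$ we take $\phi_0$ to be the identity-like map sending each $\alpha\in A'$ to itself (this is legal since $s_0(A')\subseteq U$). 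To pass from turn $i-1$ to turn $i$: an agent $\alpha\in A'$ that in $s_{i-1}$ sits on vertex $u=s_{i-1}(\phi_{i-1}(\alpha))$ wants to move to $s_i(\phi_{i-1}(\alpha))$; if that vertex lies in $U$, keep $\phi_i(\alpha)=\phi_{i-1}(\alpha)$; if it lies in $Q_\tau \setminus U$, then since $|U\cap Q_\tau|\ge 3|A'|$ there is room, and because all vertices of a type are twins, we can instead route $\alpha$ (by swapping the roles of two $A$-agents of the same type, using the clique's full adjacency) to some free vertex of $U\cap Q_\tau$. The bookkeeping is: within each type $\tau$, the number of $A'$-agents that need to be inside $Q_\tau$ at any turn is at most $|A'|$, and at most $|A'|$ more vertices of $U\cap Q_\tau$ may be sources/targets, leaving at least $|A'|$ free twin-vertices of $U\cap Q_\tau$ to absorb the ``overflow'' of agents whose $s$-trajectory left $U$. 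Twinness guarantees any such rerouting step is a single legal move and preserves the clique structure.

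The main obstacle I anticipate is the consistency of these rerouting choices across time: a naive per-turn re-matching can force an agent of $A'$ to jump to a non-adjacent vertex between turns. I would handle this by tracking, for each turn, a partition of the ``active'' $A$-agents inside $Q_\tau$ into those whose $s$-path stays inside $U_\tau$ for a while versus those temporarily exiled, and only re-assign $A'$-labels when an agent re-enters $U_\tau$ — at which moment, since $G[Q]$ is a clique, any swap of labels between two same-type agents is realizable in that single turn without violating adjacency (it is an edge of the clique) and without creating a collision (both endpoints are occupied by agents being tracked). The $m \le 3(2|M|+2)^{|M|}+2$ bound on the number of turns, together with $\kappa \geq m$, is what guarantees we never run out of distinct representatives of a type in $A'$ when such relabelings accumulate. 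Finally, at turn $m$ each $\alpha \in A'$ must be at $t'(\alpha)=t(\alpha)\in U$: this holds because $t(\alpha)$ is a fixed vertex of $U$, and we ensure $\phi_m(\alpha)=\alpha$ by reserving the target vertices (they are in $U$ by construction and distinct across $A'$), steering the labels back to the identity in the final turn exactly as in the clique-solvability argument of Theorem~\ref{thm:clique:polynomial}. The equality $|s'_i(A')\cap M| = |s_i(A)\cap M|$ is immediate since $\phi_i$ is the identity on $A_0$ and the rerouting only ever moves labels among vertices of the same type, and $M$-vertices are never involved in rerouting. The inequality $|s_i(A)\cap M|\ge |A|-|Q|$ is trivial since all agents not in $M$ at turn $i$ occupy distinct vertices of $Q$.
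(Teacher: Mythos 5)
Your overall strategy --- matching agents of $A'$ to agents of $A$ and exploiting the fact that all vertices of a class $Q_\tau$ are twins to reroute trajectories into $U$ --- is the same one the paper uses. But there is a concrete gap at the heart of your argument: you assume that the only agents that ever occupy a vertex of $M$ are those of $A_0$ (agents whose source or target lies in $M$), and you conclude that $|s'_i(A')\cap M| = |s_i(A)\cap M|$ is ``immediate'' because $\phi_i$ is the identity on $A_0$ and ``$M$-vertices are never involved in rerouting.'' This is false in general: an agent $a\in A_{\sigma,\tau}$ with both endpoints in $Q$ may well pass through $M$ during the schedule, and if $a\notin A'$ some representative $\alpha'\in A'$ must occupy $s_i(a)\in M$ \emph{exactly} (vertices of $M$ are not twins of anything, so ``up to twin'' tracking does not suffice there, nor on the turns immediately before and after an $M$-visit). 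This is precisely why the paper introduces the sets $A^M_{\sigma,\tau}$ of $M$-visiting agents of each type, proves $|A^M_{\sigma,\tau}|\le\min\{\kappa,|A_{\sigma,\tau}|\}=|A'_{\sigma,\tau}|$, and insists that the (single, fixed) matching $N$ send $A'_{\sigma,\tau}$ onto a set $B_{\sigma,\tau}\supseteq A^M_{\sigma,\tau}$. The quantity $\kappa$ bounds the number of distinct $M$-visitors, not --- as you suggest --- the number of relabelings you can afford; without covering $A^M_{\sigma,\tau}$ your construction cannot certify the occupancy equality that Step~4 relies on.

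A second, related gap is the dynamic relabeling itself. The paper deliberately fixes one matching $N$ for the whole schedule and assigns each rerouted agent $\alpha'_p$ a \emph{private} twin vertex $u^\tau_p$ (indexed by $p$), which makes collision-freeness automatic and reduces swap-freeness in turns $2,\dots,m-1$ to swap-freeness of the original schedule; the remaining work is a careful case analysis of the first and last turns showing that any residual swap would force a swap in $s$. Your per-turn re-matching $\phi_i$ leaves both adjacency at relabeling events and swap-freeness unverified, and your appeal to Theorem~\ref{thm:clique:polynomial} for ``steering the labels back to the identity in the final turn'' does not apply: that theorem buys an extra two turns inside the clique, whereas here the makespan must remain exactly $m$. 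To repair the proof you should adopt a fixed matching whose image contains every $M$-visiting agent of each type, and then carry out the boundary-turn swap analysis explicitly.
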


    \begin{proof}
        First, let $A^M_{\sigma,\tau} \subseteq A_{\sigma,\tau}$ be the set of agents such that $\alpha \in A^M_{\sigma,\tau}$ if and only if $s_{i-1}(\alpha) \in M$ for some $i \in [m+1]$. Notice that $|A^M_{\sigma,\tau}| \le \min \{ \kappa, |A_{\sigma,\tau}| \} = |A'_{\sigma,\tau}|$ for all $(\sigma,\tau) \in [2^{|M|}] \times [2^{|M|}]$ (since there are at most $ \kappa$ agents that are going to pass through a vertex of $M$ throughout the schedule).
        
        We continue with some definitions. For any $\tau \in [2^{|M|}]$, if $U_\tau \neq Q_\tau$, we will select a set $U'_\tau \subseteq U_\tau \setminus \big( s'_0(A') \cup t'(A') \big)$ such that $|U'_\tau| = |A'|$.
        Notice that $U_\tau \setminus \big( s'_0(A') \cup t'(A') \big)$ has enough vertices for such a selection and that the selection is arbitrary.
        For the rest of the integers $\tau \in [2^{|M|}]$ (i.e., if $U_\tau \neq Q_\tau$), we set $U'_\tau = \emptyset$.
        We will define an (arbitrary) enumeration of the agents $A'$ and an (arbitrary) enumeration of the vertices of each set $U'_\tau$. 
        From now on, we will assume that $A' = \{\alpha'_1,\ldots,\alpha'_{|A'|}\}$ and $U'_\tau = \{u_1^\tau,\ldots, u_{|A'|}^\tau\}$, for each $\tau \in [2^{|M|}]$ where $U'_\tau \neq \emptyset$.
        
        Now, we will show that there exists a matching between agents of $A'$ and $A$ which we will use to create $s'$. The idea behind this is that we can pair an agent $\alpha' \in A'$ with an agent $\alpha \in A$, which allows us to use the trajectory of $\alpha$ in $G$ to compute the trajectory of $\alpha'$ in $G[U]$.

        To do so, we first define a set of agents $B$ as follows.
        Initially $B$ contains exactly the agents of $A_0$.
        For each agent type $(\sigma,\tau) \in [2^{|M|}]\times [2^{|M|}]$, we construct a set $B_{\sigma,\tau} $ such that  $A^M_{\sigma,\tau} \subseteq B_{\sigma,\tau} \subseteq A_{\sigma,\tau} $ and $|B_{\sigma,\tau}| = |A'_\sigma,\tau|$.
        Notice that, by construction, $|A^M_\sigma,\tau|\le |A'_\sigma,\tau|\le |A_\sigma,\tau|$ and, therefore, we have enough agents. Also, the selection of the agents of $B_{\sigma,\tau} \setminus A^M_{\sigma,\tau}$ is arbitrary. We set $B = \bigcup_{(\sigma,\tau) \in [2^{|M|}]\times [2^{|M|}]} B_{\sigma,\tau}  \cup  A_0$.
        
        We now define the matching we will use as a function $N\colon A' \rightarrow B$ such that:
        \begin{itemize}
            \item $|N(A')| = |A'|$ (i.e., $N$ is a matching),
            \item for all $\alpha \in A_0$ we have $N(\alpha') = \alpha' $,
            \item for all $ A'_{\sigma,\tau} = A_{\sigma,\tau}$, for some $(\sigma,\tau) \in [2^{|M|}]\times [2^{|M|}]$, and $\alpha' \in  A'_{\sigma,\tau}$, we have $N(\alpha') = \alpha' $ and
             for all $ A'_{\sigma,\tau} \neq A_{\sigma,\tau}$, we have $N(A'_{\sigma,\tau}) = B_{\sigma,\tau}$.
        \end{itemize}
        Notice that if $ A'_{\sigma,\tau} \neq A_{\sigma,\tau}$, the matching between the vertices $A'_{\sigma,\tau}$ and $B_{\sigma,\tau}$ is arbitrary.
        Also, such a matching between $A'$ and $B$ always exists. 

        \medskip 

        We are now ready to use the matching $N$ to create a feasible solution $s'=s'_1,\ldots,s'_m$ for $\mathcal{K}$, starting from a feasible solution $s_1,\ldots,s_m$ for $\mathcal{I}$.
        
        Let  $ \alpha_i = N(\alpha'_i) $ for all $\alpha'_i \in A'$.
        To construct $s'$ we proceed as follows:
        \begin{itemize}
            \item First, we set $s'_m(\alpha'_i) = t'(\alpha'_i)$, for each $i \in [|A'|]$.
            \item For each $i \in [m-1]$ and each $p \in |A'|$, if $s_i(\alpha_p) \in M$, then we set $s'_i(\alpha'_p) = s_i(\alpha_p)$. Otherwise, $s_i(\alpha_p) \in Q$.
            \item Then, for each $i \in [m-1]$ and each $p \in |A'|$, let $\tau \in 2^{|M|}$ be the vertex type such that $s'_i(\alpha_p) \in Q_\tau$.
            \begin{itemize}
            \item If $U'_\tau = \emptyset$, then we set $s'_i(\alpha'_p) = s_i(\alpha_p)$.
            \item Otherwise, if $s_i(\alpha_p) \in s_0(A') \cup t(A')$, then $s'_i(\alpha_p) = s_i(\alpha_p)$.
            \item Otherwise, $s'_i(\alpha_p) = u_p^\tau$. 
            \end{itemize}
        \end{itemize}      
        We claim that the schedule $s'$ constructed this way is feasible. To do that, we need to show that $s'$ is collision-free and swap-free. 

        \paragraph*{The schedule $s'$ is collision-free.}
        To start, observe that for each $i\in [0,m]$ and $p \in [|A|]$, we have that either $s'_{i},(\alpha'_p) = s_{i},(\alpha_p)$, or the vertices $s'_{i-1},(\alpha'_p)$ and $s_{i-1},(\alpha_p)$ have the same vertex type $\tau \in [2^{|M|}]$ (i.e., $s'_{i-1},(\alpha'_p), s_{i-1},(\alpha_p) \in Q_{\tau}$, for some $\tau \in [2^{|M|}]$).
        This is indeed clear for all $i \in [m-1]$ by the construction of $s'_i$. To see that this is also true for $i \in \{0,m\}$, recall that $\alpha'_p$ and $\alpha_p$ belong to the same agent type $(\mu, \sigma) \in [2^{|M|}] \times [2^{|M|}]$. That is, $\alpha_p, \alpha'_p \in A_{\mu, \sigma}$ for some $(\mu, \sigma) \in [2^{|M|}] \times [2^{|M|}]$. This means that $s_0(\alpha_p), s'_0(\alpha'_p) \in Q_{\mu}$ and $t(\alpha_p), t'(\alpha'_p) \in Q_{\sigma}$. This covers the case $i=0$.
        For $i=m$ we need to additionally observe that $t(\alpha_p) = s_m(\alpha_p)$ (by the feasibility of $s_1,\ldots,s_m$) and $s'_m(\alpha'_p) = t'(\alpha'_p)$.  

        Thus, for each agent $\alpha'_p\in A'$ and each $i \in [m]$, it holds that $s'_{i}(\alpha'_p) \in N_{G[U]} [s'_{i-1}(\alpha'_p)]$.
        Indeed, since $s'_{i}(\alpha'_p)$ has the same vertex type as $s_{i}(\alpha_p)$, we have that $N_G[s'_{i}(\alpha'_p)] = N_G[s'_{i}(\alpha'_p)]$.
        Similarly, we have that $N_G[s'_{i-1}(\alpha'_p)] = N_G[s'_{i-1}(\alpha'_p)]$.
        Since $s_{i}(\alpha_p) \in N_{G} [s_{i-1}(\alpha_p)]$, we conclude that $s'_{i}(\alpha'_p) \in N_{G} [s'_{i-1}(\alpha'_p)]$. Finally, since $s'_{i}(\alpha'_p) \in U$ and $N_{G[U]} [s'_{i-1}(\alpha'_p)] = U\cap N_{G} [s'_{i-1}(\alpha'_p)]$, we have that $s'_{i}(\alpha'_p) \in N_{G[U]} [s'_{i-1}(\alpha'_p)]$.

        \smallskip 
        
        Next, we prove that in $s'$ there is no turn where two agents occupy the same vertex. Formally, for any $i\in [m]$ and pair of agents $\alpha'_p, \alpha'_q \in A'$, where $p \neq q$, we have that $s'_i(\alpha'_p) \neq s'_i(\alpha'_q)$.
        Indeed, assume that this statement is not true and let $i \in [m]$ and $\alpha'_p, \alpha'_q \in A'$ with $p\neq q$, be such that $s'_i(\alpha'_p) = s'_i(\alpha'_q)=w$.
        Note that the claimed statement does hold for $i=m$. Indeed, $s'_m (\alpha'_p) = t'(\alpha'_p)$ for all $\alpha'_p \in A'$. So, we may assume that $i \in [m-1]$.
        Let $\tau \in [2^{|M|}]$ be the vertex type of $w$. That is, $s'_i(\alpha'_p) = s'_i(\alpha'_q)= w \in Q_{\tau}$. We can assume that $U_\tau = Q_{\tau}$ as, otherwise, by the construction of $s'_i$, we would have that $s'_i(\alpha'_p) = u^\tau_p \neq u^\tau_q = s'_i(\alpha'_q)$ (as $p\neq q$). Since $U_\tau = Q_{\tau}$, it follows by the construction of $s'_i$ that $w=s'_i(\alpha'_p) = s_i(\alpha_p)$ and $w=s'_i(\alpha'_q) =  s_i(\alpha_q)$. This means that $s_i(\alpha_p) = s_i(\alpha_q)$ which is a contradiction to the assumption that $s_1,\ldots,s_m$ is a feasible solution of $\mathcal{I}$.

        \medskip
        
        \paragraph*{The schedule $s'$ is swap-free.} We first claim that if $s_1,\ldots,s_m$ contains no swaps, then the potential swaps of $s'_1,\ldots,s'_m$ may appear only between the turns $s'_0,s'_1$ or $s'_{m-1},s'_m$. Formally, if $s_1,\ldots,s_m$ does not include any swaps then,
        for any $i\in [m-2]$ and agents $\alpha'_p, \alpha'_q \in A'$, where $p \neq q$, we have that $s'_i(\alpha'_p) \neq s'_{i+1}(\alpha'_q)$ or $s'_i(\alpha'_q) \neq s'_{i+1}(\alpha'_p)$. 
        
        Indeed, assume that there exists a turn $i \in [m+1]$ and two agents $\alpha'_p, \alpha'_q \in A'$, with $p\neq q$, such that $s'_{i}(\alpha'_p) = s'_{i+1}(\alpha'_q) = v$ and $s'_{i}(\alpha'_q) = s'_{i+1}(\alpha'_p) = u$. Let $v \in Q_{\tau}$ and $u \in Q_{\sigma}$, for two (not necessarily different) vertex types $\sigma,\tau \in [2^{|M|}]$.
        We claim that $U_\tau =Q_\tau$ and $U_\sigma = Q_\sigma$.
        Assume otherwise and, w.l.o.g., let $U_\tau \neq Q_\tau$. 
        Notice that at least one of $i-1$ and $i$ must be in $[m-1]$.
        Let $j \in \{i-1,i\} \cap [m-1]$. Then either $v = s'_{i-1}(\alpha'_p) = u^\tau_p$ (if $j=i-1$) or $v = s'_i(\alpha'_q) =u^\tau_q$ (if $j=i$). 
        In the first case, we have  $s'_i(\alpha'_q) = v = u^\tau_p$ which is a contradiction since (by the construction of $s'_i$) $\alpha'_q$ either occupies  $u^\tau_q \neq u^\tau_p$ or $t'(\alpha'_q) \notin U_\tau$ (while $u^\tau_p \in U_\tau$).
        In the second case, we have that $s'_{i-1}(\alpha'_p) = v = u^\tau_q$, which is a contradiction since, by construction of $s'_{i}$, $\alpha'_p$ either occupies  $u^\tau_p \neq u^\tau_q$ or $s'_0(\alpha'_q) \notin U_\tau$ (while $u^\tau_q \in U_\tau$).
        So, it holds that $U_\tau =Q_\tau$ and $U_\sigma = Q_\sigma$. Consider two turns $0<i-1<i<m$. By the construction of $s'_{i-1}$ and $s'_{i}$, we have that $s_{i-1}(\alpha_p) = s'_{i-1}(\alpha'_p) = s'_i(\alpha'_q) = s_i(\alpha_q)$ and $s_{i-1}(\alpha_q) = s'_{i-1}(\alpha'_q) = s'_i(\alpha'_p) = s_i(\alpha_p)$. Therefore, this swap appears between $s'_{i-1}$ and $s'_i$ only if there exists a swap between $s_{i-1}$ and $s_i$.

        \smallskip
        
        It remains to show that there are no swaps between $s'_0,s'_1$ or $s'_{m-1},s'_m$.
        Assume that this is not true, and let $\alpha'_p$ and $\alpha'_q$, $p\neq q$, be two agents such that $s'_{i-1}(\alpha'_p) = s'_{i}(\alpha'_q) = w$ and $s'_{i}(\alpha'_p) = s'_{i-1}(\alpha'_q) = w'$ for some $i \in \{1,m\}$. First observe that $w \notin U_{\tau}$ such that $U_\tau \neq Q_\tau$. Indeed, assuming otherwise means that $ s'_{i}(\alpha'_q) = s'_{1}(\alpha'_q) = u^\tau_q$ or $s'_{i-1}(\alpha'_p) =  s'_{m-1}(\alpha'_p) =  u^\tau_p$.
        By the selection of $U'_\tau$, either $w = u^\tau_q \neq s'_0(\alpha'_p) = w$ or $w = u^\tau_p \neq t'(\alpha_q) = s'_m(\alpha'_q)= w$. This is a contradiction. Using the similar arguments we can prove that $w' \notin U_{\tau}$ such that $U_\tau \neq Q_\tau$.
        Therefore, $w,w' \in  M \cup \{ U_\tau \mid \tau \in [2^{|M|}] \text{ and } U_\tau=Q_\tau \}$. 

        The final ingredient we need in order to show that there are no swaps in $s'_1,\dots,s'_m$ is to prove that:
        
        \begin{claim}
            $N(\alpha'_p) = \alpha'_p $ and $N(\alpha'_q) = \alpha'_q $.
        \end{claim} 
        \begin{proofclaim}
            Notice that if $A'=A$ then this is trivially true. Therefore, we may assume that $A'=A^p$.

            First, if $w,w' \in M$, then either $s'_0(\alpha'_p), s'_0(\alpha'_q) \in M$ or $s'_m(\alpha'_q), s'_m(\alpha'_q) \in M$ (depending on whether the swap happens in the first or the last turn). Therefore, $\alpha'_p, \alpha'_q \in A_0$ and thus, $N(\alpha'_p) = \alpha'_p$ and $N(\alpha'_q) = \alpha'_q$ (by the selection of $N$).

            Therefore, we can assume that at least one of $w$ and $w'$ is not in $M$. W.l.o.g., assume that $w \notin M$. Then, there exists a vertex type $\tau \in [2^{|M|}]$ such that  $w \in U_\tau = Q_\tau$. Therefore, either $\alpha'_p \in A'_{\tau,\sigma}$ or $\alpha'_q \in A'_{\sigma,\tau}$ (depending on whether the swap happens in the first or the last turn). Observe that $A'_{\tau,\sigma} = A_{\tau, \sigma}$ and $A'_{\sigma,\tau} = A_{\sigma,\tau}$. Indeed, since $U_\tau = Q_\tau$, we have that $Q_\tau \le 3|A'| = 3|A^p|$. Therefore, by the construction of $A^p$, we have that $A_{\tau, \sigma} \cup A_{\sigma,\tau} \subseteq A^p=A'$. Therefore, by the selection of $N$, we have that:
            \begin{itemize}
                \item either the swap happens in the first turn and $\alpha'_p \in A'_{\tau, \sigma} = A_{\tau,\sigma}$ or
                \item the swap happens in the last turn and $\alpha'_q \in A'_{\sigma,\tau} = A_{\sigma,\tau}$.
            \end{itemize}
            Also, by the selection of $N$ we have that:
            \begin{itemize}
                \item either the swap happens in the first turn and $N(\alpha'_p) = \alpha'_p$ or 
                \item the swap happens in the last turn and $N(\alpha'_q) =\alpha'_q$.
            \end{itemize}
        \end{proofclaim}
        
            It remains to show that, if the swap happens in the first turn then $N(\alpha'_q) = \alpha'_q$ otherwise $N(\alpha'_p) = \alpha'_p$.

            \medskip
            \noindent\textbf{Case a.} (the swap happens in the first turn): If $s'_0(\alpha'_q)=w' \in M$, then $\alpha'_q \in A_0$. Thus, by the selection of $N$, it follows that $N(\alpha'_q) = \alpha'_q$. Therefore, we may assume that there exists a vertex type $\phi\in [2^{|M|}]$ such that $s'_0(\alpha'_q)=w' \in U_{\phi} =Q_{\phi}$. Using similar arguments as before, we can conclude that there also exists a vertex type $\psi \in [2^{|M|}]$ such that $\alpha'_q \in A'_{\phi,\psi} = A_{\phi,\psi}$. Thus, $N(\alpha'_q) = \alpha'_q$.

            \medskip
            \noindent\textbf{Case b.} (the swap happens in the last turn): If $t'(\alpha'_p) = s'_m(\alpha'_p)=w' \in M$, then $\alpha'_p \in A_0$. Thus, by the selection of $N$, we have that $N(\alpha'_p) = \alpha'_p$. Therefore, we may assume that there exists a vertex type $\phi\in [2^{|M|}]$ such that $t'(\alpha'_p) = s'_m(\alpha'_p) = w' \in U_{\phi} =Q_{\phi}$. Using similar arguments as before, we can conclude that there also exists a vertex type $\psi \in [2^{|M|}]$ such that $\alpha'_p \in A'_{\psi,\phi} = A_{\psi,\phi}$. Thus, $N(\alpha'_p) = \alpha'_p$.

            \medskip 
            
            We are finally ready to show that $s'$ contains no swaps. Let us assume that this schedule contains a swap. By the discussion above, this swap is between two agents $\alpha'_p$ and $\alpha'_q$ where $N(\alpha'_p) = \alpha'_p$ and $N(\alpha'_q)=\alpha'_q$, and happens either in the first turn or in the last. We will show that this leads to a contradiction.

            Assume that the swap happens in the first turn.
            Recall that $s'_{0}(\alpha'_p) = s'_{1}(\alpha'_q) = w$ and $s'_{0}(\alpha'_p) = s'_{1}(\alpha'_q) = w'$. We have also proven that $w,w' \in  M \cup \{ U_\tau \mid \tau \in [2^{|M|}] \text{ and } U_\tau=Q_\tau \}$.
            Since $s'_{1}(\alpha'_q) = w$ and $w \in  M \cup \{ U_\tau \mid \tau \in [2^{|M|}] \text{ and } U_\tau=Q_\tau \}$, we have that $s'_1(\alpha'_q) = s_1(N(\alpha'_q)) = s_1(\alpha'_q)$. Similarly, since $s'_{1}(\alpha'_p) = w'$ and $w' \in  M \cup \{ U_\tau \mid \tau \in [2^{|M|}] \text{ and } U_\tau=Q_\tau \}$, we have that $s'_1(\alpha'_p) = s_1(N(\alpha'_p)) = s_1(\alpha'_p)$.
            Also, since $N(\alpha'_p) = \alpha'_p$ and $N(\alpha'_q)=\alpha'_q$, we have that $s_0(\alpha'_p) = s'_0(\alpha'_p)$ and $s_0(\alpha'_q) = s'_0(\alpha'_q)$. Therefore, there exists a swap in the feasible solution $s_1,\ldots,s_m$ of $\mathcal{I}$. This is a contradiction.
    
            We will prove the same happens in the case where the swap happen in the last turn. In this case we have that $s'_{m-1}(\alpha'_p) = s'_{m}(\alpha'_q) = w$ and $s'_{m-1}(\alpha'_p) = s'_{m}(\alpha'_q) = w'$. We have also proven that $w,w' \in  M \cup \{ U_\tau \mid \tau \in [2^{|M|}] \text{ and } U_\tau=Q_\tau \}$.
            Since $s'_{m-1}(\alpha'_p) = w$ and $w \in  M \cup \{ U_\tau \mid \tau \in [2^{|M|}] \text{ and } U_\tau=Q_\tau \}$, we have that $s'_{m-1}(\alpha'_p) = s_{m-1}(N(\alpha'_p)) = s_{m-1}(\alpha'_p)$. Similarly, since $s'_{m-1}(\alpha'_q) = w'$ and $w' \in  M \cup \{ U_\tau \mid \tau \in [2^{|M|}] \text{ and } U_\tau=Q_\tau \}$, we have that $s'_{m-1}(\alpha'_q) = s_{m-1}(N(\alpha'_q)) = s_{m-1}(\alpha'_q)$.
            Also, by the construction of $s'_m$ and $t'$, we have that $s_m(\alpha'_p) = t(\alpha'_p) = t'(\alpha'_p) = s'_m(\alpha'_p)$ and $s_m(\alpha'_q) = t(\alpha'_q) = t'(\alpha'_q) = s'_m(\alpha'_q)$. Therefore, there exists a swap in the feasible solution $s_1,\ldots,s_m$ of $\mathcal{I}$. This is a contradiction.
        
\medskip 

            Therefore,  $s'$ is a feasible solution for $\mathcal{K}$. It remains to prove that $|s'_i(A') \cap M| = |s_i(A) \cap M|$ for all $i \in [m]$. Before doing so, notice that $|s_i(A) \cap M| \ge |A|-|Q|$, as otherwise there is not enough space in $|s_i(A) \cap Q| > |Q|$ which contradicts the feasibility of $s_1,\ldots,s_m$.
    
            \begin{claim}
                 For the feasible solution $s'$ of $\mathcal{K}$, it holds that $|s'_i(A') \cap M| = |s_i(A) \cap M|$ for all $i \in [m]$.
            \end{claim}
    
            \begin{proofclaim}
                Notice that $s'_0(A') = s_0(A')$ and and $s'_m(A')=t'(A')= t(A')$. Therefore, the claim is trivial for the turns $0$ and $m$.
    
                Assume that there exists a turn $i \in [m-1] $ such that $|s'_i(A') \cap M| <k$. Let $A'_i$ be the set of agents such that $s'_{i}(A'_i) = s'_i(A') \cap M$ and $A_i$ the set of agents such that $s_{i}(A_i) = s_i(A) \cap M$. Also, recall that $|A_i| \ge k$ by the selection of $k$. 
                Moreover, by construction, $B \supseteq A_i$ (recall that $A'$ is matched to $B \subseteq A$ by $N$).

                Therefore, it suffices to prove that $|A'\setminus A'_i| \le |B \setminus A_i|\le |B|-k = |A'|-k$. This holds if $N(A'\setminus A'_i) \subseteq B \setminus A_i$.
                To prove this, consider an agent $\alpha' \in A'\setminus A'_i$. Since $s'_i(\alpha')\in Q$, we can conclude that $s_i(N(\alpha')) = s'_i(\alpha') \in Q$ (by the construction of $s'_i$). Therefore, $N(\alpha') \in B\setminus A_i$.
                Thus $N(A'\setminus A'_i) \subseteq B \setminus A_i$.
            \end{proofclaim}        
            
        This concludes the proof of Lemma~\ref{lemma_distance_clique:K_admits_appropriate_solution}, and completes the Step $3$.
    \end{proof}

    \subsubsection{Step $4$.}\label{step4}
    In this step we compute an optimal feasible solution $s'_1,\ldots,s'_{m'}$ of $\mathcal{K}$ such that $|s'_i(A') \cap M| = |s_i(A) \cap M|$ for all $i \in [m]$.

    \begin{lemma} \label{lemma_distance_clique:shortest_path_for_K_solution}
        Given a natural number $k$, we can compute an optimal solution $s'_1,\ldots,s'_{m'}$ of $\mathcal{K}$ such that $|s'_i(A') \cap M| \ge k$ for all $i \in [m]$ in time $O\Big( \big( \frac{|U|!}{(|U|-|A|)!}\big)^2 \Big)$.
    \end{lemma}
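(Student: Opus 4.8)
The plan is to turn this into a shortest-path computation in an explicitly constructed \emph{configuration graph} $H$. Recall that $\mathcal{K}$ lives on the graph $G[U]$ with $M \subseteq U$. The vertices of $H$ are all injective placements $c\colon A' \to U$ whose image contains at least $k$ vertices of $M$; since such a $c$ amounts to an ordered choice of $|A'|$ distinct vertices of $U$, there are at most $\tfrac{|U|!}{(|U|-|A'|)!} \le \tfrac{|U|!}{(|U|-|A|)!}$ of them. We place a directed edge $c \to c'$ whenever the transition from $c$ to $c'$ is a legal single turn of \MAPFShort{} on $G[U]$ in the restrictive model, namely: $c'(a) \in N_{G[U]}[c(a)]$ for every $a \in A'$, and no two agents swap, i.e.\ there is no pair $a \neq b$ with $c(a) = c'(b)$ and $c(b) = c'(a)$. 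Collision-freeness is encoded automatically by requiring $c'$ to be injective (hence a vertex of $H$), and rotations of three or more agents around a cycle of $G[U]$ remain allowed, as they must be.

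Next I would run a breadth-first search in $H$ from the placement $c_0$ with $c_0(a) = s'_0(a)$ to the placement $c_t$ with $c_t(a) = t'(a)$. If $c_0$ or $c_t$ violates the $k$-constraint, or if $c_t$ is unreachable from $c_0$, then $\mathcal{K}$ has no feasible solution meeting the constraint and we report this; otherwise a shortest path $c^0 = c_0, c^1, \dots, c^{m'} = c_t$ is returned and we output $s'_i := c^i$. This schedule satisfies the three conditions defining a feasible \MAPFShort{} solution precisely because consecutive $c^i$'s are joined by edges of $H$ and $c^{m'} = t'$, and it satisfies $|s'_i(A') \cap M| \ge k$ for every $i \in [m']$ since each $c^i$ is a vertex of $H$. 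Conversely, any feasible solution of $\mathcal{K}$ obeying the $k$-constraint is a walk from $c_0$ to $c_t$ in $H$; hence the BFS distance equals the minimum makespan over all such solutions, so the returned solution is optimal among them.

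For the running time, $H$ has at most $\tfrac{|U|!}{(|U|-|A|)!}$ vertices and therefore at most the square of that many ordered pairs of vertices to inspect when building the edge set, while the breadth-first search is linear in the size of $H$; since deciding whether a fixed pair $(c,c')$ is an edge takes time polynomial in $|U|$, the total running time is $O\big((\tfrac{|U|!}{(|U|-|A|)!})^2\big)$ up to polynomial factors, matching the claim, and --- combined with the fact that $|U|$ is bounded by a computable function of $\dc$ --- it is \FPT{} in $\dc$. The part that needs the most care is the equivalence between walks in $H$ and feasible schedules of $\mathcal{K}$: one must verify that the edge relation faithfully captures the restrictive semantics (forbidding only $2$-swaps, while longer rotations are allowed and collision-freeness follows from injectivity of the endpoints), and that a \emph{shortest} walk yields an \emph{optimal} schedule rather than merely a feasible one.
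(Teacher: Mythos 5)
Your proposal is correct and follows essentially the same route as the paper: both reduce the problem to a shortest-path/BFS computation in a configuration graph whose vertices are injective placements of $A'$ on $U$ and whose edges encode legal swap-free single turns, with the $k$-constraint enforced by restricting to placements occupying at least $k$ vertices of $M$ (you filter these up front, the paper deletes them afterwards — an immaterial difference). The correctness and running-time arguments match the paper's.
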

 
    \begin{proof}
        Let $pos_i: A\rightarrow U$, $i \in [\frac{|U|!}{(|U|-|A|)!}]$, be all the possible placements of agents of $A'$ on the vertices of $U$. That is, for any $i \in [\frac{|U|!}{(|U|-|A|)!}]$, if $\alpha, \beta \in A$ and $\alpha \neq \beta$, then $pos_i(\alpha) \neq pos_i (\beta)$ (i.e., all agents occupy vertices of $U$ but no two agents occupy the same vertex).
        
        We will create an auxiliary graph $H$ as follows. For each $i \in [\frac{|U|!}{(|U|-|A|)!}]$, create one vertex $v_i$ that represents the placement $pos_i$. Let us call $v_s$ and $v_t$ the vertices which represents the starting and terminal placements $s'_0$ and $t'$, respectively. This completes the definition of the vertex set of $H$.
        For the edge set of $H$, we add an edge between $v_i$ and $v_j$, $i,j \in [\frac{|U|!}{(|U|-|A|)!}]$, if and only if:
        \begin{itemize}
            \item for all $\alpha \in A'$ we have $pos_j (\alpha) \in N_{G[U]}[pos_i (\alpha)]$ and 
            \item there is no pair of agents $\alpha, \beta \in A'$ with $\alpha \neq \beta$ such that $pos_i (\alpha) = pos_j (\beta)$ and $pos_i (\beta) = pos_j (\alpha)$.
        \end{itemize} 

        The following observation allows us to use $H$ in order to find feasible solutions of $\mathcal{K}$.

        \begin{observation}
        Any feasible solution of $\mathcal{K}$ corresponds to a walk in $H$ that starts from $v_s$ and ends on $v_t$. The reverse is also true. That is, any walk in $H$ that starts from $v_s$ and ends to $v_t$ corresponds to a  feasible solution for $\mathcal{K}$.
        \end{observation}

        The previous observation allows us to compute an optimal solution of $\mathcal{K}$ in time $O(|E(H)| + |V(H)|) = O\big( \big( \frac{|U|!}{(|U|-|A|)!}\big)^2 \big)$ by computing a shortest path of $H$ between $v_s$ and $v_t$. Notice that it suffices to delete the vertices $v_i \in V(H)$ which represent placements $pos_i$ such that $|pos_i(A') \cap M| < k$ to guarantee that there is no $i \in [m]$ such that $|s'_i(A') \cap M| < k$, .
    \end{proof}

    Combining Lemmas~\ref{lemma_distance_clique:K_admits_appropriate_solution} and~\ref{lemma_distance_clique:shortest_path_for_K_solution} yields a feasible solution $s'_1,\ldots,s'_{m'}$ for $\mathcal{K}$ with the previous properties and $m'\le m$, if there exists a feasible solution $s_1,\ldots,s_m$ for $\mathcal{I}$. This concludes the Step $4$.

    \subsubsection{Step $5$.}\label{step5}
    Here we prove that we can use the feasible solution $s'_1,\ldots,s'_{m'}$ we computed for $\mathcal{K}$ in Step $4$ in order to compute a feasible solution of $\mathcal{I}$ with the same makespan.

    Recall that, when we constructed the set $A'$ of the instance $\mathcal{K}$ (in Step $2$), we had two cases; either $A'=A$ or not. If we are in the first case, then $s'_1,\ldots,s'_{m'}$ is, trivially, also a feasible solution of $\mathcal{I}$ of the same makespan. So, we may assume that $A'\neq A$. 
    Here, by the construction of $A'$, we know that $|A|> \max\{2|A'|,100\}$. Therefore, we have that $|A\setminus A'| >\max \{|A'|,50\}\ge 50$.

    Now, we will create a (not necessarily feasible) solution $s''_1,\ldots,s''_{m'}$ of $\mathcal{I}$ such that:
    \begin{itemize}
        \item $s''_i(\alpha') = s'_i(\alpha')$ for all $i \in [m']$ and $\alpha' \in A'$,
        \item $s''_i(A\setminus A') \subseteq Q$ for all $i \in [m']$ and
        \item $s''_{m'}(A\setminus A') = t(A\setminus A') $.
    \end{itemize}

    We start by setting $s''_i(\alpha') = s'_i(\alpha')$ for all $i \in [m']$ and $\alpha' \in A'$. Then we deal with the agents $\alpha \in A'\setminus A$ and $i \in [m']$. To do so, we first need to give some definitions. Let $D_0= Q \setminus s_0(A')$ and $D_m=t(A\setminus A')$.
    Then, for each $i\in [m-1]$, we define $D_{i} =  Q \setminus s'_{i}(A')$. The set $D_{i}$ gives us the vertices of $Q$ that are not occupied by any agent of $A'$ during the turn $i$. Also, by the definition of $s'_1,\ldots,s'_{m'}$, we have that $|D_i| \ge |A| -( |A'|- k) $, for any $i \in [m]$, as  $|s'_i(A') \cap M|\ge k$ and $k = \max \{0, |A|-|Q| \}$.

    We now make sure that $s''_1,\ldots,s''_{m'}$ is defined in such a way that $s''_i(A \setminus A) \subseteq D_i$ for all $i \in [m]$. Notice that $s''_0(A \setminus A) = s_0(A \setminus A) \subseteq D_0$. Then, we show how we construct $s''_{i+1}$ starting from $s''_i$ and $D_{i+1}$.
    The idea behind this construction is to create a bipartite graph $H_{i+1}$ where the edges $e\in \mathcal{E}_{i+1}$ of any perfect matching $\mathcal{E}_{i+1}$ of $H_{i+1}$ represent the trajectories of the agents.
    We start with a complete bipartite graph $H_{i+1}'= (X_1,X_2,E)$, where $|X_1| = |X_2| = \min\{ |D_{i}|, |D_{i+1}| \}$. Notice that, by the definition of $D_i$, we have that $|X_1|=|X_2| \ge |A \setminus A' |>50$. Let $S_i$ be a set of vertices such that $s''_i(A\setminus A') \subseteq  S_i \subseteq  D_i $ and $|S_i| = |X_1|$. Each vertex of $X_1$ will represent a unique vertex of $S_1$.
    Similarly, let $S_{i+1}$ be a set of vertices such that $D_{i+1} \supseteq S_{i+1} $ and $|S_{i+1}| = |X_2|$.
    We compute a matching of $H'_{i+1}$. In particular, for any $w\in X_1$, let $w'$ be the vertex of $S_i$ represented by $w$ in $H'$. If $w' =  s'_{i+1}(\alpha')$, for $\alpha' \in A'$ and $s'_{i}(\alpha')=y' \in S_{i+1}$, then we remove the edge $wy$, where $y$ represents $y'$ in $H_{i+1}$. We repeat this for all vertices of $X_1$. It is not hard to see that the removed edges are a matching of $H'_{i+1}$ (as, for each $w$, there is no more than one $y$ and each $y$ can be the pair of only one $w$). Let $H_{i+1}$ be the graph after the removal of this matching. Notice that $H_{i+1}$ is a bipartite graph with $|X_1|=|X_2| >50$ and $\delta(H_{i+1}) = |X_1|-1$. By Hall’s Theorem, we know that there exists a matching $\mathcal{E}_{i+1}$ in $H_{i+1}$. Before we proceed, we will further modify $\mathcal{E}_{i+1}$.
    If there exists a pair of edges $w_1y_1, w_2y_2 \in \mathcal{E}_{i+1}$ such that $w_1$ and $y_2$ represent the same vertex $v_1$ of $V(G)$, and $y_1$ and $w_2$ represent the same vertex $v'_1$ in $V(G)$, then we remove $w_1y_1$ and $w_2y_2$ from $\mathcal{E}_{i+1}$ and we add $w_1y_2$ and $w_2y_1$. We repeat this until $\mathcal{E}_{i+1}$ has no such pair of egdes.

    We will now show that $w_1y_2$ and $w_2y_1$ belong in $E(H_{i+1})$. Indeed, the missing edges represent potential swaps between agents of $A\setminus A'$ and agents of $A$. Thus, if an edge is missing, it means that one of the vertices of this edge represents a vertex occupied by an agent of $A'$. This is a contradiction, as both vertices are occupied by agents of $A\setminus A'$. Additionally, observe that after resolving one such pair, we are not creating any new ones. Indeed, in order to create new pairs with this property, we should have one of the two newly added edges to be included in such a pair. W.l.o.g., assume that $w_1y_2$ is included in a new pair of edges with this property. Then, there exists another edge $w'_1y'_2$ such that $y'_2$ represents the same vertex as $w_1$, i.e., the vertex $v_1$. Since each vertex is represented at most once in each set $X_1$ and $X_2$, we have that $y'_2=y_1$. This leads to a contradiction as $w'_1y'_2 \neq w_2y_1$, and $\mathcal{E}_{i+1}$ is a matching of $H_{i+1}$ (thus no other edge $w'_1y_1$ can belong in $\mathcal{E}_{i+1}$).

    We are now able to define $s''_{i+1}$ using the matching $\mathcal{E}_{i+1}$ of $H_{i+1}$.
    Recall that for all agents $\alpha \in A\setminus A'$, we have that $s''_{i}(\alpha) \in S_i$. Let $s''_{i}(\alpha) = v_{\alpha}$ and let $w_\alpha$ be the vertex that represents $v_\alpha$ in $X_1$. Since $\mathcal{E}_{i+1}$ is a perfect matching of $H_{i+1}$, there exists exactly one edge $w_\alpha y_\alpha\in \mathcal{E}_{i+1}$, where $y_\alpha \in X_2$. Recall that $y_\alpha$ represents a vertex $v'_\alpha \in S_{i+1}$. We set $s''_{i+1}(\alpha) = v'_\alpha$.

    Notice that $|s''_{i+1}(A\setminus A)| = |A\setminus A|$ since it is constructed by a matching. Also, $s''_{i+1}(A\setminus A) \cap s''_{i+1}(A') = \emptyset$ since $s''_{i+1}(A\setminus A) \subseteq D_{i+1}$ and $D_{i+1}= Q\setminus s'_{i+1}(A') = s''_{i+1}(A') $. Therefore, for all agents $\alpha\in A$, we have that $s''_{i+1}(\alpha) \in N[s''_{i}(\alpha)]$ and $s''_{i+1}(\alpha) \neq s''_{i+1}(\beta)$ (for any $\beta\neq \alpha$).

    We proceed by proving that $s''_1,\dots,s''_{m'}$ contains no swap between any pair of agents during turn $i+1$.
    We assume that a swap happens between agents $\alpha$ and $\beta$. Therefore, we have that $s''_{i+1}(\alpha) = s''_{i}(\beta)$ and $s''_{i+1}(\beta) = s''_{i}(\alpha)$. Notice that at least one of the $\alpha$ and $\beta$ does not belong in $A'$. Indeed, if both $\alpha,\beta \in A'$, we have $s'_{i+1}(\alpha) = s''_{i+1}(\alpha) = s''_{i}(\beta)=s'_{i}(\beta)$ and $s'_{i+1}(\beta) = s''_{i+1}(\beta) = s''_{i}(\alpha)= s'_{i}(\alpha)$. Therefore, $s'_1\ldots, s'_{m'}$ includes swaps, which is a contradiction.
    Additionally, we know that at least one of $\alpha$ and $\beta$ does not belong in $A\setminus A'$. If that were the case, then there would exist a pair of edges $w_1y_1, w_2y_2 \in \mathcal{E}_{i+1}$ such that $w_1$ and $y_2$ would represent the same vertex $w'_1$ of $V(G)$, and $y_1$ and $w_2$ would represent the same vertex $w'_2$ in $V(G)$. However, this cannot happen by the construction of $\mathcal{E}_{i+1}$.
    Therefore, we can assume, w.l.o.g., that $\alpha \in A\setminus A'$ and $\beta \in A'$.
    Since $\alpha \in A\setminus A'$, $s''_{i}(\alpha) = v$ and  $s''_{i+1}(\alpha) = v'$, and by the definition of $s''_{i+1}$, there exists an edge $w_1y_1\in \mathcal{E}_{i+1}$ such that the vertex $w_1 \in X_1$ represents $v$ and the vertex $y_1 \in X_2$ represents $v'$. Notice that this edge cannot exist in $\mathcal{E}_{i+1}$. Indeed, since $s'_{i+1}(\beta) = s''_{i+1}(\beta) = v$ and  $s'_{i}(\beta) = s'_{i}(\beta)= v'$, then the edge $w_1y_1$ is one of the edges of the matching we removed from $H'_{i+1}$ in order to construct $H_{i+1}$. Therefore, such an edge cannot belong in the perfect matching $\mathcal{E}_{i+1}$ of $H_{i+1}$. Thus, no swaps are happening between any agents in $s''_1,\ldots,s''_{m'}$.

    Up until now, we have shown that $s''_1,\ldots,s''_{m'}$ is a feasible solution of $\mathcal{I}$. What remains to be done is to check whether $s''_{m'} $ is equal to $t$. Since $s''_{i}(\alpha')= s'_{i}(\alpha')$ for all $\alpha' \in A'$, the only problem we could face is from agents that belong in $A\setminus A'$. Consider the sequence $s''_1,\ldots,s''_{m'-1},t$. Since all agents of $A'$ have $t(\alpha) = s'_{m'}(\alpha') = s''_{m'}(\alpha') $, $s''_{m'-1}(A\setminus A') \subseteq Q$ and $s''_{m'}(A\setminus A') = t(A\setminus A') \subseteq Q$, the only problem in this sequence is that it may include swaps in the last turn. We will modify $s''_{m'-1}(\alpha)$, for every $\alpha \in A\setminus A'$, so that no swaps are happening in turns $m'-1$ and $m'$.

    To this end, assume that there are $p\le |A|/2$ swaps in $s''_1,\ldots,s''_{m'-1},t$ and let $(\alpha_j, \beta_j)$, $j \in [p]$, be the pairs such that $\alpha_j $ and $\beta_j$ are swapping positions in the last turn of $s''_1,\ldots,s''_{m'-1},t$. Notice that since $s''_{m'-1}(\alpha') = s'_{m'-1}(\alpha')$ and $t(\alpha') = s'_{m'}(\alpha')$, we know that in each pair at least one of the $\alpha_j$ and $\beta_j$ must belong in $A\setminus A'$ (otherwise $s'_{1}\ldots,s'_{m'}$ was not a feasible solution of $\mathcal{K}$). W.l.o.g., we assume that $\beta_j \in A\setminus A'$, for every $j\in [p]$. We distinguish cases according to whether $p \ge 4$ or not.

    \medskip
    \noindent\textbf{Case 1.} ($p\ge 4$): Our goal is to create a new positioning $s$ to replace $s''_{m'-1}$ in the sequence $s''_{1},\ldots,s''_{m'-1}, t$. We will construct $s$ such that $s(\alpha) = s''_{m'-1}(\alpha)$ for all $\alpha \in A\setminus \{\beta_1,\ldots,\beta_p\}$, $s(\{\beta_1,\ldots,\beta_p\}) = s''_{m-1}(\{\beta_1,\ldots,\beta_p\})$ and $s''_{1},\ldots,s''_{m'-2}, s, t$ is a feasible solution.

    Before we give $s$, we will reorder the pairs $(\alpha_j, \beta_j)$, for every $j \in [p]$.
    We will say that the ordering $\beta_1,\ldots,\beta_p$ is a \emph{bad ordering} if there exists a $j\in[p]$ such that there exists an agents $\alpha\in A$ with $s''_{m'-2}(\alpha)= s''_{m'-1}(\beta_{(j \mod p)+1})$ and $s''_{m'-1}(\alpha) = s''_{m'-1}(\beta_{j})$; the index $j$ for which this is true will be called \emph{bad}. If there is no bad index, we will say that the ordering is \textit{good}.

    \begin{claim}
        There exists a good ordering $\beta'_1,\ldots,\beta'_p $ of the agents $\beta_1,\ldots,\beta_p$.
    \end{claim}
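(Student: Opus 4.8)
The plan is to strip the definition of a \emph{bad ordering} down to a purely combinatorial condition about cyclic arrangements. First I would observe that, since $s''_{m'-1}$ is injective, the requirement $s''_{m'-1}(\alpha) = s''_{m'-1}(\beta_j)$ forces $\alpha = \beta_j$; hence the ordering $\beta_1,\dots,\beta_p$ is bad exactly when there is an index $j$ with $s''_{m'-2}(\beta_j) = s''_{m'-1}(\beta_{(j \bmod p)+1})$. Equivalently, assign to each agent $\beta_j$ its \emph{forbidden successor}: the unique agent $\beta_k$ (if one exists) with $s''_{m'-1}(\beta_k) = s''_{m'-2}(\beta_j)$, which is well defined precisely because $s''_{m'-1}$ is injective. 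Then an ordering is good if and only if no $\beta_j$ is immediately --- cyclically --- followed by its forbidden successor, so the goal becomes: produce a cyclic arrangement of the $p$ agents in which no agent precedes its forbidden successor.

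Next I would establish two structural properties of the digraph $F$ on the vertex set $\{\beta_1,\dots,\beta_p\}$ whose arcs are the pairs (agent, forbidden successor). Since $s''_{m'-2}$ is injective, every agent is the forbidden successor of at most one agent, so $F$ has in- and out-degree at most $1$; consequently its weakly connected components are directed paths (possibly single vertices) and directed cycles. Moreover $F$ has no $2$-cycle: if $\beta_j$ and $\beta_k$ with $j \neq k$ were mutual forbidden successors, then $\beta_j$ and $\beta_k$ would exchange positions between turns $m'-2$ and $m'-1$, contradicting that $s''_1,\dots,s''_{m'}$ is feasible, hence swap-free, as established above. In particular every directed cycle of $F$ has length at least $3$, and loops of $F$ (an agent that does not move from turn $m'-2$ to turn $m'-1$) are harmless, since in a cyclic arrangement of $p \ge 2$ distinct agents no agent follows itself.

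Then I would build a good arrangement component by component using reversals. For each component of $F$ I would pick a directed Hamiltonian path on its vertices that uses only \emph{reverses} of $F$-arcs: the reversal of the path for a path component, the reversal of any spanning path obtained by deleting one arc for a cycle component, and the single vertex for an isolated vertex. Concatenating these reversed paths and closing them up into a cycle gives the candidate ordering, and I would check its cyclic adjacencies fall into two types. Arcs lying inside a single component are reverses of $F$-arcs, hence not $F$-arcs because $F$ has no $2$-cycle. Linking arcs between consecutive components join distinct components, and since the forbidden successor of any agent lies in its own component of $F$, no linking arc is forbidden --- provided $F$ has at least two components. If instead $F$ is a single spanning component, necessarily a path or a cycle on $p \ge 4$ vertices, the one closing arc is verified directly: reversing a spanning path $v_1 \to \cdots \to v_p$ yields the cycle $v_p,\dots,v_1$ whose only non-reversed arc $v_1 \to v_p$ is safe because the forbidden successor of $v_1$ is $v_2 \neq v_p$; reversing a spanning cycle yields an arrangement all of whose arcs are reverses of $F$-arcs.

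The crux of the argument --- and the reason $p$ being at least a small constant matters --- is the no-$2$-cycle observation together with this closing-up step: without it, a single $F$-component on very few vertices (e.g.\ a forbidden $2$-cycle on three agents) would leave no legal way to close the arrangement. Once the reduction to the digraph $F$ and its two structural properties are in place, the reversal construction is routine, so I expect this claim to be among the shortest parts of Step~5.
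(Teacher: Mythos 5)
Your proof is correct, but it establishes the claim by a genuinely different argument than the paper. The paper proceeds by iterated local repair: starting from the given ordering it picks a bad index $j$, transposes $\beta_j$ with $\beta_{(j\bmod p)+1}$ (with one further transposition in a sub-case), and shows by case analysis that the number of bad indices strictly decreases, so the process terminates at a good ordering; this is where the hypothesis $p\ge 4$ is invoked. You instead reduce everything to the structure of the functional digraph $F$ of forbidden successors: in- and out-degree at most one (so $F$ decomposes into directed paths and cycles), and no $2$-cycles because a mutual pair would constitute a swap between $s''_{m'-2}$ and $s''_{m'-1}$, which is excluded since $s''$ was already shown swap-free earlier in Step~5. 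Reversing each component and concatenating then produces a good cyclic order outright. Your route is shorter, isolates exactly which facts are needed (swap-freeness of $s''$ and $p\ge 3$), and is just as effectively computable as the paper's iterative procedure. Two small remarks. First, your reduction rests on reading the printed definition of a bad ordering literally, so that injectivity of $s''_{m'-1}$ collapses the witness $\alpha$ to $\beta_j$; the paper's own proof manipulates the same literal condition, so you are consistent with it, but note that the later application of the claim (excluding swaps between $s''_{m'-2}$ and the new placement $s$) really calls for the condition $s''_{m'-1}(\alpha)=s''_{m'-2}(\beta_j)$, under which the forbidden-successor relation factors through an intermediate agent and your no-$2$-cycle step would have to be rechecked. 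Second, the parenthetical ``a forbidden $2$-cycle on three agents'' is garbled --- a $2$-cycle involves two agents --- though the point it illustrates (that only a very small single component could obstruct the closing-up step, and such components are excluded here) is sound.
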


    \begin{proofclaim}
        Assume that $\beta_1,\ldots,\beta_p$ is a bad ordering and thus, that it has $q\ge 1$ bad indices.
        We will create a new ordering that contains strictly less than $q$ bad indices.

        Let $j$ be a bad index in $\beta_1,\ldots,\beta_p$.
        We create the ordering $\beta'_1,\ldots,\beta'_p$ where $\beta'_j = \beta_{(j\mod p)+1}$, $\beta'_{(j \mod p)+1} = \beta_{j}$ and $\beta'_i=\beta_i$, for all $i \notin \{j,(j\mod p)+1\}$. Basically, we exchanged the orders of $j$ and $(j\mod p)+1$.
        Now, we consider cases based on whether $j$ is a bad index in the new ordering or not.

        \medskip
        \noindent\textbf{Case a.} ($j$ is not a bad index in $\beta'_1,\ldots,\beta'_p$):
        We claim that $\beta'_1,\ldots,\beta'_p$ has less bad indices than $q$.
        Indeed, we have removed $j$ from the bad indices; however we may have created some new bad indices by doing so. Since for any index $i \notin \{j-1,j,(j \mod p)+1\}$ we have that the $\beta'_i = \beta_i $ and $ \beta'_{(i\mod p)+1} =\beta_{(i\mod p)+1}$, it follows that $i$ is a bad index if and only if $i$ was a bad index in the starting ordering. Therefore, we need to check if $j-1$ and $(j \mod p)+1$ are now new bad indices (as we have assumed that $j$ is not a bad index).

        We first focus on $j-1$. Since $j$ was a bad index in $\beta_1,\ldots,\beta_p$, there exists an agent $\alpha \in A$ such that $s''_{m'-2}(\alpha) = s''_{m'-1}(\beta_{(j \mod p)+1}) = s''_{m'-1}(\beta'_{j})$ and $s''_{m'-1}(\alpha) = s''_{m'-1}(\beta_{j}) = s''_{m'-1}(\beta'_{(j \mod p)+1})$. Since $\beta'_{(j \mod p)+1} \neq \beta'_{j -1}$, we have that $j-$ is not a bad index in $\beta'_1,\ldots,\beta'_p$.

        Lastly, we deal with $(j \mod p)+1$. Again, since $j$ was a bad index in $\beta_1,\ldots,\beta_p$, there exists an agent $\alpha \in A$ such that $s''_{m'-2}(\alpha) = s''_{m'-1}(\beta_{(j \mod p)+1}) = s''_{m'-1}(\beta'_{j})$ and $s''_{m'-1}(\alpha) = s''_{m'-1}(\beta_{j}) = s''_{m'-1}(\beta'_{(j \mod p)+1})$.
        Since $s''_{m'-1}(\alpha) = s''_{m'-1}(\beta'_{(j \mod p)+1})$, there is no other agent $\beta$ such that $s''_{m'-1}(\beta) = s''_{m'-1}(\beta'_{(j \mod p)+1})$. Therefore $(j \mod p)+1$ is not a bad index.

        \medskip
        \noindent\textbf{Case b.} ($j$ is a bad index in $\beta'_1,\ldots,\beta'_p$):
        In this case there exists
        an agent $\alpha \in A$ such that
        $s''_{m'-2}(\alpha) = s''_{m'-1}(\beta'_{(j \mod p)+1}) = s''_{m'-1}(\beta_{j})$ and $s''_{m'-1}(\alpha) = s''_{m'-1}(\beta'_{j}) = s''_{m'-1}(\beta_{(j \mod p)+1})$.
        We create a third ordering $\beta''_1,\ldots,\beta''_p$ as follows.
        We set $\beta''_{(j \mod p)+1} = \beta_{(j+1\mod p)+1}$, $\beta''_{(j+1 \mod p)+1} = \beta_{(j \mod p)+1}$ and $\beta'_i=\beta_i$ for all $i \notin \{ (j\mod p)+1, (j\mod p)+1\}$.
        We will prove that $\beta''_1,\ldots,\beta''_p$ has strictly less than $q$ bad indices.

        Notice that for any index $i \notin \{j,(j \mod p)+1,(j+1 \mod p)+1\}$, we have that $\beta''_i = \beta_i$ and $\beta''_{(i \mod p)+1} = \beta''_{(i \mod p)+1}$. Therefore, any $i \notin \{j,(j \mod p)+1,(j+1 \mod p)+1\}$ is a bad index in $\beta''_1,\ldots,\beta''_p$ if and only if $i$ is a bad index in $\beta_1,\ldots,\beta_p$. It remains to prove that for any $i \in \{j,(j \mod p)+1,(j+1 \mod p)+1\}$, the index $i$ is not bad in $\beta''_1,\ldots,\beta''_p$.

        Consider the index $j$. Since $j$ was a bad index in $\beta_1,\ldots,\beta_p$, there exists an agent $\alpha \in A$ such that $s''_{m'-2}(\alpha) = s''_{m'-1}(\beta_{(j \mod p)+1}) = s''_{m'-1}(\beta''_{(j+1 \mod p)+1})$ and $s''_{m'-1}(\alpha) = s''_{m'-1}(\beta_{j}) = s''_{m'-1}(\beta''_{j})$.
        Since $s''_{m'-1}(\alpha) = s''_{m'-1}(\beta''_{j})$, there is no other agent $\beta$ such that $s''_{m'-1}(\beta) = s''_{m'-1}(\beta''_{j})$. Therefore $j$ is not a bad index in $\beta''_1,\ldots,\beta''_p$.

        Consider the index $(j \mod p)+1$. Since $j$ was a bad index in $\beta_1,\ldots,\beta_p$, there exists an agent $\alpha \in A$ such that $s''_{m'-2}(\alpha) = s''_{m'-1}(\beta_{(j \mod p)+1}) = s''_{m'-1}(\beta''_{(j+1 \mod p)+1})$ and $s''_{m'-1}(\alpha) = s''_{m'-1}(\beta_{j}) = s''_{m'-1}(\beta''_{j})$.
        Since $s''_{m'-1}(\alpha) = s''_{m'-1}(\beta''_{j}) \neq s''_{m'-1}(\beta''_{(j \mod p) +1})$, it follows that $(j \mod p)+1$ is not a bad index in $\beta''_1,\ldots,\beta''_p$.

        Finally, we consider the index $(j+1 \mod p)+1$. Since $j$ was a bad index in $\beta'_1,\ldots,\beta'_p$, there exists an agent $\alpha \in A$ such that $s''_{m'-2}(\alpha) = s''_{m'-1}(\beta'_{(j \mod p)+1}) = s''_{m'-1}(\beta_{j})$ and $s''_{m'-1}(\alpha) = s''_{m'-1}(\beta'_{j})= s''_{m'-1}(\beta_{(j \mod p)+1})$.
        If $(j+1 \mod p)+1$ is a bad index in $\beta''_1,\ldots,\beta''_p$, there must exist an agent $\beta \in A$ such that $s''_{m'-2}(\beta) = s''_{m'-1}(\beta''_{(j+2 \mod p)+1}) = s''_{m'-1}(\beta_{(j+2 \mod p)+1})$ and $s''_{m'-1}(\beta) = s''_{m'-1}(\beta''_{(j+1 \mod p)+1})= s''_{m'-1}(\beta_{(j \mod p)+1})$.
        Notice that $s''_{m'-1}(\alpha) = s''_{m'-1}(\beta_{(j \mod p)+1}) = s''_{m'-1}(\beta)$. Therefore, $\alpha = \beta$. This is a contradiction as $s''_{m'-2}(\alpha) = s''_{m'-1}(\beta_{j})$, $s''_{m'-2}(\beta) = s''_{m'-1}(\beta''_{(j+2 \mod p)+1}) = s''_{m'-1}(\beta_{(j+2 \mod p)+1})$ and $p\ge 4$.

        We can repeat this process until no bad indices appear in the ordering. Also, we can compute this ordering in polynomial time.
    \end{proofclaim}

    Thus, we may assume that we have a good ordering $\beta'_1,\ldots,\beta'_p$ and let $(\alpha'_j,\beta'_j)$, $j \in [p]$, be the new ordering of the swaps (that respects this new ordering) that exist in $s''_0,\dots,s''_{m'-1},t$ between $s''_{m'-1}$ and $t$.

    We are creating the promised placement $s$ as follows.
    For any $\alpha \in A \setminus \{\beta'_1,\ldots,\beta'_p \}$, we set $s(\alpha) = s''_{m'-1} (\alpha)$. Then, for any $j \in [p]$, we set $s(\beta'_{j}) = s''_{m'-1} (\beta'_{(j \mod p)+1})$.
    We  claim that $s''_{i},\ldots,s''_{m'-2},s,t$ is a feasible solution of $\mathcal{I}$.
    Since $s(\alpha) = s''_{m'-1} (\alpha)$, for all $\alpha \in A \setminus \{\beta'_1,\ldots,\beta'_p \}$ and $s(\{\beta'_1,\ldots,\beta'_p \}) = s''_{m'-1} (\{\beta'_1,\ldots,\beta'_p \}) \subseteq Q$, we just need to prove that there are no swaps in $s''_{i},\ldots,s''_{m'-2},s,t$. 

    Indeed, we will show that there are no swaps in the turn $m'-1$ (i.e., between $s''_{m'-2}$ and $s$).
    Assume that there exists such a swap.
    Since there are no swaps between $s''_{m'-2}$ and $s''_{m'-1}$, at least one of the agents $\alpha,\beta$ that are swapping must belong in the set $\{ \beta'_{i}\mid i \in [p]\}$.
    Let $\beta'_j$ and $\alpha$ be the agents that swap between $s''_{m'-2}$ and $s$. That means that $s''_{m'-2} (\alpha) = s(\beta'_j) = s''_{m'-2}(\beta'_{(j \mod p)+1})$ and $s''_{m'-2} (\alpha) = s (\alpha) = s''_{m'-2} (\beta'_j)$.
    This contradicts that $\beta'_1,\ldots,\beta'_p$ is a good ordering. Therefore, no swaps are happening in the turn $m'-1$.

    Next, assume that there exists a swap in the lats turn (i.e. between $s$ and $t$) and let $\alpha$ and $\beta$ be the agents that are swapping positions.
    Notice that if $\alpha, \beta \notin \{\alpha'_i,\beta'_i \mid j\in [p]\}$, then there is no swap between them.
    Indeed, in that case we have $s(\alpha)=s''_{m'-1}(\alpha)$ and
    $s(\beta)=s''_{m'-1}(\beta)$. Therefore, we may assume that there exist a pair  $\alpha'_j,\beta'_j$, that represents $\alpha, \beta$ and that at least one of $\alpha$ and $\beta$ is in $\{\alpha'_i,\beta'_i \mid j\in [p]\}$. We assume, w.l.o.g., that $\alpha \in \{\alpha'_i,\beta'_i \mid j\in [p]\}$.
    We consider two cases, $\alpha = \alpha'_j$ or $\alpha = \beta'_j$, for some $j \in [p]$.

    \medskip
    \noindent\textbf{Case 1.a} ($\alpha = \alpha'_j$):
    Recall that between $s''_{m'-1}$ and $t$, the agent $\alpha'_j$ was swapping with $\beta'_j$. Therefore, we have that $s''_{m'-1}(\alpha'_j) = t(\beta'_j)$ and $s''_{m-1}(\beta'_j) = t(\alpha'_j)$.
    Now, notice that by the construction of $s$, we have that $s(\beta'_{j-1 \mod p}) = s''_{m'-1}(\beta'_j) = t(\alpha'_j)$ and $t(\beta'_{j-1 \mod p}) = s''_{m'-1}(\alpha'_{j-1 \mod p} ) =s(\alpha'_{j-1 \mod p} ) \neq  s(\alpha'_{j} )$. Therefore, no swap includes $\alpha$.

    \medskip
    \noindent\textbf{Case 1.b} ($\alpha = \beta'_j$):
    Recall that between $s''_{m'-1}$ and $t$, the agent $\alpha'_j$ was swapping with $\beta'_j$. Therefore, we have that $s''_{m'-1}(\alpha'_j) = t(\beta'_j)$ and $s''_{m'-1}(\beta'_j) = t(\alpha'_j)$.
    By the construction of $s$, we have that $s(\beta'_j)= s''_{m'-1}(\beta'_{(j\mod p)+1}) = t (\alpha'_{(j\mod p)+1})$.
    Additionally, $s(\alpha'_{(j\mod p)+1})= s''_{m'-1}(\alpha'_{(j\mod p)+1}) = t(\alpha'_{(j\mod p)+1)} \neq t(\beta'_j)$.
    Therefore, no swap includes $\alpha$.

    This finishes the analysis of the case where there are $p\geq 4$ swaps in $s_1'',\dots,s_{m'}''$.
    
    \medskip
    \noindent\textbf{Case 2.} ($p < 4$):
    Recall that we are in the case $A'\neq A$. Therefore, we know that $|A\setminus A'| \ge 100$.
    Our goal is to select one agent $\gamma_j$, for each $j\in [p]$, and construct $s$ such that:
    \begin{itemize}
        \item $s(\beta_j) = s''_{m-1}(\gamma_j)$ for all $j \in [p]$,
        \item $s(\gamma_j) = s''_{m-1}(\beta_j)$ for all $j \in [p]$ and
        \item $s(\alpha) = s''_{m-1}(\alpha)$ for all $\alpha \in A\setminus \{\beta_j, \gamma_j \mid j\in[p]\}$.
    \end{itemize}
    We will also prove that for an appropriate $\gamma_j$, $j \in [p]$, the schedule $s''_1\ldots, s''_{m-2}, s, t$ is a feasible solution of $\mathcal{I}$.

    Let $A''$ be the set of agents $A\setminus (A' \cup \{\alpha_j,\beta_j \mid j \in [p]\}$. Notice that there are a least $44$ agents in $A''$.
    For each $j \in [p]$, we select an agent $\gamma_j \in A''$ that satisfies the following conditions:
    \begin{itemize}
        \item there is no agent $\alpha \in A'$ such that $s''_{m-2}(\alpha) = s''_{m-1}(\beta_j)$ and $s''_{m-1}(\alpha) = s''_{m-2}(\gamma_j)$,
        \item there is no agent $\alpha \in A'$ such that $s''_{m-2}(\alpha) = s''_{m-1}(\gamma_j)$ and $s''_{m-1}(\alpha) = s''_{m-2}(\beta_j)$,
        \item $s''_{m-2}(\gamma_j) \notin s''_{m-1}(\{\beta_k\mid k \in[p]\})$, for all $j \in [p]$
        \item $s''_{m-2}(\beta_j) \notin s''_{m-1}(\{\gamma_k\mid k \in[p]\})$, for all $j \in [p]$, and 
        \item $s''_{m-2}(\gamma_j) \notin s''_{m-1}(\{\gamma_k\mid k \in[p]\setminus \{j\} \})$, for all $j \in [p]$.
    \end{itemize}

    Notice that each one of the first $4$ conditions excludes at most $p$ agents from the set $A''$. Therefore, there exists $22$ agents in $A'$ that satisfy the the first $4$ conditions. 
    Notice also that by the last condition, we have that the selection of an agent $\gamma$ to be one of the $\gamma_i$, $i \in [p]$, excludes at most $2(p-1)$ agents to be selected together with $\gamma$. Therefore, we can select arbitrarily $\gamma_1$ (from the set of the $22$ agents) then exclude itself and the other $4$ agents from the set and repeat the process.
    Finally, even if  $p\le 3$, we have $17$ available agents for the selection of $\gamma_2$ and (after the arbitrary selection of $\gamma_2$) we have $17$ available agents for the selection of $\gamma3$. Therefore, we can always compute $\gamma_i$, $i\in [p]$. Also, since we can compute the excluded agents in polynomial time and the selection of $\gamma_i$ is arbitrary, we can compute $\gamma_i$, $i \in [p]$, in polynomial time.
    
    We claim that the selection of $\gamma_j$ is such that the schedule $s''_1\ldots, s''_{m-2}, s, t$ is a feasible solution of $\mathcal{I}$.
    Since $s(\{\beta_j,\gamma_j \mid j\in [p]\}) = s''_{m-1}( \{\beta_j,\gamma_j \mid j\in [p]\} )$ and $s(\alpha) = s''_{m-1}(\alpha)$ for all $\alpha \in A\setminus \{\beta_j,\gamma_j \mid j\in [p]\}$, we just need to prove that there are no swaps in $s''_1\ldots, s''_{m-2}, s, t$.
    Notice that swaps can happen only in the last two turns.
    Additionally, any potential swap must include an agent $\alpha \in \{\beta_j,\gamma_j \mid j\in [p]\}$. Indeed, if there exists a swap between a pair $(\alpha,\beta)$ with $\alpha , \beta \in A\setminus \{\beta_j,\gamma_j \mid j\in [p]\}$, then we have that $s(\alpha) =s''_{m-1}(\alpha)$ and $s(\beta) =s''_{m-1}(\beta)$. Therefore, this swap appears between $s''_{m-1}$ and $t$ in the sequence $s''_1,\ldots,s''_{m-1},t$ and $(\alpha,\beta)$ is one of the pairs $(\alpha_j,\beta_j)$. This means that at least one of these agents belongs in $\{\beta_j \mid j\in [p]\}$. This contradicts the assumption that $\alpha , \beta \in A\setminus \{\beta_j,\gamma_j \mid j\in [p]\}$.

    Therefore, we can assume that at least one of $\alpha$ and $\beta$ belongs in $\{\beta_j,\gamma_j \mid j\in [p]\}$.
    Assume, w.l.o.g.,  that $\beta \in \{\beta_j,\gamma_j \mid j\in [p]\}$. We consider two cases according to whether $\beta \in \{\beta_j \mid j\in [p]\}$ or $\beta \in \{\gamma_j \mid j\in [p]\}$.

    \medskip
    \noindent\textbf{Case 2.a.} ($\beta = \beta_j $ for some $j\in [p]$):
    Since we have assumed that there exists a swap between $\alpha$ and $\beta$ we have that:
    \begin{itemize}
        \item either $s(\alpha) = t(\beta)$ and $s(\beta) = t(\alpha)$, or
        \item  $s''_{m-2}(\alpha) = s(\beta)$ and $s''_{m-2}(\beta) = s(\alpha)$.
    \end{itemize}
    We consider each case separately.

    First, we assume that $s(\alpha) = t(\beta)$ and $s(\beta) = t(\alpha)$.
    By the construction of $s$, we have $s(\beta) = s(\beta_j) = s''_{m-1}(\gamma_j)$. Since we have $t(\alpha) =  s''_{m-1}(\gamma_j)$, we claim that $\alpha \notin \{\alpha_k,\beta_k\mid k \in [p]\}$.
        Indeed, if $\alpha \in \{\alpha_k\mid k \in [p]\}$, then $t(\alpha) \in s''_{m-1}(\{\beta_k\mid k \in [p]\})$ and $s''_{m-1}(\gamma_j)\notin s''_{m-1}(\{\beta_k\mid k \in [p]\})$ (as $\gamma_j \notin \{\beta_k\mid k \in [p]\}$ by definition).
        Additionally, if $\alpha \in \{\beta_k \mid k \in [p]\}$, then
        $t(\alpha) \in s''_{m-1}(\{\alpha_k\mid k \in [p]\})$ and $s''_{m-1}(\gamma_j)\notin s''_{m-1}(\{\alpha_k\mid k \in [p]\})$ (as $\gamma_j \notin \{\alpha_k\mid k \in [p]\}$ by definition).
    Thus, we have that $\alpha \in A\setminus \{\alpha_k,\beta_k\mid k \in [p]\}$. Therefore $t(\beta_j)s=(\alpha) = s''_{m-1}(\alpha)$. This is a contradiction as $t(\beta_k) \in s''_{m-1}(\{ \alpha_k\mid k \in [p]\})$.

    Then, we assume that $s''_{m-2}(\alpha) = s(\beta)$ and $s''_{m-2}(\beta) = s(\alpha)$. Again we have $s(\beta) = s(\beta_j) = s''_{\gamma_j}$.
    Notice that if $\alpha \notin\{ \beta_k,\gamma_k \mid k \in [p]\}$, we have a contradiction as $s''_{m-1}(\alpha) = s(\alpha) = s''_{m-2}(\gamma_j)$ and $s''_{m-1}(\gamma_j) = s(\beta_j) = s''_{m-2}(\alpha)$ which is excluded by the selection of $\gamma_j$. So, we may assume that $\alpha \in\{ \beta_k,\gamma_k \mid k \in [p]\}$.
    If $\alpha =\beta_k$, for $k \neq j$, we have that $s''_{m-2}(\beta_j) = s(\alpha)= s(\beta_k) =  s''_{m-1}(\gamma_k)$. This contradicts the selection of $\gamma_k$.
    Finally, if $\alpha =\gamma_k$, for a $k \neq j$, then $ s''_{m-2}(\gamma_k)= s''_{m-2}(\alpha) = s(\beta_j)= s''_{m-1}(\gamma_j)$. This contradicts the selection of $\gamma_j$ and $\gamma_k$.

    \medskip
    \noindent\textbf{Case 2.b.} ($\beta = \gamma_j $ for some $j\in [p]$):
    Since we have assumed that there exists a swap between $\alpha$ and $\beta$, we have that:
    \begin{itemize}
        \item either $s(\alpha) = t(\beta)$ and $s(\beta) = t(\alpha)$ or
        \item  $s''_{m-2}(\alpha) = s(\beta)$ and $s''_{m-2}(\beta) = s(\alpha)$.
    \end{itemize}
    We consider each case separately.
    First, we assume that $s(\alpha) = t(\beta)$ and $s(\beta) = t(\alpha)$. Here we have that $s(\beta) = s(\gamma_j) = s''_{m-1}(\beta_j)$.
    Recall that $t(\alpha_j) = s''_{m-1}(\beta_j)$ (by the definition of $(\alpha_j,\beta_j)$). Therefore, $\alpha = \alpha_j$. This leads to a contradiction as $s(\alpha) = s(\alpha_j)= s''_{m-1}(\alpha_j) \neq t(\gamma_j) $ (since it is known that $s''_{m-1}(\alpha_j) = t(\beta_j)$).

    Then, we assume that $s''_{m-2}(\alpha) = s(\beta)$ and $s''_{m-2}(\beta) = s(\alpha)$.
    Again, we have that $s(\beta) = s(\gamma_j) = s''_{m-1}(\beta_j)$.
    Notice that, if $\alpha \notin\{ \beta_k,\gamma_k \mid k \in [p]\}$,
    we have a contradiction as $s''_{m-1}(\alpha) = s(\alpha) = s''_{m-2}(\beta_j)$ and $s''_{m-1}(\beta_j) = s(\gamma_j) = s''_{m-2}(\alpha)$ which is excluded by the selection of $\gamma_j$.
    So, we may assume that $\alpha \in\{ \beta_k,\gamma_k \mid k \in [p]\}$.
        If $\alpha =\gamma_k$, for $k \neq j$, we have that $s''_{m-2}(\gamma_k) = s''_{m-2}(\alpha) = s(\beta)= s(\gamma_j) = s''_{m-1}(\beta_j)$ which contradicts the selection of $\gamma_k$.
    Finally, if $\alpha =\beta_k$, for a $k \neq j$, we have that
    $s(\alpha) = s(\beta_k) = s''_{m-1}(\gamma_k)$. Also, $s(\alpha)= s''_{m-2}(\beta) = s''_{m-2}(\gamma_j)$. Therefore $s''_{m-1}(\gamma_k) =  s''_{m-2}(\gamma_j)$, which contradicts the selection of $\gamma_j$ and $\gamma_k$.

    This completes the proof that $s''_{1},\ldots, s''_{m'-2},s,t$ has no swaps.
    Thus, $s''_{1},\ldots, s''_{m'-2},s,t$ is an optimal solution of $\mathcal{I}$. This is also the end of Step $5$, concluding the proofs of all the components used in the proof of Theorem~\ref{thm:fpt-distance-to-clique}.
\section{Conclusion}
In this paper we continued down the path of studying the parameterized complexity of the \MAPF problem, an approach that has recently started gathering interest. Although the intractability of the problem had already been established even for restrictive cases, our hardness results are quite unexpected. Indeed, we stress a graph having bounded vertex cover or max leaf number has a very restricted structure, which leads to FPT algorithms the vast majority of times. Thus, our results in Theorems~\ref{thm:vc-hardness:mapf-vc-hard} and~\ref{thm:pancake} are of great theoretical importance. Indeed, \MAPFShort has the potential to be among the few candidates to give birth to reductions that show hardness when considering these parameters for a plethora of other problems. Moreover, these results provide further motivation towards a more heuristic approach. On the other hand, the FPT algorithm we presented has the potential to be of great practical use, depending on the specific characteristics of the topology on which the \MAPFShort problem has to be solved, which can be the object of a dedicated study.

\begin{acks}
This work was co-funded by the European Union under the project Robotics and advanced industrial production (reg.\ no.\ CZ.02.01.01/00/22\_008/0004590).
JMK was additionally supported by the Grant Agency of the Czech Technical University in Prague, grant No.\ SGS23/205/OHK3/3T/18. FF and NM acknowledge the support by the CTU Global postdoc fellowship program. NM is also partially supported by Charles Univ. projects UNCE 24/SCI/008 and PRIMUS 24/SCI/012, and by the project 25-17221S of GAČR.
DK, JMK, and MO acknowledge the support of the Czech Science Foundation Grant No.\ 22-19557S. 
TAV was partially supported by Charles Univ. project UNCE 24/SCI/008 and partially by the project 22-22997S of GA\v{C}R. 

The preliminary version of our work was presented in AAAI'25~\cite{FKKMOV25}.
\end{acks}

\bibliographystyle{ACM-Reference-Format}
\bibliography{sample-base}

\appendix

\section{Research Methods}

\subsection{Part One}

Lorem ipsum dolor sit amet, consectetur adipiscing elit. Morbi
malesuada, quam in pulvinar varius, metus nunc fermentum urna, id
sollicitudin purus odio sit amet enim. Aliquam ullamcorper eu ipsum
vel mollis. Curabitur quis dictum nisl. Phasellus vel semper risus, et
lacinia dolor. Integer ultricies commodo sem nec semper.

\subsection{Part Two}

Etiam commodo feugiat nisl pulvinar pellentesque. Etiam auctor sodales
ligula, non varius nibh pulvinar semper. Suspendisse nec lectus non
ipsum convallis congue hendrerit vitae sapien. Donec at laoreet
eros. Vivamus non purus placerat, scelerisque diam eu, cursus
ante. Etiam aliquam tortor auctor efficitur mattis.

\section{Online Resources}

Nam id fermentum dui. Suspendisse sagittis tortor a nulla mollis, in
pulvinar ex pretium. Sed interdum orci quis metus euismod, et sagittis
enim maximus. Vestibulum gravida massa ut felis suscipit
congue. Quisque mattis elit a risus ultrices commodo venenatis eget
dui. Etiam sagittis eleifend elementum.

Nam interdum magna at lectus dignissim, ac dignissim lorem
rhoncus. Maecenas eu arcu ac neque placerat aliquam. Nunc pulvinar
massa et mattis lacinia.

\section{Reproducibility Checklist for JAIR}

Select the answers that apply to your research -- one per item. 

\subsection*{All articles:}

\begin{enumerate}
    \item All claims investigated in this work are clearly stated. 
    [yes]
    \item Clear explanations are given how the work reported substantiates the claims. 
    [yes]
    \item Limitations or technical assumptions are stated clearly and explicitly. 
    [yes]
    \item Conceptual outlines and/or pseudo-code descriptions of the AI methods introduced in this work are provided, and important implementation details are discussed. 
    [yes]
    \item 
    Motivation is provided for all design choices, including algorithms, implementation choices, parameters, data sets and experimental protocols beyond metrics.
    [yes]
\end{enumerate}

\subsection*{Articles containing theoretical contributions:}
Does this paper make theoretical contributions? 
[yes] 

If yes, please complete the list below.

\begin{enumerate}
    \item All assumptions and restrictions are stated clearly and formally. 
    [yes]
    \item All novel claims are stated formally (e.g., in theorem statements). 
    [yes]
    \item Proofs of all non-trivial claims are provided in sufficient detail to permit verification by readers with a reasonable degree of expertise (e.g., that expected from a PhD candidate in the same area of AI). [yes]
    \item
    Complex formalism, such as definitions or proofs, is motivated and explained clearly.
    [yes]
    \item 
    The use of mathematical notation and formalism serves the purpose of enhancing clarity and precision; gratuitous use of mathematical formalism (i.e., use that does not enhance clarity or precision) is avoided.
    [yes]
    \item 
    Appropriate citations are given for all non-trivial theoretical tools and techniques. 
    [yes]
\end{enumerate}

\subsection*{Articles reporting on computational experiments:}
Does this paper include computational experiments? [no]

If yes, please complete the list below.
\begin{enumerate}
    \item 
    All source code required for conducting experiments is included in an online appendix 
    or will be made publicly available upon publication of the paper.
    The online appendix follows best practices for source code readability and documentation as well as for long-term accessibility.
    [yes/partially/no]
    \item The source code comes with a license that
    allows free usage for reproducibility purposes.
    [yes/partially/no]
    \item The source code comes with a license that
    allows free usage for research purposes in general.
    [yes/partially/no]
    \item 
    Raw, unaggregated data from all experiments is included in an online appendix 
    or will be made publicly available upon publication of the paper.
    The online appendix follows best practices for long-term accessibility.
    [yes/partially/no]
    \item The unaggregated data comes with a license that
    allows free usage for reproducibility purposes.
    [yes/partially/no]
    \item The unaggregated data comes with a license that
    allows free usage for research purposes in general.
    [yes/partially/no]
    \item If an algorithm depends on randomness, then the method used for generating random numbers and for setting seeds is described in a way sufficient to allow replication of results. 
    [yes/partially/no/NA]
    \item The execution environment for experiments, the computing infrastructure (hardware and software) used for running them, is described, including GPU/CPU makes and models; amount of memory (cache and RAM); make and version of operating system; names and versions of relevant software libraries and frameworks. 
    [yes/partially/no]
    \item 
    The evaluation metrics used in experiments are clearly explained and their choice is explicitly motivated. 
    [yes/partially/no]
    \item 
    The number of algorithm runs used to compute each result is reported. 
    [yes/no]
    \item 
    Reported results have not been ``cherry-picked'' by silently ignoring unsuccessful or unsatisfactory experiments. 
    [yes/partially/no]
    \item 
    Analysis of results goes beyond single-dimensional summaries of performance (e.g., average, median) to include measures of variation, confidence, or other distributional information. 
    [yes/no]
    \item 
    All (hyper-) parameter settings for 
    the algorithms/methods used in experiments have been reported, along with the rationale or method for determining them. 
    [yes/partially/no/NA]
    \item 
    The number and range of (hyper-) parameter settings explored prior to conducting final experiments have been indicated, along with the effort spent on (hyper-) parameter optimisation. 
    [yes/partially/no/NA]
    \item 
    Appropriately chosen statistical hypothesis tests are used to establish statistical significance
    in the presence of noise effects.
    [yes/partially/no/NA]
\end{enumerate}

\subsection*{Articles using data sets:}
Does this work rely on one or more data sets (possibly obtained from a benchmark generator or similar software artifact)? 
[no]

If yes, please complete the list below.
\begin{enumerate}
    \item 
    All newly introduced data sets 
    are included in an online appendix 
    or will be made publicly available upon publication of the paper.
    The online appendix follows best practices for long-term accessibility with a license
    that allows free usage for research purposes.
    [yes/partially/no/NA]
    \item The newly introduced data set comes with a license that
    allows free usage for reproducibility purposes.
    [yes/partially/no]
    \item The newly introduced data set comes with a license that
    allows free usage for research purposes in general.
    [yes/partially/no]
    \item All data sets drawn from the literature or other public sources (potentially including authors' own previously published work) are accompanied by appropriate citations.
    [yes/no/NA]
    \item All data sets drawn from the existing literature (potentially including authors’ own previously published work) are publicly available. [yes/partially/no/NA]
    \item All new data sets and data sets that are not publicly available are described in detail, including relevant statistics, the data collection process and annotation process if relevant.
    [yes/partially/no/NA]
    \item 
    All methods used for preprocessing, augmenting, batching or splitting data sets (e.g., in the context of hold-out or cross-validation)
    are described in detail. [yes/partially/no/NA]
\end{enumerate}

\end{document}